\newif\iffinal
\newif\ifarxiv
\finaltrue 
\arxivtrue 

\documentclass[a4paper,USenglish,cleveref, autoref, thm-restate, pdfa]{lipics-v2021}

\iffinal
\usepackage[colorinlistoftodos,prependcaption,textsize=tiny,disable]{todonotes}
\else
\usepackage[colorinlistoftodos,prependcaption,textsize=tiny]{todonotes}
\fi
\newcommand{\todomi}[1]{\todo[inline]{#1}}

\usepackage{comment}
\usepackage{amsmath}
\usepackage{amsthm}
\usepackage{amssymb}
\usepackage{mathabx}
\usepackage{xspace}
\usepackage{latexcolors} 
\usetikzlibrary{arrows}
\usepackage{enumitem} 
\setlist[itemize]{label=\textbullet} 

\NewDocumentCommand\set{sm}{\IfBooleanTF#1{\{{#2}\}}{\left\{{#2}\right\}}}
\NewDocumentCommand\ceil{sm}{\IfBooleanTF#1{\lceil{#2}\rceil}{\left\lceil{#2}\right\rceil}}
\NewDocumentCommand\floor{sm}{\IfBooleanTF#1{\lfloor{#2}\rfloor}{\left\lfloor{#2}\right\rfloor}}
\NewDocumentCommand\pare{sm}{\IfBooleanTF#1{({#2})}{\left({#2}\right)}}
\NewDocumentCommand\range{smm}{\IfBooleanTF#1{\set*{{#2},\dots,{#3}}}{\set{{#2},\dots,{#3}}}}
\NewDocumentCommand\card{sm}{\IfBooleanTF#1{|{#2}|}{\left|{#2}\right|}}

\def\leq{\leqslant}\def\le{\leq}
\def\geq{\geqslant}\def\ge{\geq}
\def\emptyset{\varnothing}

\def\eps{\varepsilon}

\def\FTP{\textsc{FTP}\xspace}
\def\FTRP{\textsc{FTRP}\xspace}

\def\id{\textsc{id}} 
\def\ano{\textsc{a}} 
\def\FTRPid{\textsc{FTRP}\id\xspace}
\def\FTRPa{\textsc{FTRP}\ano\xspace}

\def\OPT{\ensuremath{\gamma}\xspace}
\def\OPTftp{\overline{\OPT}} 
\def\OPTid{\overline{\OPT_\id\xspace}} 
\def\OPTa{\overline{\OPT_\ano\xspace}} 
\def\OPTidn{\OPT_\id\xspace} 
\def\OPTan{\OPT_\ano\xspace} 

\def\pos{P} 
\def\P{\mathcal{P}} 
\def\R{\mathcal{R}} 
\def\wutree{\textsf{T}\xspace} 
\def\wutime{\ensuremath{\text{wu}}\xspace} 
\def\returntime{\ensuremath{\text{co}}}
\newcommand{\completionTime}{completion time}

\def\returnarc{\ensuremath{\mathcal{E}_\textsc{r}}\xspace}
\def\wakeuparc{\ensuremath{{\mathcal{E}_\wutree}\xspace}}

\def\last{\textsf{last}}
\def\chord{\textsf{chord}}
\def\gchord{\textsf{gchord}}

\def\best{\textsf{A}}
\def\Rbest{{r^{\bigstar}}} 
\def\worst{\textsf{W}}
\def\Rworst{r_\worst}
\def\good{\textsf{G}}
\def\bad{\textsf{B}} 

\def\Bad{\mathcal B} 
\def\Lset{\mathcal{L}} 

\def\ALGFTR{\textsf{FTR-greedy}}
\def\linear{\textsc{Linear}\xspace}
\def\Tlinear{\ensuremath{t_{\textsc{l}}}} 
\def\convex{\textsc{Convex}\xspace}
\def\crown{\textsc{Crown}\xspace}
\def\cone{\textsc{Cone}\xspace}
\def\diam{\textsf{diam}}

\def\splitconestrategy{\textsc{SplitConeStrategy}\xspace}

\def\FigExCov{1}
\def\FigExId{2}
\def\FigExIdBi{3}
\def\FigRegularOptBiIdThreeA{4}
\def\FigRegularOptBiIdFourA{5}
\def\FigLargeNEmpty{6}
\def\FigLargeN{7}
\def\FigLargeNCaseOne{8}
\def\FigLargeNCaseTwo{9}
\def\FigRegularOptBiIdThree{10}
\def\FigRegularOptCovThree{11}
\def\FigRegularOptBiIdFour{12}
\def\FigRegularOptCovFour{13}
\def\FigRegularOptBiIdFive{14}
\def\FigRegularOptCovFive{15}
\def\FigRegularOptBiIdSix{16}
\def\FigRegularOptCovSix{17}
\def\FigRegularOptBiIdSeven{18}
\def\FigRegularOptCovSeven{19}
\def\FigRegularOptBiIdEight{20}
\def\FigRegularOptCovEight{21}
\def\FigWorstIdCA{22}
\def\FigWorstIdCB{23}
\def\FigWorstIdCC{24}
\def\FigWorstIdCD{25}
\def\FigWorstCovCA{26}
\def\FigWorstCovCB{27}
\def\FigWorstCovCC{28}
\def\FigWorstCovCD{29}
\def\FigWorstIdA{30}

\def\FigWorstIdD{33}
\def\FigWorstCovA{34}

\def\FigWorstCovD{37}

\newcommand{\incfig}[2]{%
    \includegraphics[page=#1, width=\textwidth*#2]{fig/geom_figs.pdf}
}

\title{Freeze-Tag with Return}

\ifarxiv
\date{\today}
\fi

\author{Nicolas Bonichon}%
{LaBRI, University of Bordeaux, CNRS, Bordeaux INP, France}%
{bonichon@labri.fr}%
{https://orcid.org/0000-0002-7012-6851}{}

\author{Cyril Gavoille}%
{LaBRI, University of Bordeaux, CNRS, Bordeaux INP, France}%
{gavoille@labri.fr}%
{https://orcid.org/0000-0003-3671-8607}{}

\author{Nicolas Hanusse}%
{LaBRI, University of Bordeaux, CNRS, Bordeaux INP, France}%
{hanusse@labri.fr}%
{https://orcid.org/0009-0008-9082-7437}{}

\author{Gabriel Le~Bouder}%
{LIP6, Sorbonne University, France}%
{gabriel.le-bouder@lip6.fr}%
{https://orcid.org/0009-0001-2479-5405}{}

\author{Taïssir Marcé}%
{LaBRI, University of Bordeaux, CNRS, Bordeaux INP, France}%
{taissir.marce@labri.fr}%
{https://orcid.org/0009-0001-6836-6311}{}

\author{Nils Morawietz}{LaBRI, Université de Bordeaux, France\\Friedrich Schiller University Jena, Institute of Computer Science, Germany}{nils.morawietz@uni-jena.de}{https://orcid.org/0000-0002-7283-4982}{}

\funding{The authors are supported by the French ANR, project ANR-22-CE48-0001 (TEMPOGRAL)} 

\Copyright{Nicolas Bonichon, Cyril Gavoille, Nicolas Hanusse, Gabriel Le~Bouder, Taïssir Marcé, and Nils Morawietz}

\authorrunning{N. Bonichon, C. Gavoille, N. Hanusse, G. Le~Bouder, T. Marcé, and N. Morawietz}

\ccsdesc{Theory of computation~Design and analysis of algorithms}

\keywords{Freeze-Tag Problem, sleeping robots, metric spaces}

\ifarxiv
\pdfoutput=1 
\hideLIPIcs  
\fi

\nolinenumbers

\iffinal
\long\def\nicolasB#1{}
\long\def\cyril#1{}
\long\def\nicolasH#1{}
\long\def\nicolasHmargin#1{}
\long\def\taissir#1{}
\long\def\taissirmargin#1{}
\long\def\gabriel#1{}
\long\def\gabrielraw#1{}
\long\def\nils#1{}
\else
\long\def\nicolasB#1{\textcolor{olive}{[Nicolas B: {#1}]}}
\long\def\cyril#1{\textcolor{blue}{[Cyril: {#1}]}}
\long\def\nicolasH#1{\textcolor{brown}{Nicolas H: #1}}
\long\def\nicolasHmargin#1{\todo[color=brown!40]{Nicolas H: #1}}
\long\def\gabrielraw#1{\textcolor{red}{[Gabriel: #1]}}
\long\def\gabriel#1{\todo[color=red!40]{Gabriel: #1}}
\long\def\taissirmargin#1{\todo[color=teal!40]{Taïssir: #1}}
\long\def\taissir#1{\textcolor{teal}{[Taïssir: {#1}]}}
\long\def\nils#1{\textcolor{purple}{[Nils: {#1}]}}
\fi

\ifarxiv
\usepackage{multicol}
\fi

\begin{document}

\maketitle

\begin{abstract}
    In the standard Freeze-Tag Problem (FTP), an initially awake robot (the \emph{source}) is in charge of waking up a swarm of sleeping robots by moving towards them, given that all the awake robots can participate in the awakening process. The goal is to minimize the makespan to wake up all robots assuming they move at unit speed. In this paper we introduce the \emph{Freeze-Tag-with-Return Problem} (FTRP) variant, where the robots must eventually return to their initial positions.

    In the Euclidean plane with $n$ sleeping robots lying on the unit disk centered at the initial position of the source, we show a non-trivial relationship between \FTP and \FTRP by proving that the difference between the optimal makespan of both problems never exceeds $1.959$, and is at least $1.732$ in the worst-case.
    We also present several upper and lower bounds on the optimal makespan.
    In particular, we show that if the sleeping robots are in convex positions, then the optimal makespan is at most $2 + 2\sqrt{2}$, which is achieved by some instances.

    From an algorithmic point-of-view, we present single-exponential algorithms for general distance functions.
    In metric spaces, these algorithms are asymptotically optimal under the ETH, which we show via an NP-hardness reduction on unweighted graphs.
\end{abstract}

\iffinal\else
    \todo[inline]{Note to all: by changing the second line in the document, you can hide all comments and todos (including this one) for a preview of the final version or even for compiling the final version}
\fi

\ifarxiv
\newpage
  \tableofcontents
\newpage
\fi

\setcounter{page}{1}
\section{Introduction}%
\label{sec:intro}

The Freeze-Tag Problem (\FTP) was introduced by~\cite{ABFMS02} and describes the following problem: given a set of $n$ \emph{sleeping} robots and \emph{one initial awake} robot, the goal is to wake up all robots as fast as possible. In the context of energy-constrained autonomous robotics, the sleep state provides the necessary downtime for energy harvesting processes.
What makes the problem interesting is that an awake robot must move to a sleeping robot to wake it up. Furthermore, once several robots are awake, they can collaborate to wake up the remaining sleeping robots. The \emph{makespan} is defined as the time required to wake up all robots.  Robots are assumed to move at a constant speed and thus the time can also be expressed as the total length of a longest trajectory followed by a sequence of awake robots to wake up a sleeping robot.

In the following, we propose a new version of this problem, in which the robots, once awakened and after participating in the wake-up process, must return to their initial positions. The objective is now to minimize the time, still referred to as the \emph{makespan}, required for all robots to be both awake and back to an initial position. We call this problem the Freeze-Tag-with-Return Problem (\FTRP). We introduce two variants: in the first, called \FTRPid, each robot has an \emph{identifier} and must return exactly to its own initial position, while in the second, called \FTRPa, the robots are \emph{anonymous} and only need to end at some arbitrary initial position, provided that each initial position contains a robot at the end.

Our problem is motivated by scenarios where robots, once awakened and their wake-up task is completed, must come back to perform specific tasks at their initial positions: inspection of infrastructures or environment (fire detection), cleaning up an area, collecting or delivering objects like batteries, or perform mechanical maintenance. When robots are interchangeable, the $\FTRPa$ variant is enough, whereas when robots have specific roles or skills, the $\FTRPid$ variant is necessary.

Note that for any configuration of initial positions, the optimal makespan for the \FTRP (both variants) is at most the makespan of the \FTP plus the largest distance $D$ between any two initial positions (to allow each robot to come back once awakened), and is at least the optimal makespan of the \FTP plus the smallest distance between any two initial positions (since the robot awakening the last sleeping robot needs to go to another initial position).
For the graph setting, the \emph{overhead due to returns} can reach $D$. Take an unweighted star of $3$ leaves containing sleeping robots and the initial awake robot located at the center. The makespan of \FTP and \FTRP for this instance are respectively $3$ and $5$.
In this work we consider the question of whether these bounds for the overhead are tight.

\subparagraph{State of the art.}

The $\FTP$ has received significant attention since its introduction~\cite{ABFMS02}. Several complexity results have been obtained, showing that the problem is NP-hard, first for metrics induced by weighted star graphs and unweighted trees~\cite{ABFMS06}, then for Euclidean metrics in dimension~3~\cite{Johnson17}, $\ell_p$ metrics in dimension 3 for $p>1$~\cite{DR17}, for the Euclidean metric in dimension 2~\cite{AAY17}, for $\ell_1$ metric in dimension~3~\cite{PdOS23}, and recently claimed for $\ell_1$ metric in dimension~2~\cite{PdOS25}. On the other hand, there is a known PTAS for \FTP in $(\mathbb{R}^d, \ell_p)$ that runs in time $O(n\log n)+ 2^{(d/\eps)^{O(d)}}$~\cite{ABFMS06}, subject to $\eps \leq \eps_d$, where $\eps_d$ depends on the dimension~$d$.

A key parameter for an FTP instance is its \emph{radius}, the maximum distance between the initial robot and any sleeping robot. Obviously, the radius is a lower bound for the optimal makespan. In the Euclidean plane, the worst-case optimal makespan for an instance of radius~1 with $n$ sleeping robots is denoted by $\OPT(n)$.
To warm up, we can observe that $\OPT(1) = 1$ and $\OPT(2) = \OPT(3) = 3$ (take sleeping robots being  diametrically opposed). We call the \emph{wake-up constant} $\OPTftp = \sup_{n \in \mathbb{N}} \OPT(n)$.
As the authors of~\cite{BCGH24} proved, $\OPTftp \ge \OPT(4) = 1+2\sqrt{2} \approx 3.83$, and they have conjectured that this value is the worst possible, that is, $\OPTftp = \OPT(4)$. Actually they have proposed a general conjecture claiming that the wake-up constant is reached with $n = 4$, in any norm. For instance, the conjectured wake-up constant for $\ell_p$ is $1 + 2^{1 + \max\set{1/p,1-1/p}}$, which holds for $p\in\set{1,2,\infty}$.
It turns out that $\OPT(n)$ is much smaller whenever $n$ is large, as they showed that $\OPT(n) = 3 + O(1/\sqrt{n}\,)$.


Polynomial-time approximation algorithms have been proposed, leading to the following upper bound on the wake-up constant $\OPTftp$: 57~\cite{ABFMS06}, 10.07~\cite{YBMK15}, 7.07~\cite{BCGH24}, 4.63~\cite{BGHO24}, 4.31~\cite{AAB25}. In the case of the plane equipped with the $\ell_1$ metric, a linear time algorithm has been proposed with a makespan bounded by~5 times the radius, which is best possible~\cite{BCGH24}.

FTP has also been studied in online and distributed contexts. In the online context, sleeping robots appear as time progresses~\cite{BW20,HNP06}. In the distributed context, robots do not have knowledge of the other robots' positions. They need to explore regions with a bounded vision and also limited communication capabilities (two awake robots can only communicate if they are within a certain distance)~\cite{GHLBM25}.

\begin{table}[htpb!]
    \centering
    \caption{Lower and upper bounds for the minimum makespan in the Euclidean plane.}
    \renewcommand{\arraystretch}{1.1}
    \begin{tabular}{ll||l|ll}
         &                                & \multicolumn{1}{c|}{$\OPT(n)$} & \multicolumn{1}{c}{$\OPTidn(n)$ \qquad (\textbf{our results})}          & \\ \hline\hline
         & Lower Bound $n=4$              & $1+2\sqrt{2} \approx 3.823$    & $2+2\sqrt{2} \approx 4.823$ \hfill [\cref{th:lowerbound-nsmall}] & \\
         & Lower Bound for any $n \geq 5$ & $3$                            & $4.672$* \hfill [\cref{th:lowerbound-nsmall}]                    & \\
        \cline{2-4}
         & UB (Convex case) for any $n$   & $1+2 \sqrt{2}$                 & $2+2 \sqrt{2} \approx 4.823$ \hfill [\cref{th:convex_UB}]          \\ &Upper Bound $n$ large & $3+O(1/\sqrt{n}\,)$ & $4.89+O(1/\sqrt{n}\,)$ \hfill [\cref{th:upperbound-nlarge}]  &\\
         & Upper Bound for any $n$        & $4.31$                         & $6.269=4.31+1.959$ \hfill [\cref{th:generalUB}]                  & \\

        \hline\hline
    \end{tabular}%
    \label{tab:bounds}\\
    {\footnotesize *We get a different bound  for \FTRPa: $\forall n\ge 5$, $\OPTan(n) \geq 1 +\frac{1}{4}(7^{1/4} + 7^{3/4})\sqrt{6} \approx 4.631$.}
\end{table}

\subparagraph{Contributions.}
Our goal is to compute and minimize the makespan for the \FTRP.
As for FTP, $\OPTid = \sup_{n \in \mathbb{N}} \OPTidn(n)$ is the \emph{wake-up and return constant} for the $\FTRPid$ and we use notation $\OPTa$ similarly for \FTRPa.
$\OPTidn(n)$ is the worst-case makespan for $n$ sleeping robots within a unit disk centered at the source in the \FTRPid model.


Our contributions can be summarized as follows:
\begin{itemize}
    \item[(1)]
        In \cref{sec:boundsL2}, we present several lower and upper bounds on the makespan of \FTRPid and \FTRPa in the Euclidean plane. In particular, we show that the overhead due to returns never exceeds $1.959 < 2=D$. We also remark that some distributions prove that the overhead must be at least $\sqrt{3} > 1.732$.
    \item[(2)]
        In \cref{sec:NPHard}, we show that computing the optimal makespan is NP-hard in the (unweighted) graph model setting for both variants \FTRPid and \FTRPa.
    \item[(3)]
        In \cref{sec:optimal}, we present three $2^{O(n)}$-time algorithms to compute the optimal makespan for the \FTP and both variants of the \FTRP for any non-negative distance function, therefore including Euclidean and metric spaces and weighted graph instances.
\end{itemize}

More precisely, in the Euclidean plane, \cref{th:lowerbound-nsmall} states that $\OPTid$ and $\OPTa$ are both at least~$2 + 2 \sqrt{2}$ and that this lower bound is achieved for $n = 4$ (see \cref{cor:upperboundSmall}). We show in \cref{th:convex_UB} that this is also an upper bound for any distribution of sleeping robots in convex positions.
Even if there are some pairs of robots at a pairwise distance $D = 2$, surprisingly, we show in \cref{th:generalUB} that the optimal makespan of any instance of \FTP and \FTRPid have a difference that never exceeds $1.959$. 

\cref{tab:bounds} summarizes the different known bounds for $\OPT(n)$, and the new ones we get for $\OPTidn(n)$. Note that each upper bound for $\FTRPid$ is also an upper bound for $\FTRPa$ and that each lower bound for $\FTRPa$ is also a lower bound for $\FTRPid$. Although we exhibit some specific distributions having different optimal makespan for $\FTRPa$ and $\FTRPid$, we are unable to prove a gap between the wake-up return constants $\OPTa$ and $\OPTid$, which deal with worst-case distributions.

In \cref{th:NPhard}, we show that even if we restrict the metric space to unweighted graphs of bounded degree, $\FTRPid$ and $\FTRPa$ are both NP-hard. Moreover, unless the ETH fails, no algorithm can solve any of these problems in time~$2^{o(n)}$.

Complementarily, in \cref{sec:optimal}, we provide algorithms returning the optimal makespan for \FTP, \FTRPid and \FTRPa in time $O(3^n n^2)$, $O(3^n n^3)$ and $O(9^n n^{3/2})$ respectively. These algorithms have been implemented, and used to verify our lower bounds for small $n$.


\section{Preliminaries}\label{sec:model}

In this section we give formal definitions of the key concepts introduced above.
An instance is defined by a multi-set of points $\P = \range{\pos_0}{\pos_n}$ within a metric space $(X, d)$, each point $\pos_i$ being associated with a robot $r_i$ initially located at $\pos_i$. Depending on the context, $\pos_r$ also stands for the initial position of robot $r$.  
We assume that $r_0$ is the initially awake robot. 

The solution of an $\FTP$ instance can be encoded by a \emph{wake-up tree} $\wutree=(\P, \wakeuparc)$, a weighted tree rooted at $\pos_0$ that spans $\P$.
The existence of an edge $(\pos_i, \pos_j) \in \wakeuparc$ means that robot $r_j$ is woken up by an awake robot previously located at node $\pos_i$.
In other words, $r_j$ is woken up by $r_i$ or by the robot~$r'$ that moved to $\pos_i$ and woke up $r_i$. 
As a result, the root has at most one child, and any node has up to two children.
Arcs of $\wakeuparc$ are called \emph{wake-up arcs} and are oriented from parent to child.
Arcs' weights are given by the metric distance $d$ between the positions of their endpoints, which can be zero since we allow co-located robots.

\begin{figure}[htbp!]
    \centering
    \def\SPC{\vspace{-4ex}}
    \begin{subfigure}[b]{0.25\linewidth}
        \incfig{\FigExCov}{1}
        \SPC\caption{}
        \label{fig:ano-tree}
    \end{subfigure}
    \hfill
    \begin{subfigure}[b]{0.25\linewidth}
        \incfig{\FigExId}{1}
        \SPC\caption{}
        \label{fig:id-tree}
    \end{subfigure}
    \hfill
    \begin{subfigure}[b]{0.25\linewidth}
        \incfig{\FigExIdBi}{1}
        \SPC\caption{}
        \label{fig:bicolor-tree}
    \end{subfigure}
    \caption{
        (a) A \FTRPa wake-up strategy (self-loops on $\pos_4$ and $\pos_5$ are omitted).
        (b) A \FTRPid strategy:  $r_0$ wakes up $\{r_1,r_2\}$ and comes back, $r_1$ wakes up $\{r_3,r_4\}$ and comes back, $r_2$ wakes up $r_5$ and comes back. The robots $r_3, r_4$ and $r_5$ do not move.
        (c) The same \FTRPid wake-up strategy is represented with a bicolored tree: the return arcs are between endpoints of maximal monochromatic sub-paths.}%
    \label{fig:example}
\end{figure}

Given a wake-up tree $\wutree$, the \emph{wake-up time} of a robot $r_i$ (resp. position $\pos_i$) is denoted $\wutime_\wutree(r_i)$ (resp. $\wutime_\wutree(\pos_i)$) and corresponds to the weighted length of the path from $\pos_0$ to $\pos_i$ in~$\wutree$.
Thus the makespan of a wake-up strategy encoded by $\wutree$ is the weighted depth of $\wutree$.
%
%
%

A solution to an instance $\P$ of $\FTRPa$ can be encoded by a tree being an extension of the wake-up tree duplicating the initial positions and adding the clones as leaves.
For convenience, we represent a solution with a pair $(\wutree, \returnarc)$ where $\wutree$ is a wake-up tree of $\P$ and $\returnarc \subseteq \P^2$ is a set of \emph{return arcs}, such that each $\pos_i \in \P$ is the destination of exactly one return arc and the graph $(\P, \wakeuparc\cup \returnarc)$ is an oriented Eulerian graph (i.e. each vertex has as many outgoing arcs as incoming arcs).
Self-loops are allowed and correspond to a robot staying still during the process (see \cref{fig:ano-tree}).
The \emph{\completionTime} of $r_j$, is denoted $\returntime(r_i)$,  is $\wutime_\wutree(r_j) + d(\pos_j, \pos_i) $ where $(\pos_j, \pos_i) \in \returnarc$.
The \emph{makespan} of a pair $(\wutree, \returnarc)$ is the maximum over all \completionTime{}s.

A solution $(\wutree, \returnarc)$ for $\FTRPa$ is also a solution for $\FTRPid$ if each robot $r_i$ returns to its initial position $\pos_i$ after waking up the other robots.
It is the case if each edge of the graph $(\P, \wakeuparc \cup \returnarc)$ belongs to a unique cycle.
The solution depicted on~\cref{fig:id-tree} satisfies this property, while the one on~\cref{fig:ano-tree} does not.

Unlike for \FTP or \FTRPa where robots are indistinguishable, defining a strategy in \id{} context requires specifying each robot's \emph{trajectory}.
For $r_i \in \R$ we define $\last_{r_i} \in \P$ as the last position visited by $r_i$ before returning to $\pos_i$.
As an example, given a $\FTRPid$ solution encoded by a pair $(\wutree, \returnarc)$, $r_i$'s trajectory is the unique cycle of $(\P, \wakeuparc \cup \returnarc)$ containing the return arc that ends at $\pos_i$.
This arc is $(\last_{r_i}, \pos_i)$ and is called \emph{$r_i$'s return arc}.

Conversely, given a trajectory for each robot (seen as a sequence of points of $\P$), it is immediate to get the corresponding wake-up tree, where each trajectory (minus the return arc) will still appear as a branch of that tree.
There is a one-to-one correspondence between robots and trajectories by introducing a bicoloring of the tree arcs (as depicted on \cref{fig:bicolor-tree}). 
Most strategies depicted in the present article are represented using bicolored trees.
The return arc of a robot $r_i$ always goes from the leaf of the monochromatic path rooted in $\pos_i$, which is $\last_{r_i}$, to $\pos_i$.

\section{Lower and Upper Bounds for Euclidean Plane}%
\label{sec:boundsL2}
\subsection{Lower bounds}

As soon as $n \ge 3$, getting  a lower bound on $\OPTidn(n)$ and $\OPTan(n)$ is not straightforward. We only give here some hints to explain how we get our bounds. Informally, the construction of the lower bounds is based on regular convex configurations for $n=3$ and $n = 4$ and on properties of the wake-up trees (number of leaves and the minimum \completionTime{} for the robots reaching the leaves).

\begin{figure}[htbp!]
    \centering
    \def\SPC{\vspace{-3ex}}
    \begin{subfigure}[b]{0.2\textwidth}
        \incfig{\FigRegularOptBiIdThreeA}{1}
        \SPC\caption{}%
        \label{fig:regular_convex_3}
    \end{subfigure}
    \begin{subfigure}[b]{0.2\textwidth}
        \incfig{\FigRegularOptBiIdFourA}{1}
        \SPC\caption{}%
        \label{fig:regular_convex_4}
    \end{subfigure}
    \begin{subfigure}[b]{0.26\textwidth}
        \incfig{\FigWorstCovA}{1}
        \SPC\caption{}%
        \label{fig:worstcaseCov}
    \end{subfigure}
    \begin{subfigure}[b]{0.26\textwidth}
        \incfig{\FigWorstIdD}{1}
        \SPC\caption{}%
        \label{fig:worstcaseId}
    \end{subfigure}
    \caption{(a) and (b): Worst distributions for $n = 3$ and $n = 4$, and their optimal wake-up trees for \FTRPid. (c): Asymptotic lower bound for \FTRPa. (d): Asymptotic lower bound for \FTRPid. In these two last configurations, there are respectively $n-3$, $n-4$ robots at $\pos_1$.}%
    \label{fig:lbmalln}
\end{figure}

\begin{restatable}[Lower bounds]{theorem}{lowerboundnsmall}%
    \label{th:lowerbound-nsmall}
    The following inequalities hold true.\vspace{-2ex}
    \begin{itemize}[noitemsep]
        \item $\OPTidn(3)\geq \OPTan(3) \geq 1+2\sqrt{3} \approx 4.46,$  \item $\OPTidn(4)\geq \OPTan(4) \geq  2+2\sqrt{2} \approx 4.824,$
        \item For any $n \geq 5$, $\OPTan(n) \geq 1 + \frac{1}{4} (7^{1/4} + 7^{3/4})\sqrt{6} > 4.631$ and $\OPTidn(n) > 4.672.$
    \end{itemize}
\end{restatable}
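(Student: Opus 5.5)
Each of the three bounds is a statement of the form ``there is an instance on which every strategy has makespan at least $c$''. So for each one I will fix a concrete configuration and then lower-bound the \FTRPa makespan of every pair $(\wutree,\returnarc)$ on it. Only \FTRPa needs treating, since an \FTRPid solution is in particular an \FTRPa solution, and hence $\OPTidn(n)\ge\OPTan(n)$.

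The tool I will use is a leaf argument. Fix a solution $(\wutree,\returnarc)$. The root $\pos_0$ has a single child, and every node has at most two children. Take any leaf $\pos_\ell$ of $\wutree$. The robot standing at $\pos_\ell$ at time $\wutime(\pos_\ell)$ must afterwards still travel to the endpoint of its return arc, and that endpoint is some initial position. Since the graph is Eulerian, that endpoint can be $\pos_\ell$ itself only if the arriving robot and $r_\ell$ are matched by a self-loop plus another return arc. So in every case some robot present at $\pos_\ell$ still has to reach a point of $\P\setminus\{\pos_\ell\}$ or $\pos_0$. This gives
\[
\text{makespan} \ge \wutime(\pos_\ell) + \min_{j \ne \ell} d(\pos_\ell,\pos_j),
\]
where the minimum includes $j=0$. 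A slightly stronger form of the same argument uses the fact that the robot coming from the parent arrives at $\pos_\ell$ and must leave again.

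\emph{$n=3$.} Place $\pos_1,\pos_2,\pos_3$ at the vertices of an equilateral triangle inscribed in the unit circle. Pairwise distances are $\sqrt3$, and the distance to $\pos_0$ is $1$.
\begin{itemize}
\item The first wake-up costs $1$.
\item Up to symmetry, the wake-up tree is either a path, or $\pos_1$ with two children $\pos_2,\pos_3$.
\item In both shapes, some leaf is woken at time at least $1+\sqrt3$.
\item The robots there must then move at least $\min(1,\sqrt3)=1$, which only gives $2+\sqrt3$. That is too weak.
\end{itemize}
So the argument has to be sharper. Exactly three robots, $r_0,r_1,r_2$, must end at the three non-source leaves' positions or at $\pos_0$, with each position receiving one robot. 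I will case-split on which robot fills $\pos_0$.
\begin{itemize}
\item If it is $r_0$ or the robot that woke a leaf, that robot travels at least $1+\sqrt3+1$.
\item Otherwise some robot covers two chords of length $\sqrt3$ before or after waking, which gives $1+2\sqrt3$.
\end{itemize}
I expect a finite enumeration of the few tree shapes (path, or a star with two children at depth $1$) together with the return matchings to yield $1+2\sqrt3$ in all cases. This check is mechanical, and the exact $O(9^n)$ algorithm of \cref{sec:optimal} can confirm it.

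\emph{$n=4$.} Use the square inscribed in the unit circle, with sides $\sqrt2$ and diagonals $2$. I will enumerate tree shapes the same way. The first wake reaches some $\pos_1$ at time $1$. The remaining three corners are then covered by at most two branches out of $\pos_1$. So some corner at distance $\ge\sqrt2$ from its waker is reached at time at least $1+2\sqrt2$, or else a diagonal is used. The final return step costs at least $1$, which gives $2+2\sqrt2$.

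\emph{$n\ge5$.} Follow \cref{fig:worstcaseCov,fig:worstcaseId}: stack $n-3$ (resp.\ $n-4$) robots at $\pos_1$ and put the remaining $2$ (resp.\ $3$) robots at optimized positions on the unit circle. Co-located robots are free to wake, so the instance behaves essentially like the small cases. The difference is that many robots are available at $\pos_1$, and one of them can absorb the return to any position.

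I will parametrize the positions of the circle robots by angles and run the same leaf and return enumeration. For each tree shape this gives a makespan lower bound as a function of the angles. I then choose the angles to maximize the minimum over all shapes. The constant $1+\tfrac14(7^{1/4}+7^{3/4})\sqrt6$ should appear as the balance point of two competing strategies in the anonymous case. The bound $4.672$ in the id case is presumably numerical.

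The main obstacle is this last step. The case analysis must be exhaustive over strategies on a continuous family of positions, and the constant must come out of equalizing the two competing strategies. I would first guess the extremal angles from the symmetry of the figure, then verify each case inequality. The remaining finite check is done exactly on the fixed instance by the algorithms of \cref{sec:optimal}.
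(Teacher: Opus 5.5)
\paragraph{Review.} Your $n=4$ argument is the paper's: tree depth $\geq 3$, two hops of length $\geq\sqrt2$, then a return of length $\geq 1$.

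For $n=3$, your case split does not close as written. If the robot ending at $P_0$ comes from a leaf, you only bound \emph{that} robot, by $2+\sqrt3<1+2\sqrt3$. What you need, and what the paper uses, is to look at both leaves at once:
\begin{itemize}
\item The path-shaped tree is already handled by your leaf bound, which gives at least $2+2\sqrt3$.
\item The only other tree is $P_0\to P_1$ with $P_1$ having two children. Both leaves are then woken at exactly $1+\sqrt3$.
\item A leaf has in-degree $2$ in the Eulerian graph of the solution and no outgoing wake-up arc. So it emits at least one return arc that is not a self-loop.
\item The two such arcs, leaving the two leaves, end at distinct positions, and at most one of these is $P_0$.
\item Hence one of them has length $\sqrt3$, and the makespan is at least $1+2\sqrt3$.
\end{itemize}
No enumeration or computer check is needed here.

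The genuine gap is the case $n\geq5$. The statement is existential, so its proof is an explicit instance together with a certificate that every strategy on it is slow. Your proposal supplies neither.
\begin{enumerate}
\item Your count is off. With $n-3$ (resp.\ $n-4$) robots stacked at $P_1$, there remain $3$ (resp.\ $4$) robots. The paper's instances therefore have $4$ (resp.\ $5$) sites, not $3$ (resp.\ $4$).
\item You give no positions. ``Guess the angles from symmetry'' and ``presumably numerical'' describe a search, not a proof. The paper fixes the angles explicitly: in closed form for the anonymous variant, numerically for the id variant. It exhibits the optimal wake-up trees attaining the constant, and certifies that all other trees are worse with the exact algorithm of Section~\ref{sec:optimal}.
\item Your last step, a ``finite check on the fixed instance'', does not cover all $n\geq5$. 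The instance changes with $n$. Extra robots stacked at $P_1$ can always stay put, so the optimum is non-increasing in $n$, and a check at one value of $n$ says nothing about larger $n$.
\end{enumerate}
For the third point you need a stabilization argument like the one the paper invokes. In its configuration at most two robots of $P_1$ ever need to move, so the optimum is the same for every $n\geq5$ and the computation at $n=5$ suffices. Your remark that a stacked robot ``can absorb the return to any position'' is not such an argument.
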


It turns out that these lower bounds are tight for $n=3$ and $n=4$ (as we show in the next section).
For $n \geq 5$, a new phenomenon appears: the lower bounds (and the underlying distributions of robots) are not the same for $\OPTidn(n)$ and $\OPTan(n)$. Moreover, the worst distribution of robots is no longer regular. For instance, to lower bound $\OPTan(n)$ (see \cref{fig:worstcaseCov}) whenever $n \geq 5$, we spread the $n$ sleeping robots on $4$ sites: $n-3$ sleeping robots are located on the unit circle with the same angle $\alpha_1 = 0$ ($\pos_1=\pos_5=\ldots=\pos_{n}$) whereas the angles of the $3$ remaining robots are $\alpha_2 = 2\arctan\pare{\frac{7^{1/4} \sqrt{6}}{3\sqrt{7}-1}}$ and
$\alpha_3 = \alpha_2+ 2\arctan\pare{\frac{7 ^{1/4} \sqrt{6}}{(\sqrt{7} - 1)(4 \sqrt{7} - 6)}}$ and $\alpha_4 =\alpha_3 + \alpha_2$.
In this configuration, regardless of the value of $n$, at most 2 robots positioned at $P_1$ need to move to wake the other robots. It is therefore sufficient to analyze the case $n=5$, which is done by evaluating a finite number of wake-up trees.
See Appendix~\ref{appendix:lowerbound-nsmall} for the full proof of \cref{th:lowerbound-nsmall}.


\subsection{Upper bounds}

\subsubsection{The convex case}\label{sec:convexCase}

In this section we focus on the case where sleeping robots are in a convex configuration within a unit disk.
We provide a dedicated algorithm, \convex and we show that it achieves a makespan of $2 + 2\sqrt{2}$ for the \FTRPid.

\begin{restatable}{theorem}{convexUB}\label{th:convex_UB}
    For any unit-disk instance\footnote{That is a distribution of robots in the unit disk of $(\mathbb{R}^2,\ell_2)$ where the initial awake robot is at the origin.} $\P$ where the sleeping robots are in convex position, the makespan of \FTRPid for $\P$ is at most $2 + 2\sqrt{2}$.
\end{restatable}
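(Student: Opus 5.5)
We will design the algorithm \convex explicitly and bound the completion time of every robot. The guiding target is the tight instance of \cref{fig:regular_convex_4}. There the source travels distance $1$ to a first robot. Afterwards robots travel along chords of length at most $\sqrt{2}$, twice in sequence. Finally they return along a chord of length at most $2$, giving $1+2\sqrt{2}+$ a return. For this to produce $2+2\sqrt{2}$, the final return must have length at most $1$ in the worst case. So the plan is a finer accounting: a woken robot returns along the same polygonal path, and we bound total travel along the convex boundary.

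\textbf{Step 1 (splitting into cones).} Let the sleeping robots be $\pos_1,\dots,\pos_n$ in convex position inside the unit disk, ordered angularly around their convex hull. We cut the plane through the origin into at most four cones of angle at most $\pi/2$. The cut is chosen so that the source first walks (distance $\le 1$) to a robot $r$ that is extreme in the angular order. From $r$, the two awake robots split the remaining robots. One walks along the convex chain clockwise and the other counterclockwise, or the convex chain is split into pieces handled recursively by doubling. Convexity matters here: a convex chain contained in a cone of angle $\theta$ of the unit disk has length at most the length of the boundary of the circular sector, and its sub-chains can be covered by a binary wake-up scheme.

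\textbf{Step 2 (binary wake-up inside a convex chain with return).} Inside a convex chain, each newly woken robot takes half of the remaining subchain and recurses. We then bound the completion time by
\[ \text{(distance from } \pos_0 \text{ to first robot)} + \text{(travel along chains)} + \text{(return)}. \]
We use two facts. The total length of a convex chain lying in a quarter of the unit disk is at most $\sqrt{2}+$ a term absorbed by the radial part. The return of a robot to its own position is at most the length of the chain it traversed, by the triangle inequality. We want a potential-type invariant of the following form: every robot woken at time $t$ at position $\pos$, responsible for the subchain $C$, finishes by time $t+\mathrm{len}(C)+d(\text{end of }C,\pos)$. Then we bound this by $2\sqrt{2}+1$ minus the source leg, using the geometry of a cone of angle $\le\pi/2$.

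\textbf{Step 3 (the source and the global count).} The source itself must return to the origin, a distance at most $1$. So it should wake only the first robot and perhaps one more, and then go back. We check that the source's own completion time is at most $2+\sqrt{2}$. For the other robots, we combine Step 2 over at most two cones per side of the first robot, giving at most $1+2\sqrt{2}+1$. When there are few robots ($n\le 4$), we handle them by direct case analysis, matching \cref{fig:regular_convex_4}.

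\textbf{Main obstacle.} The main difficulty is Step 2: the return leg can be long, up to the diameter of the remaining chain. The naive bound on it adds up to about $1.959$, as in the general case, instead of $1$. To overcome this, I will first choose the split point of each chain so that the last robot to be visited lies near the robot that started the subchain. Concretely, the robot walks to the far end of its part first and then back toward its start, waking robots on the way back. Convexity keeps all the points near this path. Then the return leg is short, and the length doubling is bounded by the quarter-circle arc estimate $\le\sqrt{2}$ per leg.
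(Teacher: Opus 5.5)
Your proposal has a genuine gap. Step~2 carries all the quantitative work, but it is never actually carried out, and the estimates it relies on are false.

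A convex chain inside a quarter of the unit disk can have length arbitrarily close to $\pi/2>\sqrt2$: put many robots on a quarter arc of the circle. A robot traversing that chain pays essentially the arc, not the chord, so ``$\le\sqrt2$ per leg'' confuses chord with arc. Likewise, ``two cones per side of the first robot'' lets a side span an angle up to $\pi$, and nothing in your accounting turns that into $2\sqrt2$.

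Your invariant $t+\mathrm{len}(C)+d(\text{end of }C,\pos)$ is only a single-robot bound. For the robot woken at time $1$ and in charge of half of a dense circle, it gives about $3+\pi$. You also propose two incompatible ways of running a chain, recursive halving and ``walk to the far end, then come back waking robots'', and analyse neither. The second one buys a short return by paying $\chord(\theta)+\theta$ on an arc of angle $\theta$. It also uses no parallelism, since the robots it wakes do nothing. For example, a robot woken at time $1+\sqrt2$ that clears a dense quarter arc this way finishes near $1+2\sqrt2+\pi/2\approx 5.4$.

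The final ``$+1$'' in your target $1+2\sqrt2+1$ is a return to the origin, which only the source makes. For every other robot the return leg is simply unaccounted for.

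Your Step~3 rule is also infeasible on the extremal instance itself. In the inscribed square, the fourth robot has depth $3$ in the wake-up tree, so it is woken at time $\ge 1+2\sqrt2$. Only a robot whose home is at distance $1$, i.e.\ the source, can wake it and still finish by $2+2\sqrt2$, so the source must visit three robots. You exempt $n\le 4$, but then you owe an argument that for $n\ge 5$ the source may stop after two robots, and none is given.

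The ingredients that make the paper's \convex algorithm work are exactly the ones your sketch lacks:
\begin{enumerate}
\item Project the robots onto the unit circle, which does not decrease any pairwise distance.
\item Take the largest empty angular gap $\beta$, between consecutive robots $\pos_k,\pos_{k+1}$, and start at the robot $\pos_1$ opposite to it. Then the two sides are arcs of length at most $\pi$ and $\pi-\beta/2$, and every gap is at most $\beta$.
\item Run the chains with \linear: each robot wakes only its successor and goes straight back along that same chord. The return then costs one gap, and the completion time on an arc of length $\alpha$ is at most $\alpha-x+2\chord(x)\le\alpha-2\pi/3+2\sqrt3$.
\item Let the source visit $\pos_1$, its neighbour $\pos_2$, and the first robot $\pos_j$ beyond the point $Q$ at angular distance $\alpha_q=1+2\sqrt2+2\pi/3-2\sqrt3$ from $\pos_1$, and then return. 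Concavity of $\chord$ gives $|\pos_1\pos_2|+|\pos_2\pos_j|\le 2\chord(\pi/2)=2\sqrt2$, so the source finishes by $2+2\sqrt2$.
\end{enumerate}
The value $\alpha_q$ is chosen precisely so that the \linear chain started at $\pos_2$ also finishes by $2+2\sqrt2$; the other two chains are strictly faster. Without the largest-gap choice and one-chord returns, nothing in your scheme keeps the return legs short while preserving parallelism.
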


Assuming that positions can be ordered cyclically with respect to their angular distances to a fixed point, we sketch Algorithm \convex (see \cref{fig:convex-algo}) used for proof:
(1) First, we show that the worst distribution is obtained whenever the sleeping robots lie on the boundary of the unit circle.
(2) We then define a simple algorithm \linear to wake up \emph{consecutive} robots according to a cyclic ordering: informally, robot $r_i$ located at position $\pos_i$ wakes up $r_{i+1}$ and returns to $\pos_i$.
(3) Given a specific geometric position $Q$, we partition the circle into $3$ arcs $(\pos_1,Q), (Q,\pos_{k})$ and $(\pos_{k+1},P_1)$ with respect to the maximum empty cone of angle $\beta$ between points $\pos_k$ and $\pos_{k+1}$.
Point $\pos_1$ is diametrically opposed to the widest empty cone of measure $\beta$.
Robot $r_0$ wakes up the extreme robots $r_1,r_2$ and $r_j$ and returns to $\pos_0$.
(4) In each arc, algorithm \linear is applied.
The analysis is based on a precise analysis of this algorithm and the position of $Q$.

As a corollary of \cref{th:convex_UB}, we can get upper bounds for $\OPTidn(n)$ and $\OPTan(n)$ for $n \in \{3,4\}$ by adding a specific algorithm for the non-convex distributions for $n = 4$ (see the proof in Appendix~\ref{appendix:UBConvex}). According to \cref{th:lowerbound-nsmall}, these upper bounds are best possible.


\begin{restatable}{corollary}{upperboundSmall}\label{cor:upperboundSmall}
    It holds that $\OPTan(3) = \OPTidn(3) = 1 + 2\sqrt{3}$ and $\OPTan(4) = \OPTidn(4) = 2 + 2\sqrt{2}$.
\end{restatable}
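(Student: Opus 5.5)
The two lower bounds are supplied by \cref{th:lowerbound-nsmall}: $\OPTidn(3)\ge\OPTan(3)\ge 1+2\sqrt3$ and $\OPTidn(4)\ge\OPTan(4)\ge 2+2\sqrt2$. Since every \FTRPid solution is also an \FTRPa solution, $\OPTan(n)\le\OPTidn(n)$. It therefore suffices to prove $\OPTidn(3)\le 1+2\sqrt3$ and $\OPTidn(4)\le 2+2\sqrt2$, that is, to exhibit for every unit-disk instance with $n\in\{3,4\}$ an \FTRPid strategy achieving the bound.

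\textbf{Case $n=3$.} Three points are always in convex position, possibly degenerate, so \cref{th:convex_UB} gives only $2+2\sqrt2$. That exceeds $1+2\sqrt3\approx4.46$, so a direct argument is needed. I would use the strategy in which $r_0$ wakes some $r_a$, then $r_a$ and $r_0$ (or $r_a$ alone) handle the remaining two robots, and everyone returns. I would enumerate the few possible bicolored wake-up trees: $r_0$ visits $\pos_a$; from $\pos_a$ one robot goes to $\pos_b$ and the other to $\pos_c$; each then returns along its monochromatic path.

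The completion times are sums like
\[
1+d(\pos_a,\pos_b)+d(\pos_b,\pos_a)\quad\text{and}\quad 1+d(\pos_a,\pos_c)+d(\pos_c,0).
\]
The aim is to show that the minimum over the choice of $a$ and over the assignment of returning robots is at most $1+2\sqrt3$. First I would reduce to points on the unit circle, arguing as in step (1) of \convex that pushing points outward radially does not decrease the optimum. Then I would parametrize by the angular gaps and check the worst case, which should be the equilateral configuration: there $d=\sqrt3$ between all pairs and the makespan is $1+2\sqrt3$.

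\textbf{Case $n=4$.} If the four sleeping robots are in convex position, \cref{th:convex_UB} applies directly. Otherwise one robot, say $\pos_4$, lies inside the triangle $\pos_1\pos_2\pos_3$. For this case I would design an explicit strategy. Robot $r_0$ goes to the interior point $\pos_4$ (distance at most $1$) and wakes $r_4$. Then $r_0$ and $r_4$ split: one of them wakes the triangle vertex pair that is closest, say $\pos_1,\pos_2$, via a short chain in which $\pos_1$'s robot wakes $\pos_2$; the other wakes $\pos_3$. All robots then return.

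The key geometric facts are the following. The interior point satisfies $d(\pos_4,\pos_i)\le$ the largest side incident to $\pos_i$. Moreover, some pair of triangle vertices is within distance $\sqrt3$. Alternatively, the triangle $\pos_1\pos_2\pos_3$ has an angle at $\pos_4$ of at least $120^\circ$, which bounds the sums of distances. I would then bound each completion time, typically $|\pos_4|+d(\pos_4,\pos_i)+d(\pos_i,\pos_j)+d(\pos_j,\pos_i)$ or similar, by $2+2\sqrt2$.

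\textbf{Main obstacle.} The hard step is the non-convex $n=4$ analysis, since the bound $2+2\sqrt2$ is tight and the interior point may lie near a vertex. I would first try a case split on the position of $\pos_4$ relative to the origin: it is either near the origin or near an edge. In each case I would choose between the interior-point-first strategy above and a strategy that ignores $\pos_4$'s interiority, waking the triangle via \linear and letting the nearest vertex robot detour to $\pos_4$. The claim would be that the better of the two is always at most $2+2\sqrt2$. If the direct bounding gets messy, I would fall back on a compactness argument combined with the exact algorithms of \cref{sec:optimal} to locate worst cases numerically, and then verify those analytically.
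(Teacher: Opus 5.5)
\textbf{The gap: the non-convex $n=4$ case.} Beyond \cref{th:lowerbound-nsmall} and \cref{th:convex_UB}, this is the only part of the corollary that needs new work. You propose strategies there but prove no bound, and the specific rule you state fails.

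Put $\pos_1,\pos_2,\pos_3$ on the unit circle at angles $90^\circ,220^\circ,320^\circ$, and put $\pos_4=(0,0.98)$, just inside the apex. The closest vertex pair is $\{\pos_2,\pos_3\}$, with $|\pos_2\pos_3|=2\sin 50^\circ\approx1.532$. Meanwhile $|\pos_4\pos_2|=|\pos_4\pos_3|\approx1.795$. After $r_0$ reaches $\pos_4$ at time $0.98$, your chain robot finishes at about $0.98+1.795+2\cdot1.532\approx5.84>2+2\sqrt2$. Letting the arriving robot serve both vertices itself is no better (about $5.3$ at best).

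Choosing the pair adaptively, or switching to a \linear-type strategy, might repair this. But that case analysis is exactly the content of the proof, and it is left undone. The facts you list do not combine into the bound. These are: an interior point is within the longest incident side of each vertex; some pair is within $\sqrt3$; some angle at $\pos_4$ is at least $120^\circ$. A numerical search for worst cases is not a proof either.

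The paper runs \convex unchanged. With four directions the widest empty sector has $\beta\ge\pi/2$, so the arc from $\pos_{k+1}$ to $\pos_1$ is at most $3\pi/4$. The non-hull point $\pos_i$ is then handled by two triangle cases.
\begin{itemize}
\item If $\pos_i$ lies in triangle $\pos_0\pos_1\pos_k$, then $r_0$ detours through it. A point inside a triangle satisfies $|\pos_1'\pos_i|+|\pos_i\pos_k'|\le|\pos_1'\pos_0|+|\pos_0\pos_k'|=2$, so $r_0$'s tour stays at most $4$.
\item If $\pos_i$ lies in triangle $\pos_0\pos_1\pos_{k+1}$, then $r_1$ wakes it and returns, while $r_i$ serves $\pos_{k+1}$. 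The law of cosines with the $3\pi/4$ angle bound keeps both completion times below $4.7$.
\end{itemize}

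\textbf{The remaining parts are fine.} The lower-bound reduction and the convex $n=4$ case match the paper. For $n=3$ your tree family is exactly the paper's; the one step you leave implicit is which pair to chain. The three rays from $\pos_0$ split the full turn into sectors summing to $2\pi$, so some pair $\{a,b\}$ has angle $\theta\le 2\pi/3$ at $\pos_0$.

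Write $\rho_a=|\pos_0\pos_a|$ and $\rho_b=|\pos_0\pos_b|$. Since $|\pos_a\pos_b|^2=\rho_a^2+\rho_b^2-2\rho_a\rho_b\cos\theta$ is convex in each radius, $|\pos_a\pos_b|\le\max\{1,2\sin(\theta/2)\}\le\sqrt3$. So $r_a$ finishes by $1+2\sqrt3$, while $r_0$'s tour $\pos_0\pos_a\pos_c\pos_0$ has length at most $4$.

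No reduction to the circle is needed. That matters, because the one you propose is not justified: pushing a point radially outward can shrink its distance to another point. For example, a point at radius $1/2$ at a small angle from $(1,0)$ moves closer to $(1,0)$ when pushed to radius $1$. So radial pushing does not dominate pairwise distances, unlike the non-radial projection $\psi$ of \cref{lem:disk projection} used in step (1) of \convex.
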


\subsubsection{Asymptotic upper bound}

We adapt the \FTP algorithm from~\cite[Theorem~2]{BCGH24} (cone+return+cone). In this version, the disk is divided into $\sqrt{n}$ cones of equal angle. First, we wake up the robots in the densest cone. This cone contains at least $\sqrt{n}$ robots. For each remaining cone, we use one awake robot. This algorithm is called \splitconestrategy. It runs in $O(n \log n)$~time and produces a makespan of at most $3 + O(1/\sqrt{n}\,)$ for the FTP problem. Unfortunately, for the FTRP problem, the makespan of this algorithm can reach $5+ O(1/\sqrt{n}\,)$ (for example, for $n$ points regularly distributed on the unit circle). Indeed, the robot that must wake up the cone opposite to its initial position might have a return arc of length up to $2$.

Our new algorithm relies on two ideas. The first idea is to use the initial robot to wake up the cone $C_b$ opposite to the initial cone and assign it a larger opening $\alpha_b\approx 0.9439$. The second idea is to use a second robot if $C_b$ contains another robot at a distance of at most $\rho_c \approx 0.9445$ from $\pos_0$. With these two ingredients, we can reduce the makespan to $4.89 + O(1/\sqrt{n}\,)$.

\begin{restatable}{theorem}{upperboundnlarge}\label{th:upperbound-nlarge}
    For any unit-disk instance with $n$ sleeping robots, we can build in time~$O(n\log{n})$ an \FTRPid solution with a makespan of $4.89 + O(1/\sqrt{n}\,)$.
\end{restatable}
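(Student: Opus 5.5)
We adapt the cone+return+cone strategy of \splitconestrategy, keeping track of every robot's trajectory so that we can bound its return arc explicitly.

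\textbf{Partition and first phase.} Partition the unit disk around $\pos_0$ into a cone $C_b$ of opening $\alpha_b\approx 0.9439$ and $k=\Theta(\sqrt n)$ further cones of angle $O(1/\sqrt n)$. Choose $C_b$ to be diametrically opposite the densest small cone $C_0$, which contains $\Omega(\sqrt n)$ robots. Robot $r_0$ travels to $C_0$, at distance at most $1$, and wakes robots there along a greedy path of length $O(1/\sqrt n)$. This produces $\Omega(\sqrt n)$ awake robots, all within $O(1/\sqrt n)$ of a common direction, at time $1+O(1/\sqrt n)$. By \cite[Theorem~2]{BCGH24}, each small cone $C$ with few robots can then be handled by one robot from $C_0$. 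That robot walks to the apex region of $C$ (about $1$), sweeps $C$ with a linear tour of length $1+O(1/\sqrt n)$, and returns to its own initial position in $C_0$. Cones with many robots are recursively split using the extra robots they contain, which adds only $O(1/\sqrt n)$.

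\textbf{Bounding completion times.} For a robot of $C_0$ serving cone $C$ at angular distance $\theta$ from $C_0$, the trajectory is: wait until $1+O(1/\sqrt n)$, walk a chord of length $2\sin(\theta/2)$ or via the centre, sweep radially, then return. I would order the sweep inside $C$ so that it ends at the point closest to $C_0$. The return arc then has length at most about $2\sin(\theta/2)$. The robots woken inside $C$ only need to travel back along consecutive short arcs of length $O(1/\sqrt n)$, as in \linear. The cones near the antipode are the dangerous ones, because there the return arc approaches $2$. That is why the wide opposite cone $C_b$ is reserved and handled differently. Robot $r_0$ goes to $C_b$ directly at the start rather than first visiting $C_0$. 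Meanwhile, some other robot, itself woken first by $r_0$, seeds $C_0$; alternatively, $r_0$ wakes one robot on its way. Once $r_0$ finishes $C_b$, it returns to $\pos_0$, a distance of at most $1$. I would then write the makespan as the maximum of two functions of $\alpha_b$: $f_{\mathrm{side}}(\alpha_b)$, the worst completion time of a robot serving a cone at angle $\pi-\alpha_b/2$ from $C_0$, and $f_{\mathrm{opp}}(\alpha_b)$, the completion time of $r_0$ sweeping a cone of opening $\alpha_b$ and returning. The value $\alpha_b$ is chosen to balance these two functions, giving $4.89$.

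\textbf{Main obstacle.} The hard step is $C_b$ when its robots lie far out, because sweeping a wide cone of radius $1$ costs roughly $\alpha_b$ plus radial travel. This is where the second ingredient enters. If $C_b$ contains a robot at distance at most $\rho_c\approx0.9445$, then $r_0$ wakes it first, and the two robots split $C_b$ into two halves and each returns. Otherwise, all robots of $C_b$ lie in the annulus $[\rho_c,1]$, so a single angular sweep of length about $\alpha_b$ suffices, followed by a return of length at most $1$. I would bound each subcase by an explicit trigonometric expression in $(\alpha_b,\rho_c)$ and optimise the two parameters numerically to reach $4.89+O(1/\sqrt n)$.

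\textbf{Running time.} Sorting the robots by angle and radius takes $O(n\log n)$. Computing the cones and the linear tours is linear after sorting, so the whole construction runs in time $O(n\log n)$.
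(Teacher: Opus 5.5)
There is a genuine gap, and it sits in the side cones, which is where the constant $4.89$ comes from. In your scheme the cones have apex $\pos_0$, and the robot of $C_0$ assigned to a cone $C$ walks there, sweeps $C$ radially itself, and returns.

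Take $C$ adjacent to $C_b$, at angle $\theta\approx\pi-\alpha_b/2$ from $C_0$. Put one robot of $C$ near $\pos_0$, one on the unit circle, and let the robots of $C_0$ lie on the unit circle. The serving robot is awake only from time $1$ on. It must make a closed tour from its initial position through both robots of $C$, of length at least $1+1+\chord(\theta)-O(1/\sqrt n)=2+2\cos(\alpha_b/4)-O(1/\sqrt n)$, whatever the order. Walking the chord first and ending at the centre gives the same $\rho_b+1+1$.

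Hence $f_{\mathrm{side}}(\alpha_b)\ge 3+2\cos(\alpha_b/4)-O(1/\sqrt n)$, which is about $4.945$ at your $\alpha_b\approx 0.944$. Pushing it below $4.89$ requires $\alpha_b\ge 1.34$, and there the crown case for $C_b$ (reached from $Q$) exceeds $5.5$. Two smaller slips in the same paragraphs:
\begin{itemize}
\item Robots inside a cone with apex $\pos_0$ are spread radially, so they do not only travel along short consecutive arcs as in \linear.
\item Phase~1 is not a greedy path of length $O(1/\sqrt n)$, since $C_0$ has radial extent $1$. It is \splitconestrategy, with makespan $1+\varphi\cdot 2\pi/\sqrt n$.
\end{itemize}

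The missing idea is to re-centre the fan. After Phase~1, all awake robots gather at the point $Q$ of the unit circle on the bisector of $C_0$. The unit disk subtends exactly $\pi$ at $Q$, so $\sqrt n-1$ narrow cones with apex $Q$ and angle $\pi/(\sqrt n-1)$ cover the disk minus $C_b$. Each has radius at most $\rho_b=\chord(\pi-\alpha_b/2)=2\cos(\alpha_b/4)$, because the farthest such points from $Q$ are the rim endpoints of $C_b$. So \splitconestrategy from $Q$ costs $\rho_b+O(1/\sqrt n)$. Every return arc stays inside the union of a narrow cone and $C_0$, whose diameter is $\rho_b+O(1/\sqrt n)$. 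The completion time is therefore $1+2\rho_b=1+4\cos(\alpha_b/4)$ instead of $3+\rho_b$.

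Your treatment of $C_b$ also does not hold together. $r_0$ cannot go to $C_b$ directly at the start, nor wake the close robot first. At time $0$ it is the only awake robot, and if it does not seed $C_0$ first, then with all robots on the unit circle $C_0$ is reached only at time about $1+\rho_b\approx 2.94$. In the paper, $r_0$ takes part in Phase~1 and leaves for $C_b$ from $Q$.

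In Case~1, the close robot $r_i$ is fetched by a robot $r_k$ of $C_0$, whose round trip from $Q$ is at most $2(1+\rho_c)$. Requiring $1+2(1+\rho_c)\le 1+2\rho_b$ is what fixes $\rho_c=\rho_b-1$, so there is only one free parameter.

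In Case~2, the binding quantity is not $r_0$'s completion time, which is at most $4$. It is the completion time of the crown robots, $1+|QK|+\alpha_b+(1+\varphi)(1-\rho_c)+2\sin(\alpha_b/2)$. Here $|QK|=\sqrt{1+\rho_c^2+2\rho_c\cos(\alpha_b/2)}\approx 1.89$ is the trip from $Q$ to the inner corner $K$ of the crown, and $2\sin(\alpha_b/2)\approx 0.91$ bounds the crown robots' own return arcs. Your accounting, a sweep of about $\alpha_b$ followed by a return of at most $1$, omits both terms. The value $\alpha_b\approx 0.944$ is exactly where this expression equals $1+4\cos(\alpha_b/4)\approx 4.889$.
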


\subsubsection{General upper bound}%
\label{sec:generalUB}

In this section, we present the general ideas of the proof of Theorem~\ref{th:generalUB}.
The detailed proofs are deferred to Appendix~\ref{appendix:GUB}.

\begin{restatable}{theorem}{generalUB}%
    \label{th:generalUB}
    Let $t$ be the makespan of a solution of the \FTP for a unit-disk instance $\P$. There exists a solution of makespan $t_R$ for the \FTRPid on $\P$ such that $t_R < t + 1.9583$.
    In particular, for every $n\in\mathbb{N}$, $\OPTidn(n) < \OPT(n) + 1.9583$.
\end{restatable}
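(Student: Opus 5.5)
We start from an arbitrary \FTP wake-up tree $\wutree$ of makespan $t$ for $\P$ and modify it locally so that every robot can return home with only a bounded delay. The naive extension is to let each robot, once its wake-up duties in $\wutree$ are over, travel straight back to its initial position. The return arc of $r_i$ then goes from $\last_{r_i}$ to $\pos_i$, which costs at most $2$. The goal is to shave this down to $<1.9583$ by exploiting slack: a return arc is long only when $\last_{r_i}$ and $\pos_i$ are nearly antipodal on the unit circle. In that case we want to show that either the path of $r_i$ was short, or some trajectories can be exchanged.

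\textbf{Step 1: wlog structure.} Fix a threshold $\delta$ (to be optimized, about $2-1.9583$). We call a robot \emph{bad} if its completion time $\wutime_\wutree(\last_{r_i}) + d(\last_{r_i},\pos_i)$ exceeds $t+2-\delta$. A bad robot must satisfy two conditions:
\begin{itemize}
\item $\wutime_\wutree(\last_{r_i}) > t-\delta$, so $\last_{r_i}$ is a late leaf;
\item $d(\last_{r_i},\pos_i) > 2-\delta$, so both points lie within distance $O(\delta)$ of the unit circle and are nearly antipodal.
\end{itemize}
Since the source is at the origin, any trajectory from $\pos_0$ to such a late leaf has length at least $1$. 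Moreover, $r_i$ itself traveled from $\pos_i$ (near the circle) to $\last_{r_i}$ (near the antipode), which takes $\approx 2$. Thus $\wutime(\pos_i)\le t-2+O(\delta)$: bad robots are woken early.

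\textbf{Step 2: repair by rerouting.} For each bad robot $r_i$, the portion of its trajectory from $\pos_i$ crosses the disk. We propose to modify the strategy in one of two ways:
\begin{itemize}
\item Let $r_i$ stop at an intermediate point (e.g. the last vertex on its path near the center, within distance about $1$ of $\pos_i$) and return home. The remaining subpath is handed to the robot woken at that vertex; in \id terms, we recolor the bicolored tree so that monochromatic paths end closer to their origins.
\item Exploit the early wake-up time of $\pos_i$ to delay nothing beyond $t+1.9583$.
\end{itemize}
In the bicolored-tree language, the rerouting swaps which child continues a monochromatic path at a branching vertex. This changes only return arcs, not wake-up times, so the makespan becomes the maximum over robots of wake-up time of the last vertex plus the return distance. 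We therefore choose, at each branching node, the coloring that minimizes the worst return. Recoloring alone leaves wake-up times unchanged, so we also allow a robot to take a small detour back, or to wait. The analysis bounds, for each robot, the quantity
\[
\min\bigl(\wutime(v)+d(v,\pos_i)\bigr)
\]
over candidate endpoints $v$ consistent with a valid recoloring.

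\textbf{Step 3: geometric optimization.} The heart is a case analysis on the geometry of a branching vertex $v$ with two subtrees, where both candidate continuations lead to far-away late leaves. This reduces to a small extremal problem:
\begin{itemize}
\item points $\pos_i$, $v$, and two leaves $a,b$ in the unit disk;
\item constraints $\wutime(a),\wutime(b)\le t$ and $\wutime(v)\ge |\pos_0 v|$.
\end{itemize}
We must show that one of $\wutime(v)+|v\pos_i|+\dots$ or the alternative assignment stays below $t+1.9583$. The constant $1.9583$ should emerge as the optimum of this finite-dimensional problem, solved by calculus or numerically with rigorous bounds.

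We expect Step 3 to be the main obstacle. The difficulty is to identify the right local modification and the right potential (slack $t-\wutime$ versus distance home) so that the worst case collapses to a few points. We would first try the two-child recoloring and parametrize by angles on the circle, since the extremal configurations should have all relevant points on the unit circle.
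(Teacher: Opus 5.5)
\textbf{There is a genuine gap: relabeling a fixed wake-up tree cannot give the bound, and the tree restructuring that is actually needed is neither formulated nor analyzed.}

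Your Steps~2--3 keep the wake-up tree $\wutree$ fixed. You only change which robot follows which branch (recoloring, i.e.\ trajectory swaps), plus waiting. The proof is then supposed to rest on a local statement: at a branching vertex, one of the two assignments finishes before $t+1.9583$. That statement is false, and no relabeling of a fixed tree can yield the theorem.

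Take $\pos_1=(1,0)$ and $\pos_2,\pos_3$ on the unit circle at angles $\pi\pm\delta$. Let $\wutree$ consist of the arcs $(\pos_0,\pos_1)$, $(\pos_1,\pos_2)$, $(\pos_1,\pos_3)$, so $t=1+2\cos(\delta/2)$. Since $\pos_1$ has two children, both $r_0$ and $r_1$ must leave it. Under every labeling, $r_1$ ends at $\pos_2$ or $\pos_3$ and completes at $t+2\cos(\delta/2)$, which exceeds $t+1.9583$ for $\delta<0.4$. Waiting only delays, and $r_1$'s path has no intermediate vertex at which it could hand over its job. So the extremal problem of Step~3 has value $2$, and $1.9583$ cannot emerge from it. Your Step~1 observations are correct but do not help here.

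\textbf{What is missing} is a change of the wake-up tree itself. The robots whose awakening forces a bad return must be detached and re-woken by another robot that is idle and already finishes nearby. In the example, $r_2$ at $\pos_2$ wakes $r_3$, and $r_1$ stays home. Your ``small detour'' hints at this. In general, though, the orphaned robots form a whole subtree, and you must prove that such a free helper exists and is close to all of them.

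The paper proceeds as follows.
\begin{itemize}
\item For each $r\in\Bad$, it cuts the subtree $\wutree_{\bad_r}$ below the closest $\eps$-good ancestor $\good_r$ of $\last_r$.
\item It shows that the cut robots lie in the $\eps$-crown, within $\eps$ of late leaves (\cref{lem:rb on crown}).
\item It takes as helper the robot at $\last_{\Rbest}$. Showing that this helper is free, and lies together with the cut robots in a crown of angle $\alpha_\eps$, is the real work. One fixes a labeling that minimizes the number of $\eps$-bad arcs and lexicographically maximizes the $\eps$-cost. Chains of trajectory swaps then show that $\last_{\Rbest}$ and every leaf of $\wutree_{\bad_r}$ are at distance $>2-\eps$ from $\worst_{\Rbest}$ (\cref{lem:best ancestor} and \cref{small cone contains Lr bis}).
\item The helper runs a crown strategy with returns of cost $2\alpha+(4+\varphi)w$ (\cref{cone wustrategy}).
\end{itemize}
The constant $1.9583=2-\eps$ comes from balancing $2-\eps$ against $2\alpha_\eps+(4+\varphi)\eps$ at $\eps=0.0417$, not from a local extremal problem.
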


We observe that the constant $1.9583$ in~\cref{th:generalUB} cannot be replaced by any constant less or equal to $\sqrt{3} \approx 1.732$. Indeed, consider the worst-case distribution for $\OPTidn(3)$, where the~$3$ robots are pairwise at distance $\sqrt{3}$ (cf. \cref{fig:regular_convex_3}). Clearly, such distribution has a solution for \FTP with makespan $t \le 1 + \sqrt{3}$, whereas the makespan of any solution for \FTRPid or \FTRPa must be at least $t_R \ge 1 + 2\sqrt{3} \ge t + \sqrt{3}$ (cf. \cref{th:lowerbound-nsmall}).

Let us sketch the proof of \cref{th:generalUB}.
We show that there is some $\eps > 0$ such that, for each distribution of robots $\P$, for which there is a wake-up tree $\wutree$ with makespan $t$ solving \FTP on $\P$, there is also a wake-up tree and return arcs for \FTRPid with makespan at most $t + 2 - \eps$.
The concrete $\eps$ for which we will show the statement will be $0.0417$, but most of our arguments apply for possibly way larger values of $\eps$.

Starting with a wake-up tree $\wutree$ for $\P$, we can consider any solution $(\wutree, \returnarc)$ for $\FTRPid$.
For each robot $r \in \R$, its \completionTime{} is $\wutime_\wutree(\last_{r}) + |\last_{r} \pos_r|$.
Let~$(r,r')$ be a pair of robots such that $\pos_{r'}$ is an ancestor of~$\pos_r$ in $\wutree$.
We say that~$(\pos_r,\pos_{r'})$ is an~\emph{$\eps$-good pair}, if $\wutime_\wutree(\pos_r) + |\pos_r\pos_{r'}| \leq t + 2 - \eps$.
Otherwise, we call~$(\pos_r,\pos_{r'})$ an~\emph{$\eps$-bad pair}.
Similarly, we call a return arc~$(\last_r,\pos_r)$ \emph{$\eps$-good} if~$(\last_r,\pos_r)$ is an $\eps$-good pair, and \emph{$\eps$-bad} otherwise.
We denote by $\Bad$ the set of all robots $r$ such that $(\last_r,\pos_r)$ is an $\eps$-bad return arc.

The overall goal of the proof is to build a wake-up tree $\wutree'$ for \FTP, not identical to $\wutree$, and a set of return arcs $\returnarc'$ such that there are no $\eps$-bad return arcs in $(\wutree',\returnarc')$.
It will be necessary to alter $\wutree$ in order to remove $\eps$-bad return arcs.
For the construction, it is relevant to start with a solution $(\wutree, \returnarc)$ that minimizes the number of $\eps$-bad return arcs among all solutions for $\FTRPid$ based on $\wutree$.
Given a set of return arcs $\returnarc$, we can define the $\eps$-cost of $\returnarc$ as the sequence $(b_k)_{k\in\mathbb N}$ where $b_k$ is the number of robots $r \in \Bad$ such that the path from $\pos_{r_0}$ to $\pos_r$ in $\wutree$ has $k$ edges. 
In the following, we consider that $\returnarc$ minimizes the number of $\eps$-bad return arcs and, among all such solutions, has the lexicographic largest $\eps$-cost.

\subparagraph{Construction of $\wutree'$.}
The first part of the construction is disconnecting branches of $\wutree$ as low as possible, so that there are no $\eps$-bad return arcs in the remaining wake-up tree.
The resulting wake-up tree does not completely span $\P$, meaning we will have to complete the tree later.
Informally, for every $r\in \Bad$, we let $r$ follow the wake-up tree as long as its return does not induce an $\eps$-bad return arc, and then come back to its initial position.
For every $r\in \Bad$, let~$\good_r$ be the closest ancestor of~$\last_r$ in~$\wutree$ for which~$(\good_r,\pos_r)$ is an $\eps$-good pair.
Such an ancestor exists since~$(\pos_r,\pos_r)$ is guaranteed to be a~$\eps$-good pair.
Moreover, let~$\bad_r$ be the child of~$\good_r$ leading to~$\last_{r}$, and let $\wutree_{\bad_r}$ be the subtree of~$\wutree$ rooted in~$\bad_r$.
We denote by~$\Lset_r$ the set of the robots with initial positions in $\wutree_{\bad_r}$. 
The first operation is disconnecting~$\Lset_r$ from $\wutree$ and adding the $\eps$-good return arc $(\good_r,\pos_r)$ to $\returnarc'$.

To state our results about the robots in~$\Lset_r$, we define a \emph{crown of angle $\alpha$ and width $w$} as the region containing all points between distance $1-w$ and $1$ from $\pos_0$ and whose angular coordinates are within a cone of angle $\alpha$ centered at $\pos_0$.
Moreover, we call a crown of width $\eps$ and of angle $2\pi$  an~\emph{$\eps$-crown}.
\begin{restatable}{lemma}{lemmaOnCrown}%
    \label{lem:rb on crown}
    Let~$(\last_r,\pos_r)$ be an $\eps$-bad return arc in~$\wutree$ and let~$x \in \Lset_r$.
    Then, $\pos_x$ is on the $\eps$-crown and has distance at most~$\eps$ to some leaf of the subtree of~$\wutree$ rooted in~$\bad_r$.
\end{restatable}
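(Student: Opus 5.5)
The whole argument should follow from one inequality: $(\bad_r,\pos_r)$ is an $\eps$-bad pair. Combining it with the triangle inequality, with the fact that every initial position lies in the unit disk (so $|\pos_0\pos_y|\le 1$ for all $y$), and with the fact that wake-up times along a branch of $\wutree$ dominate Euclidean distances, gives both claims.

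\textbf{Step 1: $(\bad_r,\pos_r)$ is $\eps$-bad.} By definition, $\good_r$ is the closest ancestor of $\last_r$ such that $(\good_r,\pos_r)$ is $\eps$-good, and $\bad_r$ is the child of $\good_r$ on the path to $\last_r$. So $\bad_r$ is either $\last_r$ itself, or a node strictly between $\good_r$ and $\last_r$. In the first case $(\bad_r,\pos_r)$ is $\eps$-bad because the return arc $(\last_r,\pos_r)$ is $\eps$-bad by hypothesis. In the second case it is $\eps$-bad by the minimality in the choice of $\good_r$. Hence
\[
\wutime_\wutree(\bad_r) + |\bad_r\pos_r| > t+2-\eps .
\]

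\textbf{Step 2: distance to a leaf.} Since $|\bad_r\pos_r|\le |\bad_r\pos_0|+|\pos_0\pos_r|\le 2$, Step 1 gives $\wutime_\wutree(\bad_r) > t-\eps$. Let $x\in\Lset_r$ and let $\ell$ be any leaf of $\wutree$ below $\pos_x$; it lies in the subtree rooted at $\bad_r$. Wake-up times increase along branches and each tree path is at least the straight-line distance, so
\[
t \ge \wutime_\wutree(\ell) \ge \wutime_\wutree(\pos_x) + |\pos_x\ell| \ge \wutime_\wutree(\bad_r) + |\pos_x\ell| > t-\eps+|\pos_x\ell|,
\]
which gives $|\pos_x\ell|<\eps$.

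\textbf{Step 3: $\pos_x$ lies on the $\eps$-crown.} We already have $|\pos_0\pos_x|\le 1$, so it remains to show $|\pos_0\pos_x| \ge 1-\eps$. The argument above gives $\wutime_\wutree(\bad_r)+|\bad_r\pos_x| \le \wutime_\wutree(\pos_x)\le t$, that is, $|\bad_r\pos_x|\le t-\wutime_\wutree(\bad_r)$. Using the triangle inequality twice and then Step 1,
\[
|\pos_0\pos_x| \ge |\pos_x\pos_r| - 1 \ge |\bad_r\pos_r| - |\bad_r\pos_x| - 1 > \bigl(t+2-\eps-\wutime_\wutree(\bad_r)\bigr) - \bigl(t-\wutime_\wutree(\bad_r)\bigr) - 1 = 1-\eps .
\]

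The only delicate point is Step 1: one has to check that the badness of the return arc transfers to $\bad_r$ through the definition of $\good_r$, including the boundary case $\bad_r=\last_r$. After that, the wake-up time $\wutime_\wutree(\bad_r)$ cancels in Step 3 and the remaining steps are just triangle inequalities.
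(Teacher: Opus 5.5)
Your proof is correct and follows the paper's argument. The paper likewise derives $\wutime_\wutree(\bad_r) > t-\eps$ from the badness of $(\bad_r,\pos_r)$, bounds $|\pos_x\pos_\ell|<\eps$ exactly as in your Step~2, and gets crown membership from the same triangle inequalities, phrased as a contradiction giving $|\pos_r\pos_x|>2-\eps$. Your Step~1 only spells out why $(\bad_r,\pos_r)$ is $\eps$-bad, which the paper takes ``by definition'', and your Step~3 makes explicit the final inference $|\pos_0\pos_x| \ge |\pos_r\pos_x| - 1 > 1-\eps$.
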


After the previously mentioned trimming operation, robots of $\Lset=\bigcup_{r\in \Bad} \Lset_r$ are disconnected from the wake-up tree. 
The second step is to find a strategy for the awakening of these robots.
\cref{lem:rb on crown} guarantees that for each $r$, the robots of~$\Lset_r$ are located within an $\eps$-crown.
For every $r\in \Bad$, the idea is to designate one robot $r' \in \R\setminus \Lset$ that was not disconnected
during the first operation, and such that $\pos_{r'}$ is within the $\eps$-crown.
This robot, after being awakened, wakes up the robots of $\Lset_r$ using a dedicated $\FTP$ algorithm.
Since the returns are confined to a small sub-crown of width $\eps$, this step will not significantly increase the makespan as stated in the following lemma.

\begin{restatable}{lemma}{lemmaCrown}%
    \label{cone wustrategy}
    Let $C$ be a crown of angle $\alpha$ and width $w$.
    Assuming that the initial awake robot is anywhere inside $C$, there exists a strategy for solving $\FTRPid$ in $C$ with a makespan $M(C) \leq 2 \alpha + w \cdot (4+\varphi)$ where $\varphi = (1+\sqrt{5})/2$ is the Golden ratio.
\end{restatable}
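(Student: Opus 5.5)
We give an explicit strategy and bound its makespan. The crown $C$ is the polar rectangle $\{(\rho,\theta) : 1-w \le \rho \le 1,\ \theta \in [\theta_0,\theta_0+\alpha]\}$. Two facts about it are used throughout. First, any point of $C$ is within radial distance $w$ of the outer arc. Second, moving along an arc of radius at most $1$ through angle $\delta$ costs at most $\delta$. Together these give, for any two points of $C$ with angular separation $\delta$, a distance of at most $\delta + w$.

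Let $r$ be the initial awake robot, at angle $\theta_r$. It splits $C$ into a left part of angle $\alpha_L$ and a right part of angle $\alpha_R$, with $\alpha_L+\alpha_R=\alpha$. We run a \linear-type sweep in each part. Order the sleeping robots of a part by angular coordinate, moving away from $\theta_r$. In the first round, $r$ wakes the first robot on the left, then the first robot on the right, and returns to $\pos_r$. After that, each newly awakened robot $r_i$ wakes the next robot $r_{i+1}$ in its part and then returns to $\pos_i$. This is exactly the \linear scheme, in which every robot makes one out-and-back trip to a neighbour. Each robot's trajectory is a short cycle, so the solution is a valid \FTRPid solution.

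For the analysis, consider the sweep in one part and write $\delta_i$ for the angular gap between consecutive robots. The last robot in the part is woken at time at most $\sum_i(\delta_i + w)$ by the fact above. As written, this accumulates a $w$ for every robot, which is far too much. The wake-up phase must therefore be routed differently: the robot woken at radius $\rho_i$ first goes radially to the outer arc, sweeps angularly along it, and dips inward only where needed. With this routing, the total radial motion along the chain telescopes to $O(w)$ rather than $n\cdot w$, so the wake-up time of the last robot in a part is at most $\alpha_{L/R} + c\,w$. Each return trip is to a neighbour, at distance at most $\delta_i+w$, so the makespan on one side is about $\alpha_{L/R}$ plus lower-order terms. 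The $2\alpha$ in the statement accounts for the initial robot having to serve both sides and come back: in the worst case $r$ travels through angle $\alpha_L$ one way and $\alpha_R$ the other, and then returns.

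This telescoping step is the main obstacle: bounding the total radial overhead by a constant multiple of $w$ independent of $n$. I would handle it with a rule that lets robots skip. Along the outer arc, an awake robot travels the arc and wakes every sleeping robot within radial distance $w$ by short spurs. A spur costs $2\times$(radial depth) if the robot returns to the arc, or it can hand the arc sweep off to the newly awakened robot. The choice between these two options should be made by a threshold. Comparing the cost of waiting for a spur return against the cost of handing off leads to a recursion of the form $x = 1 + 1/x$, whose solution is $\varphi$. I expect the $(4+\varphi)w$ term to break down into three pieces: about $2w$ for reaching and leaving the outer arc, about $2w$ for the final return trip, and $\varphi w$ from this handoff amortisation.

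Finally, I would verify that the return of every robot stays within $2\alpha + (4+\varphi)w$. A robot at angle $\theta$ that is woken last on its side returns a distance of at most the angular span it covered plus $2w$. The worst case is the initial robot, giving the claimed bound $M(C)\le 2\alpha + w(4+\varphi)$.
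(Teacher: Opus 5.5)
\textbf{There is a genuine gap: the sweep inside each side, which carries the whole bound, is never established.}

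Your outer decomposition is sound and is essentially the paper's. You split $C$ at the initial robot, which wakes the nearest sleeping robot on one side, then the nearest on the other, and returns at cost at most $2\alpha+3w$; each side is then swept.

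You correctly see that the plain \linear chain pays a radial term on every hop, but your repair does not remove it. In both of your options (spur back to the arc, or hand the sweep to the robot just woken), whichever robot continues the sweep starts at the spur's depth and must climb back. So every dip still adds about twice its depth to the critical path. With $n$ robots at depth close to $w$, one side still takes $\alpha'+\Theta(nw)$.

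An additive $O(w)$ independent of $n$ needs genuine branching, where freed robots take over disjoint parts of the remaining region. You do not specify such a rule, and the threshold rule and the recursion $x=1+1/x$ are asserted rather than derived. Your split of $(4+\varphi)w$ into pieces is read off the target rather than obtained.

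The paper does not re-derive this step. It invokes the \crown algorithm of \cite{BGHO24}, which wakes a crown of angle $\alpha'$ and width $w$, starting from a robot on one of its radial sides, in time $\alpha'+(1+\varphi)w$. It applies this to the sub-crown of each side, starting at the first robot woken there, which lies on that sub-crown's radial side.

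Your accounting of the constant is also off. The initial robot is not the worst case: its completion time is at most $2\alpha+3w$. The term $2\alpha$ comes from robots of the side served second. A crown strategy may carry a robot across its whole sub-crown, so its return is bounded only by the sub-crown's diameter, which is at most $\alpha'+w$ (not the angular span plus $2w$).

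Let $a\le\alpha_L$ and $b\le\alpha_R$ be the angular distances to the first robots on each side. A robot of the second side then has completion time at most $(a+w)+(a+b+w)+(\alpha_R-b)+(1+\varphi)w+(\alpha_R-b)+w = 2a+2\alpha_R-b+(4+\varphi)w\le 2\alpha+(4+\varphi)w$. The first side gives at most $2\alpha_L-a+(3+\varphi)w$.

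So $(4+\varphi)w$ actually splits as $2w$ for the initial robot's two hops, $(1+\varphi)w$ for the crown wake-up, and $w$ for the return. With the \crown black box and the diameter bound on returns, your skeleton does prove the lemma.
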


Given $r\in\Bad$, we first define $\Rbest$ and its initial position $\best_r$ under the assumption that~$\eps < 1$.
We call $\best_r$ the \emph{best ancestor of $r$} if $\best_r$ is the closest ancestor of $\pos_r$ in $\wutree$ such that $(\last_{\Rbest},\best_r)$ is an $\eps$-good return arc and such that the robot initially located at $\best_r$ wakes up $r$ or a robot located at an ancestor of $\pos_r$.
We now define $r'$ as being the robot initially located at $\last_\Rbest$.

The makespan of robots in $\mathcal R \setminus \Lset$ is at most $t + 2 - \varepsilon$ since by construction there are no $\varepsilon$-bad return arcs.
On the other hand, Lemmas~\ref{cone wustrategy} and~\ref{small cone contains Lr bis} guarantee that the makespan of robots in $\Lset$ is at most~$t + 2\alpha_\eps + (4 + \varphi) \eps$.
By setting~$\eps = 0.0417$, we obtain a makespan dominated by~$t + 1.9583$ in both situations, which completes the proof of \cref{th:generalUB}.

\begin{figure}[htbp!]
    \centering
    \begin{subfigure}[b]{0.35\textwidth}
        \includegraphics[width=\textwidth]{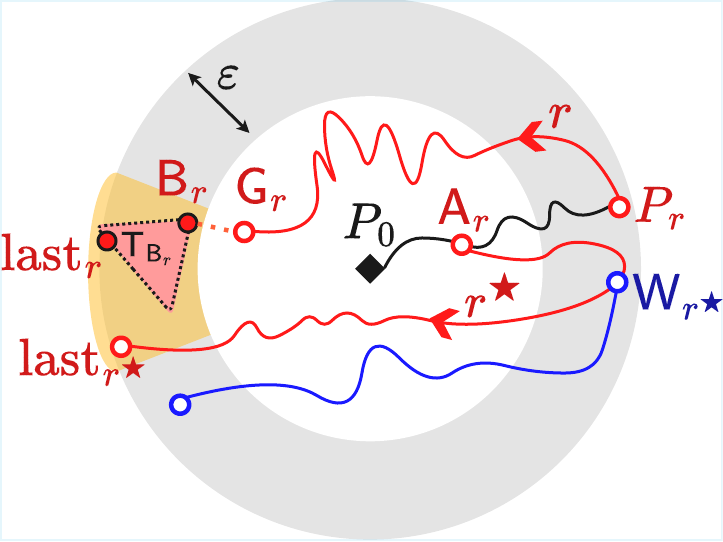}
        \caption{}%
        \label{fig:GUB-2}
    \end{subfigure}
    \hfill
    \begin{subfigure}[b]{0.35\textwidth}
        \includegraphics[width=\textwidth]{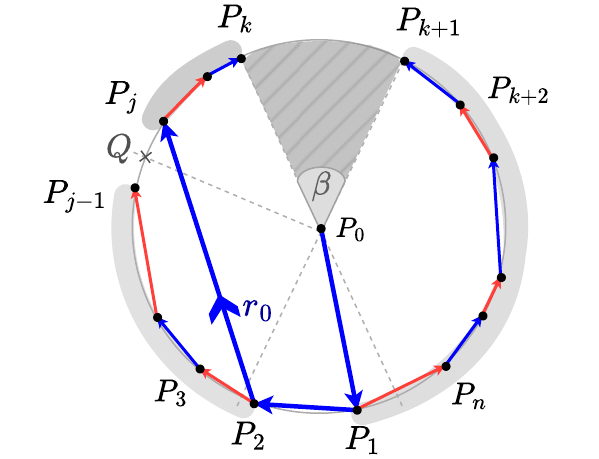}
        \caption{}%
        \label{fig:convex-algo}
    \end{subfigure}
    \caption{(a) Robots $r$, $\Rbest$ and $\Rworst$ are initially located at positions $\pos_r, \best_r$ and $\worst_{\Rbest}$.
        Robots of $\Lset_r$ within the highlighted subcrown of angle $\alpha$ are awakened from a robot located at $\last_{\Rbest}$.
        $\last_{\Rbest}$ and leaves of $\wutree_{\bad_r}$ are at distance at least $2-\epsilon$ from node $\worst_{\Rbest}$.
        (b) Bi-colored wake-up tree resulting from the Convex Algorithm.
        Highlighted arcs are woken up using \linear}
\end{figure}

\begin{restatable}{lemma}{lemSmallCone}%
    \label{small cone contains Lr bis}
    Let~$\best_r$ be the best ancestor of at least one robot $r$. 
    There is a crown with angle~$\alpha_\eps=2 \arcsin{\left(\frac{\eps}{2(1-\eps)}\right)} + 4 \arccos(1-\frac{\eps}{2})$ and width $\varepsilon$ that contains~$\last_{\Rbest}$ and the initial position of each robot in~$\bigcup_{r\in \Bad_\Rbest} \Lset_r$.
\end{restatable}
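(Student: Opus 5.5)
We first pin down where everything lies: the crown is centred on the direction of a single point, and its angle is a sum of two pieces. The term $2\arcsin(\eps/(2(1-\eps)))$ is twice the angle under which a segment of length $\le\eps$ is seen from $\pos_0$ when both endpoints are at distance $\ge 1-\eps$. The term $4\arccos(1-\eps/2)$ is twice the angular radius of the set of points of the $\eps$-crown at distance $\ge 2-\eps$ from a fixed point of the $\eps$-crown, counted once on each side.

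\emph{Anchor.} The natural anchor is $\worst_{\Rbest}$, the initial position of the robot $\Rworst$ reached along the wake-up path from $\best_r$ (see \cref{fig:GUB-2}). We will prove two facts:
\begin{enumerate}
\item[(i)] $\last_{\Rbest}$ lies in the $\eps$-crown at distance at least $2-\eps$ from $\worst_{\Rbest}$;
\item[(ii)] for every $r\in\Bad_\Rbest$, every leaf of $\wutree_{\bad_r}$ is also in the $\eps$-crown at distance at least $2-\eps$ from $\worst_{\Rbest}$.
\end{enumerate}

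\emph{Where (i) and (ii) come from.} Both should follow from the badness of the relevant return arcs. Whenever a pair $(x,y)$ is $\eps$-bad with $y$ an ancestor of $x$, we have $\wutime_\wutree(x)+|xy|>t+2-\eps$. Since $\wutime_\wutree(x)\le t$, this gives $|xy|>2-\eps$. The minimality and lexicographic choice of $\returnarc$ forbid rerouting a bad arc to a good one. We would use this exchange argument to show that the choice of $\best_r$ as the closest ancestor with a good return arc forces the bad pair to be measured against $\worst_{\Rbest}$. We expect the same mechanism to apply both to $\last_{\Rbest}$ and to $\bad_r$'s leaves, because all of these robots hang below $\best_r$ and any of them could have served as $\last$ of the robots in between.

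\emph{The angular bound from distance.} Once both sets are shown to be at distance $\ge 2-\eps$ from a common point $W=\worst_{\Rbest}$ inside the unit disk, a direct computation applies. Let $X$ lie in the unit disk with $|XW|\ge 2-\eps$. The triangle $\pos_0 W X$ has sides $\le 1$ and opposite side $\ge 2-\eps$. By the law of cosines, the angle at $\pos_0$ is at least $\pi-2\arccos(1-\eps/2)$. Hence $X$ lies in a cone of half-angle $2\arccos(1-\eps/2)$ around the direction antipodal to $W$, so all such $X$ fit in a cone of total angle $4\arccos(1-\eps/2)$. The norms $|W|,|X|\ge 1-\eps$ follow automatically, which puts the points in the $\eps$-crown.

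\emph{Extending from leaves to all of $\Lset_r$.} By \cref{lem:rb on crown}, each $x\in\Lset_r$ lies in the $\eps$-crown within distance $\eps$ of such a leaf. This widens the angle by at most $\arcsin(\eps/(2(1-\eps)))\cdot 2$ in total (one such term per side). That gives $\alpha_\eps$, and the width $\eps$ comes from the crown membership.

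\emph{Main obstacle.} The hard part is step (ii): showing that all bad robots sharing the same best ancestor are bad with respect to the same anchor $W$, rather than each with respect to its own anchor. We would attack it by contradiction. Suppose a leaf of some $\wutree_{\bad_r}$ were at distance $<2-\eps$ from $W$. Then we would reassign return arcs along the branch below $\best_r$ so that this leaf becomes $\last$ of $\Rworst$. This would either reduce the number of bad arcs or push a bad arc deeper, which increases the $\eps$-cost lexicographically. Either outcome contradicts the choice of $\returnarc$.
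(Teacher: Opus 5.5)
Your plan follows the paper's route:
\begin{enumerate}
\item anchor everything at $\worst_{\Rbest}$;
\item your facts (i) and (ii) are exactly \cref{lem:best ancestor}, which you can cite instead of re-deriving;
\item your law-of-cosines computation correctly puts all points at distance at least $2-\eps$ from $\worst_{\Rbest}$ in a cone of angle $4\arccos(1-\eps/2)$ (this is \cref{claim:close robots});
\item you then enlarge the cone via \cref{lem:rb on crown}.
\end{enumerate}
The gap is in this last enlargement. Suppose $|\pos_x\pos_\ell|\le\eps$ and both points lie at distance at least $1-\eps$ from $\pos_0$. Then the angle $\angle \pos_x\pos_0\pos_\ell$ can be as large as $2\arcsin\bigl(\frac{\eps}{2(1-\eps)}\bigr)$ (take both points at radius $1-\eps$). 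The quantity $\arcsin\bigl(\frac{\eps}{2(1-\eps)}\bigr)$ that you add on each side is only the half apex angle of that isosceles triangle. Enlarging by the correct amount on both sides yields a crown of angle $4\arccos(1-\eps/2)+4\arcsin\bigl(\frac{\eps}{2(1-\eps)}\bigr)$, which is strictly larger than $\alpha_\eps$.

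A sharper metric estimate does not rescue the stated constant either. The lens $\{X : |X\worst_{\Rbest}|\ge 2-\eps\}$ reaches its extreme directions on the unit circle, with full half-angle $2\arccos(1-\eps/2)$ when $|\pos_0\worst_{\Rbest}|=1$. From a leaf sitting there, a point at distance $\eps$ can lie a further $\arcsin\eps$ out angularly: take the tangent from $\pos_0$ to the $\eps$-disk around the leaf, which touches at radius $\sqrt{1-\eps^2}\ge 1-\eps$. Since $\arcsin\eps>\arcsin\bigl(\frac{\eps}{2(1-\eps)}\bigr)$ for $\eps<1/2$, consider two robots of $\Bad_{\Rbest}$ whose subtrees sit near opposite ends of the lens. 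They satisfy every fact you use and produce an angular spread approaching $4\arccos(1-\eps/2)+2\arcsin\eps>\alpha_\eps$.

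So obtaining $\alpha_\eps$ needs an additional structural argument, for instance that points of $\bigcup_{r\in\Bad_{\Rbest}}\Lset_r$ can overflow the lens on one side only. Your proposal does not provide this. Be aware that the paper's own proof does not provide it either. It computes the offset $2\arcsin\bigl(\frac{\eps}{2(1-\eps)}\bigr)$ correctly and then adds it only once, justified solely by the remark that it is less than half the lens angle. With the angle the argument actually proves, $\eps=0.0417$ no longer gives $1.9583$ in \cref{th:generalUB}; re-optimizing gives about $1.961$ at $\eps\approx 0.039$.

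A smaller point about your sketch of (ii): the $\eps$-cost counts bad robots by depth and is chosen lexicographically largest. So the contradiction in the exchange argument comes from a bad return arc moving up, from $\worst_{\Rbest}$ to $\best_r$, not deeper.
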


\section{NP-Hardness for Unweighted Graphs}%
\label{sec:NPHard}





\newcommand{\var}{\mathbf{var}}
\newcommand{\SAT}{\textsc{3-SAT}\xspace}
\newcommand{\LSAT}{\textsc{SAT}\xspace}
\newcommand{\Oh}{\mathcal{O}}

\newcommand{\prob}[3]{\begin{quote}  \textsc{#1}\\  \textbf{Input:} #2\\  \textbf{Question:} #3\end{quote}}

In this section, we study the two variants of \FTRP on graphs.
The space is defined by a graph $G$ with non-negative weights on the edges. Robots are located on vertices of the graph and move along edges.
A robot sleeping on a vertex $v$ is woken up as soon as an awake robot reaches~$v$.


\begin{restatable}{theorem}{NPhard}\label{th:NPhard}
    \FTRPid and \FTRPa are NP-hard on unweighted graphs of constant maximum degree and logarithmic diameter. 
    Moreover, unless the ETH fails, neither of these problems on graphs can be solved in $2^{o(n+m)}$~time, where~$n$ and~$m$ denote the number of vertices and edges, respectively.
\end{restatable}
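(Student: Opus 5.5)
We reduce from \SAT, using a linear-size reduction so that the ETH lower bound follows. By the Sparsification Lemma, under the ETH \SAT on formulas with $N$ variables and $M=O(N)$ clauses admits no $2^{o(N+M)}$-time algorithm. It therefore suffices to build, in polynomial time, an unweighted graph $G$ with $O(N+M)$ vertices and edges, constant maximum degree, and $O(\log(N+M))$ diameter, together with a threshold $T$. The graph must satisfy: $\varphi$ is satisfiable iff the \FTRPid optimum is at most $T$, iff the \FTRPa optimum is at most $T$. One construction will serve both variants. In the forward direction we produce an \FTRPid solution, which is also an \FTRPa solution. In the backward direction we argue from an \FTRPa solution, which is the weaker hypothesis.

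\textbf{Construction.} The source $r_0$ sits at the root of a binary "broadcast" tree of depth $L=\Theta(\log(N+M))$. Every internal node carries a sleeping robot, so the number of awake robots doubles at each level. This gives $\Theta(N)$ awake robots after time about $L$, and it also yields logarithmic diameter and bounded degree. Each leaf of the broadcast tree attaches to a variable gadget $x_i$. From a common hub vertex $h_i$, the gadget has two paths of fixed length $\ell$, to a ``true'' vertex $v_i^+$ and a ``false'' vertex $v_i^-$. Each clause $C_j$ has a clause vertex $c_j$ with one sleeping robot. Vertex $c_j$ is joined by a path of length $\ell'$ to $v_i^{\pm}$ whenever the corresponding literal appears in $C_j$. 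Since each variable occurs $O(1)$ times after standard preprocessing, degrees stay bounded; high-degree vertices can be replaced by small binary trees if needed. We place robots on path vertices as well, so every vertex holds a robot. The budget $T$ is then tight enough that each robot arriving at $h_i$ commits to exactly one side and must still be able to return home in time.

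\textbf{Correctness.} The forward direction is routine. Given a satisfying assignment, the robot at $h_i$ walks toward the side of its true literal and wakes the path robots along the way. Each clause robot $c_j$ is reached from one true literal in time, and every robot retraces its own path back. This is an \FTRPid schedule with makespan $T$, checked by computing each completion time.

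The backward direction is the main obstacle. We must show that any \FTRPa schedule of makespan at most $T$ encodes a consistent assignment. The difficulty is that in \FTRPa robots need not return to their own position, so a robot could ``hand over'' its return slot. The plan is to set $T$ so that the time of the return leg is forced. Every robot awakened at depth $d$ must still reach some initial position, and the wake-up times along each gadget leave slack smaller than the distance $2\ell$ needed to serve both sides $v_i^+$ and $v_i^-$. A counting argument then applies: the number of awake robots at $h_i$ by time $L$ is exactly one more than the gadget requires. This should show that for each $i$ at most one of $v_i^\pm$ is reached early enough to continue toward clause vertices. The other side must be woken by a robot that immediately turns back. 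Defining $x_i$ true iff $v_i^+$ is reached early then gives a satisfying assignment, because each $c_j$ must be reached by time $T-\text{(return distance)}$, which is only possible via an early literal vertex.

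I expect tuning $\ell,\ell'$ and the robot placements so that this exchange and counting argument is airtight to be the hardest part. In particular, the schedule must not be able to route robots between variable gadgets through the broadcast tree or through clause vertices in time. If this fails, I would first try lengthening $\ell'$ relative to $\ell$, so that crossing between gadgets through a clause costs more than the available slack.
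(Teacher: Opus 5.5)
\textbf{There is a genuine gap: your backward direction has no mechanism that forces a choice, and the gadget as described does not force one.} Your outer frame matches the paper's: a linear-size reduction from bounded-occurrence \SAT, a binary tree for bounded degree and logarithmic diameter, an \FTRPid schedule in the forward direction, and an \FTRPa schedule as the hypothesis in the backward direction.

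When $h_i$ is woken, two awake robots are already on it: the waker and the robot sitting on $h_i$. Nothing prevents further robots from the broadcast tree from following along. These robots can leave $h_i$ in opposite directions and relay, so that each one only steps back one edge. Nothing in your construction or argument rules out reaching both $v_i^+$ and $v_i^-$ at the earliest possible time. In that case every clause vertex is served whether or not the formula is satisfiable, and lengthening $\ell'$ does not change this.

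Your count, ``the number of awake robots at $h_i$ by time $L$ is exactly one more than the gadget requires'', is unsupported, because your gadget contains nothing that requires robots besides the two literal paths. More importantly, it counts the wrong quantity. Awake robots are not scarce in \FTRP; what is scarce is the set of initial positions that can still be reoccupied before the deadline.

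The paper supplies exactly this, in three ingredients.
\begin{itemize}
\item \emph{Rigid deadline.} The threshold is $d+1$, with $d$ the eccentricity of the source. So every vertex at depth $d$ must be woken at time exactly $d$, along a shortest path with no delay.
\item \emph{Padding that absorbs surplus robots.} Each hub $v_x$ sits at the end of a $6$-vertex path $a_x=v_x^1,\dots,v_x^6=v_x$. It also carries five padding paths $\alpha_x^i\cdots\omega_x^i$ that reach depth $d$. Hence all $5|X|$ vertices $\alpha_x^i$ must be woken at time $\var+1$.
\item \emph{Return-slot count.} Each robot that wakes a vertex of $L_{\var+1}$ at time $\var+1$ has left $V_{\le\var}$. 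So that many vertices of $V_{\le\var}$ must be refilled within the remaining $6$ time units by robots lying outside $V_{\le\var}$. Such robots can only reach $L_{\var-5},\dots,L_{\var}$, which contain exactly $6|X|$ vertices.
\end{itemize}
Consequently at most one literal vertex per variable is woken at time $\var+1$, and the added clauses $(x\lor\overline{x})$ force at least one.

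Your sketch needs some such calibrated combination of these three ingredients. Your forward schedule also has to specify who wakes the false-literal side and when, since that cost must fit inside the same budget that is supposed to rule out serving both sides early.
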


We reduce from a variant of \SAT where we assume that each literal occurs at most twice.
This variant is NP-complete and cannot be solved in $2^{o(|F|)}$~time, unless the ETH fails, as the reduction by Tovey~\cite{Tovey84} uses a linear space reduction. 
This property allows us to describe a construction of bounded degree and where the number of vertices is linear in the size of the formula $F$.

Let~$F$ be an instance of this \SAT variant with variable set~$X$.
For technical reasons, we add for each variable~$x\in X$ the clause~$(x\lor \overline{x})$ to the formula.
We denote by $F'$ the resulting formula, and by $C$ the clauses of $F'$.
It is clear that~$F$ is satisfiable if and only if~$F'$ is satisfiable.
Note that each variable occurs at most three times positively and at most three times negatively in~$F'$.

\subparagraph{Construction.}
We build a graph $G = (V,E)$, an instance of the FTRP on an unweighted graph.
Let~$n = |X|$.
For the sake of simplicity, we assume that~$n$ is a power of~$2$.

\begin{figure}[htbp!]
    \centering
    \begin{subfigure}[b]{0.4\textwidth}
        \centering
        \includegraphics[width=\textwidth]{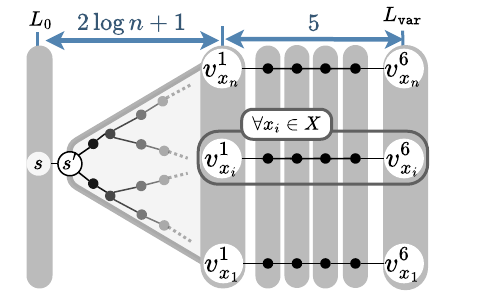}
        \caption{Set-up gadget.}%
        \label{fig:reduction-setup}
    \end{subfigure}
    \hfill
    \begin{subfigure}[b]{0.4\textwidth}
        \centering
        \includegraphics[width=\textwidth]{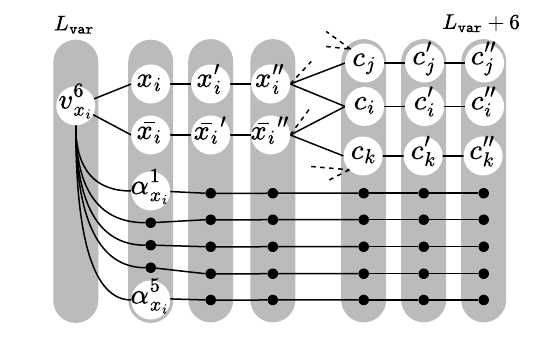}
        \caption{Variables and clauses gadget.}%
        \label{fig:reduction-vars}
    \end{subfigure}
    \caption{An example for the gadget constructions.}%
    \label{fig:NP-construction}
\end{figure}

\emph{Set-up gadget.} We start with a vertex~$s$ at which the source robot is initially located, and the graph obtained from subdividing each edge of a balanced binary tree rooted at a vertex $s'$ with $n$ leaves.
The leaves are denoted by $a_x$ for each $x \in X$.\footnote{Note that this implies that each leave~$a_x$ has distance exactly~$2\cdot \log n + 1$ from~$s$.}
We connect $s$ to $s'$.
Next, for each $x \in X$ we add a path~$(a_x =:  v^1_x,v^2_x,v^3_x,v^4_x,v^5_x,v^6_x =: v_x)$.
We get a binary tree rooted at~$s$ having $n$ leaves that has height $2\log n + 6$.
See \cref{fig:reduction-setup}.

\emph{Variable gadget.} For each variable~$x \in X$, we add two paths $(x, x', x'')$ and $(\overline{x}, \overline{x}', \overline{x}'')$, and we connect~$x$ and~$\overline{x}$ to~$v_x$.
These two paths will encode the truth assignment of $x$ depending on which of the robots on~$x$ or~$\overline{x}$ is woken up first.
For each~$i \in [1, 5]$, connect to~$v_x$ a path on $6$ vertices whose first vertex (the one adjacent to~$v_x$) is denoted by~$\alpha^i_x$ and whose last vertex is denoted by~$\omega^i_x$.
The distance from $s$ to any~$\omega^i_x$ is exactly $2\log n + 12$.

\emph{Clause gadget.} For each clause $c \in C$, add a path~$(c, c',c'')$.
For each literal $x$ ($\overline{x}$) occurring in clause $c$, there's an edge connecting $c$ and $x''$ ($\overline{x}''$).
Since each clause contains at least one literal, for $c \in C$, the distance from $s$ to $c''$ is $2\log n + 12$.
See \cref{fig:reduction-vars}.

\emph{Notations.}
We define the \emph{depth} of $G$, $d := 2 \log n + 12$, which is the eccentricity of $s$.
Let~$0 \leq i \neq j \leq d$.
We denote by $L_i$ the \emph{$i$-th layer} of the graph, formally $L_i := \{v \in V \mid |sv| = i\}$ where $|sv|$ denotes the length of a shortest path from $s$ to $v$.
We say that a vertex~$u\in L_i$ is an \emph{ancestor} of~$v\in L_j$, if the distance from~$u$ to~$v$ is~$j-i$.
Note that this is the case if there is a shortest path from~$s$ to~$v$ that contains~$u$.
Furthermore, note that~$L_\var = \{v_x\mid x\in X\}$, where~$\var := d-6 = 2\log n + 6$.
Moreover, note that the number of vertices of the graph is linear in the size of the formula~$F'$ and thus linear in the size of the formula~$F$.
\subparagraph{Proof summary.} The correctness of the reduction is shown by proving the following.

\begin{restatable}{lemma}{NPcorrectness}\label{lem:NP-correctness}
    If~$F'$ is satisfiable, then there is a $\FTRPid$ solution of makespan~$d+1$.
    Moreover, if there is a $\FTRPa$ solution of makespan~$d+1$, then~$F'$ is satisfiable.
\end{restatable}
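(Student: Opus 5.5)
The plan is to prove the two directions separately. Both rest on one rigidity observation about makespan $d+1$.

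\emph{Rigidity.} Every vertex of $L_d$ is at distance exactly $d$ from $s$, so its robot is woken at time at least $d$. Every robot, including the one that wakes this vertex, must end at some initial position. In an unweighted graph without co-located robots, the nearest other initial position is at distance at least $1$. Hence in a solution of makespan $d+1$ every vertex $\omega^i_x$ and every $c''$ is reached at time exactly $d$, along a shortest path from $s$, by a robot that then makes a single step to a neighbouring vertex.

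\emph{Forward direction (satisfying assignment $\Rightarrow$ \FTRPid solution of makespan $d+1$).} I would describe the strategy explicitly and give every robot's trajectory as a bicoloured tree.
\begin{enumerate}
\item Robots spread down the subdivided binary tree. Each branching vertex sends one robot to each child, and each subdivision vertex supplies a fresh robot. Counting along the path $a_x=v^1_x,\dots,v^6_x$ should show that at least six robots stand at $v_x$ at time $\var$.
\item Five of them enter the five paths $\alpha^i_x\cdots\omega^i_x$. They wake these paths in a relay: a robot wakes the next vertex and steps back home. The robot reaching $\omega^i_x$ at time $d$ returns to its predecessor at time $d+1$.
\item The sixth robot enters the literal path of the true literal $\ell\in\{x,\overline{x}\}$ at time $\var+1$.
\item The robots woken on $\ell,\ell',\ell''$ provide enough robots at time $\var+3$. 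Each clause $c$ satisfied by $\ell$ receives one robot at $c$ at time $\var+4$, and that robot relays to $c''$ at time $d$. Each literal occurs at most three times, so this is a bounded count.
\item The false literal path is woken by one of the spare robots from the true literal path travelling back through $v_x$. I would check by direct time accounting that all of its wake-ups and returns fit within $d+1$.
\end{enumerate}
Every robot finishes at its own start, so the solution is valid for \FTRPid.

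\emph{Converse (\FTRPa solution of makespan $d+1$ $\Rightarrow$ $F'$ satisfiable).} By the rigidity observation, for each $x$ the five vertices $\omega^i_x$ force five distinct robots to leave $v_x$ at time exactly $\var$ into the five $\alpha$-paths. The key counting lemma is: at most six robots can be present at $v_x$ at time $\var$. I would prove it by induction down the layers of the set-up gadget, bounding the number of robots that can occupy a vertex of $L_i$ at time $i$. A robot can be at a vertex of layer $i$ at time $i$ only if it travelled along a shortest path. So this number is at most the number of robots initially on the ancestors of that vertex that can be funnelled into its branch. The bounded branching and the subdivisions make this exactly tight.

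Consequently at most one robot leaves $v_x$ at time $\var$ towards $\{x,\overline{x}\}$. Set $x$ to true exactly when $x$ (rather than $\overline{x}$) is reached at time $\var+1$; if neither is, choose arbitrarily. Now fix a clause $c$. Its vertex $c''$ must be reached at time $d$, so $c$ must be reached at time $\var+4$ along a shortest path. That path passes through some literal vertex $\ell''$ reached at time $\var+3$, hence through $\ell$ at time $\var+1$. So $\ell$ is true and $c$ is satisfied. The added clauses $(x\lor\overline{x})$ guarantee that for each variable at least one literal is actually reached on time. This makes the assignment consistent, and it is not defined by a default choice.

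\emph{Main obstacle.} I expect the main obstacle to be the tight counting lemma in the converse. It has to exclude robots arriving at $v_x$ at time $\var$ via detours, and robots that return from elsewhere and are reused. Under the anonymous model, return constraints do not help, so the argument must use only wake-up times and distances. I would attempt it first with a potential argument: for a vertex $u\in L_i$, the number of robots at $u$ at time $i$ is at most one plus the number of initial positions on the shortest-path ancestors of $u$ that are not consumed by sibling subtrees. The secondary difficulty is the forward-direction timing on the false literal path. If it fails, I would adjust which spare robot is used; extra robots exist on the $\alpha$-paths.
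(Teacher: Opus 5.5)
Your forward direction matches the paper's explicit strategy in outline. The paper lets the robot from $v_x$ wake the true literal at time $\var+1$, walk back, and wake the false one at time $\var+3$. Your spare-robot variant also works, provided the spare peels off at $\ell$ at time $\var+1$: leaving from $\ell''$ at time $\var+3$ it would reach $\overline{\ell}$ only at time $d+1$. Your argument that every clause has a literal vertex woken at time $\var+1$ is exactly the paper's.

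The gap is the consistency step of the converse. Your key lemma, that at most six robots can be at $v_x$ at time $\var$, is false, even for valid solutions. In the forward strategy, let the robot from the subdivision vertex $w\in L_{2\log n}$ that wakes $a_x$ follow the path down. It stands at $v_x$ at time $\var$ together with the six robots of $v^1_x,\dots,v^6_x$, and it still gets home, arriving at $w$ at time $\var+6=d$. Your proposed potential bound cannot give six either. The $s$--$v_x$ path carries $\var+1$ robots, and branching forces only $\log n$ of them into sibling subtrees. With seven robots at $v_x$, wake-up times and distances alone allow five to enter the $\alpha$-paths and one to enter each of $x$ and $\overline{x}$ at time $\var+1$. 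So the assignment you read off can be inconsistent.

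The missing idea is precisely the one you discard: the return constraint caps the count, and the cap is global, not per $v_x$. Let $R$ be the set of vertices of $L_{\var+1}$ woken at time $\var+1$, and let $V_{\le\var}=\bigcup_{i\le\var}L_i$.
\begin{itemize}
\item The wakers of $R$ are $|R|$ distinct robots with homes in $V_{\le\var}$, standing in $L_{\var+1}$ at time $\var+1$.
\item Every robot with home outside $V_{\le\var}$ is still outside at that time. Hence at least $|R|$ vertices of $V_{\le\var}$ are empty at time $\var+1$.
\item In \FTRPa every vertex must hold a robot at time $d+1=\var+7$. So at least $|R|$ robots must go, within $6$ time units, from layers $\ge\var+1$ to pairwise distinct vertices of $V_{\le\var}$.
\item Those vertices must lie in $L_{\var-5}\cup\dots\cup L_{\var}$, which has only $6n$ vertices. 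Hence $|R|\le 6n$.
\end{itemize}
Now $R$ contains all $5n$ vertices $\alpha^i_x$. By the clauses $(x\lor\overline{x})$, it also contains at least one of $x,\overline{x}$ for every $x$. So it contains exactly one of them. With this argument replacing your per-$v_x$ lemma, the rest of your converse goes through.
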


See \cref{app:NP-hardness} of the Appendix for full proof.
As each wake-up tree for $\FTRPid$ is also a solution for $\FTRPa$, this then shows that both problems are NP-hard.
Moreover, no wake-up tree with any kind of return can have a makespan smaller than~$d+1$.

Our arguments rely on the fact that robots in layer $L_d$ have to be woken up in a time that is exactly $d$, which is their distance from the root vertex.
This strongly constrains the robot's trajectories, as almost every robot must follow a shortest path.
The graph structure also ensures that only a fixed number of robots can reach some critical layers, as the number of positions to return to within a fixed radius is bounded.

First, the set-up gadget ensures that robots of the $n$ vertices~$a_x$ are all woken up at time~$2\log n + 1$.
Each of these robots follows their respective shortest path to the variable gadget and arrives at time~$\var$ at vertex~$v_x$, with $6$ other robots woken up along the path $a_xv_x$.
We use a counting argument and the distance to vertices in layer~$L_d$ to show that five of these robots are dedicated to the five $\alpha_x^i\omega_x^i$ paths (with~$i\in [1,5]$).
The sixth robot wakes up the robot on one of the vertices~$x$ or~$\overline{x}$ at time~$\var + 1$, which models an assignment.
Conversely, we ensure that at least one of them is woken up at~$\var+1$, as~$(x\lor \overline{x})$ is a clause of~$F'$.
Based on the structure of the clause and variable gadgets, this ensures that for a truth assignment, at time~$\var+1$ the robot on at least one literal of~$c$ is woken up, as otherwise, no robot can reach~$c''$ at a time prior to~$d+1$.



\newcommand\proofHardness{
    \subparagraph{Road map.}
    The idea behind the reduction is as follows:
    The first~$2\log n + 1$ layers are designed in a way, such that all robots that are initially located on these layers can be woken up and return to their initial position at time at most~$2\log n + 2 < d+1$, while ensuring that robots of the vertices~$a_x$ with~$x\in X$ are all woken up at time~$2\log n + 1$.
    These robots then follow their respective shortest paths to the variable gadget, that is, to the vertex~$v_x$ for each~$x\in X$.
    The robots woken up on these paths follow toward the variable gadget.
    At time~$\var$, we will then have~$6$ awake robots located at each vertex~$v_x$.
    For each~$x\in X$, five of these robots wake up the robots on the vertices~$\alpha_x^i$ with~$i\in [1,5]$, whereas the sixth robot can only wake up the robot on one of the vertices~$x$ or~$\overline{x}$ at time~$\var + 1$.
    The set of literals for which the respective vertex is woken up at time~$\var + 1$ then models a truth assignment.
    Moreover, this truth assignment satisfies formula~$F'$ if and only if the overall makespan with return is at most~$d+1$.
    For the anonymous return (\FTRPa), this wake up strategy is in fact necessary:
    To obtain a makespan with return of~$d+1$, each robot initially located at a vertex of~$L_d$ needs to be woken up at time~$d$.
    This holds in particular for the robot on vertex~$c''$ for each clause~$c\in C$.
    Based on the structure of the clause and variable gadgets, this ensures that at time~$\var+1$ the robot on at least one literal of~$c$ is woken up, as otherwise, no robot can reach~$c''$ at a time prior to~$d+1$.
    The only remaining obstacle is then to ensure that for each variable only one robot on one of its literals is woken up at time~$\var+1$.
    Intuitively, this holds due to two facts:
    Firstly, for each variable~$x$, at least one of the robots located on a literal of~$x$ needs to be woken up at time~$\var+1$, as~$(x\lor \overline{x})$ is a clause of~$F'$.
    Secondly, no more than~$6n$ robots can wake up robots on vertices of layer~$L_{\var + 1}$ at time~$\var +1$, as otherwise, not all of these robots can return to pairwise distinct positions in the remaining time until the end of the makespan of~$d+1$.
    This is essentially due to the fact that the layers~$L_\ell$ with~$\ell\in [\var - 5, \var]$ together only contain~$6n$ vertices, and each vertex needs to contain exactly one robot at time~$d+1 = \var + 7$.
    By the fact that~$5n$ of these robots are necessary to wake up the robots initially located on the vertices of~$\{\alpha_x^i\mid x\in X, \in [1,5]\}$, for only one literal per variable can the initially located robot be woken up at time~$\var+1$.

    We now prove formally the correctness of the construction.
    \NPcorrectness*
    \begin{proof}
        $(\Rightarrow)$
        Assume that~$F'$ is satisfiable and fix an arbitrary satisfying assignment.

        \begin{figure}[htbp!]
            \centering
            \includegraphics[width=0.7\textwidth]{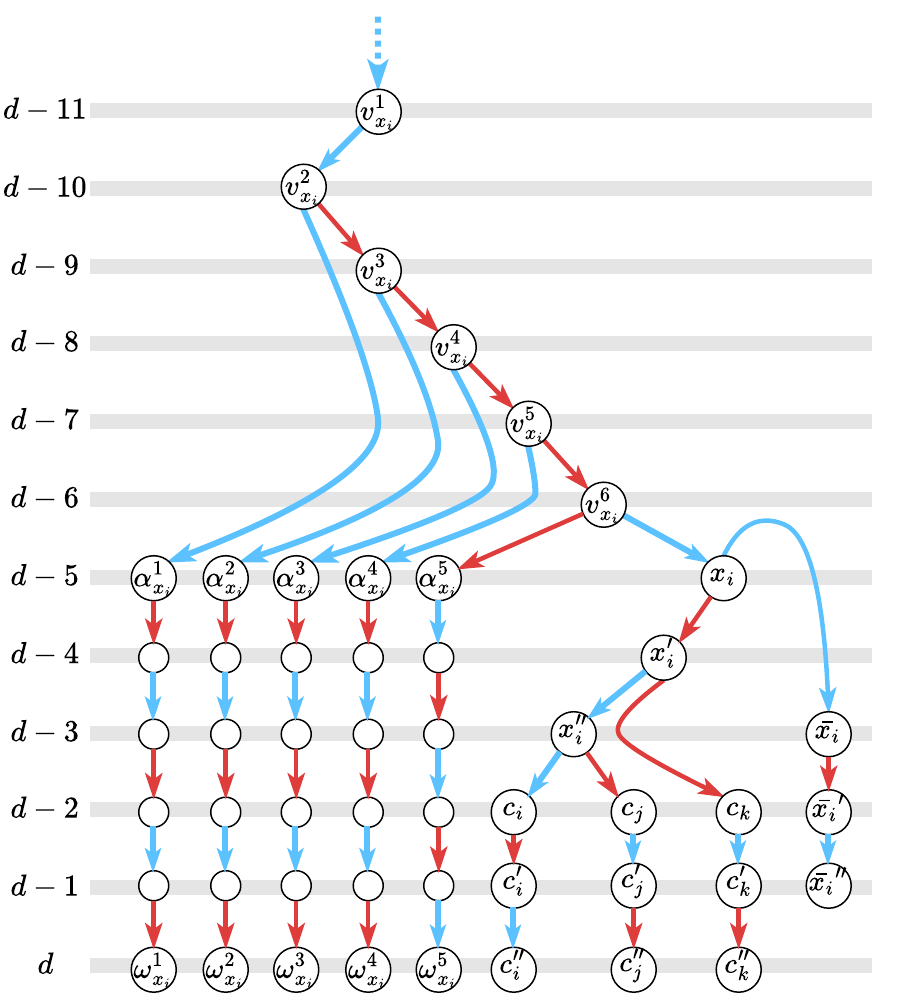}
            \caption{Bi-color wake-up tree for the variable and clause gadget.
                Curved arcs to hints the path taken in the graph}%
            \label{fig:NP_wakeup}
        \end{figure}

        Let us start by describing the strategy for waking up every robot at distance at most~$2\log n + 1$ from~$s$, within~$\log n + 1$ time steps.
        Initially, $s$ wakes up~$s'$.

        Let~$i\in [1,2\log n-1]$ be odd and let~$v\in L_i$ be a vertex.
        Then, by construction, $v$ has exactly two neighbors in layer~$L_{i+1}$.
        If the robot~$r$ located on vertex~$v$ is woken up by a robot~$r'$, then~$r$ and~$r'$ move to the two neighbors of~$v$ in~$L_{i+1}$, wake up the sleeping robots there, and return to their initial position.

        Let~$i\in [2,2\log n]$ be even and let~$v\in L_i$ be a vertex.
        Then, by construction, $v$ has exactly one neighbor in layer~$L_{i+1}$.
        If the robot~$r$ located on vertex~$v$ is woken up by a robot~$r'$, then~$r'$ immediately returns to its initial position and~$r$ moves and wakes up the sleeping robot on the unique neighbor of~$v$ in~$L_{i+1}$.
        If~$i = 2\log n$, then~$r$ returns to its initial position immediately.

        In this way, we ensure that each robot initially located on a vertex of the layers~$L_i$ with~$i\in [0,2\log n]$ returns to its initial position at the latest at time~$2\log n + 2 < d$, since each such robot traverses at most~$4$ edges each.
        Moreover, the strategy ensures that each robot initially located on a vertex of layer~$L_{2\log n + 1}$ is woken up at time~$2\log n + 1$.

        That is, for each variable~$x\in X$, the robot initially located at~$a_x$ is woken up at time~$2\log n + 1$.
        This robot then moves along the unique shortest path $(v_x^1,v_x^2,v_x^3,v_x^4,v_x^5,v_x^6 = v_x)$ to vertex~$v_x$ and wakes up all robots on the way.
        All these robots also follow the path until vertex~$v_x$.
        Hence, at time~$\var$, there are~$6$ awake robots at vertex~$v_x$.
        For each~$i\in [1,6]$, let~$r_x^i$ denote the robot initially located at~$v_x^i$.

        For each~$i\in [1,5]$, robot~$r_x^i$ now moves to vertex~$\alpha_x^i$, wakes the robot up on that vertex, and afterwards returns to its initial position.
        Hence, $r^x_i$ arrives at its initial position at time at most~$\var + 1 + 6 = d + 1$.
        Next, if the robot on~$\alpha_x^i$ is woken up, then this robot wakes up the robot on the unique neighbor of its vertex in the next layer.
        Afterwards, the robot returns to its initial position.
        This process continues for all robots on vertices of the path from~$\alpha_x^i$ to~$\omega_x^i$, which ensures that each such robot is woken up and arrives back at its initial position at the latest at time~$d+1$.

        What remains now is to wake up the robots on the variable and clause gadgets.
        This is initiated by the robot~$r^6_x$ (that is, the robot initially located at~$v_x$):
        If~$x$ is assigned TRUE, $r^6_x$ first wakes up the robot on~$x$, then moves back to~$v_x$, moves to~$\overline{x}$ and wakes up the robot there, and finally returns to its initial position at~$v_x$ at time~$\var + 4 < d+1$.
        If~$x$ is assigned FALSE, $r^6_x$ does the same while changing the order in which it visits~$x$ and~$\overline{x}$.
        Hence, for each literal~$\ell$ that is assigned TRUE, the robot initially located at~$\ell$ is woken up at time~$\var + 1$, and for each literal~$\ell'$ that is assigned FALSE, the robot initially located at~$\ell'$ is woken up at time~$\var + 3$.
        Moreover, each other robot initially located on a vertex of layer~$L_{\var + 1}$ is also woken up at time~$\var + 1$, that is, the vertices~$\{\alpha_x^i\mid x\in X, i\in [1,5]\}$.

        For each literal~$\ell$ that is assigned to TRUE, when the robot on~$\ell$ is woken up, it goes to $\ell''$.
        The robot on~$\ell'$ is woken up on the way, and also goes to~$\ell''$.
        This implies that at time~$\var + 3$, for each literal~$\ell$ that is assigned TRUE, there are three robots on~$\ell''$.
        As each literal occurs in at most three clauses, $\ell''$ has at most three neighbors in the next layer.
        Moreover, these vertices are exactly the vertices~$c\in C$ for which~$\ell$ is a literal of clause~$c$.
        The three robots on~$\ell''$ then wake up the (at most) three robots on adjacent vertices of~$\ell''$ at time~$\var + 4$ and afterwards return to their initial position at time at most~$\var + 4 + 3 = d + 1$.
        As the truth assignment is satisfying, this implies that for each clause~$c\in C$, the robot on vertex~$c$ is woken up at time~$\var + 4$.

        For the literals~$\ell$ that are assigned FALSE, the strategy is as follows:
        When the robot~$r$ on~$\ell$ is woken up at time~$\var + 3$, then~$r$ moves to~$\ell'$ and return back to~$\ell$ at time~$\var + 6 = d$. Moreover, the robot~$r'$ on~$\ell'$ that is in the process woken up at time~$\var + 4$ moves to~$\ell''$ and back to~$\ell$ to ensure that~$r'$ returns to its initial position at time~$\var + 6 = d$ and the robot on~$\ell''$ is woken up at time~$\var + 5$.

        It thus remains to describe how to wake up the vertices~$\{c',c''\mid c\in C\}$.
        This is essentially done in the same way as the vertices representing literals that are assigned FALSE:
        As for each clause~$c\in C$, the robot~$r$ on vertex~$c$ is woken up at time~$\var + 4$, there is enough time, that $r$ wakes up the robot~$r'$ on~$c'$ and return back to~$c$, and the robot~$r'$ that is woken up at time~$\var + 5$ can wake up the robot at~$c''$ and return back to~$c'$.

        Hence, given a satisfying truth assignment, we were able to describe a strategy which ensures that each robot is awake and back at its initial position at time at most~$d+1$.

        $(\Leftarrow)$
        Assume that there is a \FTRPa schedule with makespan~$d+1$.

        Consider the set~$R$ of vertices of the~$(\var+1)$-st layer for which the robots are woken up at time~$\var+1$.
        We will show that the literal-vertices contained in~$R$ model a satisfying truth assignment.
        That is, we will show that
        \begin{enumerate}
            \item\label{satisfying} for each clause~$c \in C$, there is at least one literal of~$c$ for which the corresponding vertex is in~$R$ and
            \item\label{eitheror} for each variable~$x$, $R$ contains either vertex~$x$ or vertex~$\overline{x}$.
        \end{enumerate}

        We start by showing Item~\ref{satisfying}.

        \begin{claim}\label{claim last layer early}
            For each vertex~$v \in L_d$, there is an ancestor~$u$ of~$v$ with~$u\in L_{\var + 1}$ such that the robot on~$u$ woken up at time~$\var + 1$.
        \end{claim}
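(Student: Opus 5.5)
The claim to prove: for each $v\in L_d$, some ancestor $u\in L_{\var+1}$ of $v$ has its robot woken up at time exactly $\var+1$. The argument is a timing argument: in a schedule of makespan $d+1$, the robot on $v$ must be woken at time exactly $d$, and the trajectory reaching $v$ then has no slack, so it is a shortest $s$–$v$ path traversed without delay.

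\textbf{Step 1: $v$ is woken at time $d$.} Since $|sv| = d$, the robot on $v$ cannot be woken before time $d$. Suppose it is woken at time $t_v$. Then some robot is at $v$ at time $t_v$ and must afterwards end at an initial position other than $v$. Indeed, at time $t_v$ two robots sit on $v$: the sleeper and the awakener. At the end each vertex holds exactly one robot, so one of them must leave $v$ and reach another vertex. Every other vertex is at distance at least $1$ from $v$ (the graph is unweighted and there are no co-located robots), so the makespan is at least $t_v+1$. Hence $t_v\le d$, and therefore $t_v = d$.

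\textbf{Step 2: the wake-up path is a shortest path with no waiting.} Consider the chain of wake-up events leading to $v$ (the path from the root to $v$ in the wake-up tree). Concatenating the trajectory segments gives a walk from $s$ to $v$. Time always exceeds distance travelled, and the walk has length at least $|sv| = d$. Since $v$ is woken at time exactly $d$, the walk has length exactly $d$ and has no waiting. A walk of length $d$ from $s$ to a vertex at distance $d$ is a shortest path, so it passes through layers $L_0, L_1, \dots, L_d$ in order, reaching layer $L_i$ at exactly time $i$.

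\textbf{Step 3: extract the ancestor $u$.} Let $u$ be the vertex of this shortest path lying in $L_{\var+1}$. Since $u$ lies on a shortest $s$–$v$ path, $|uv| = d-(\var+1)$, so $u$ is an ancestor of $v$ in the sense defined in the Notations. The awake robot traversing the chain reaches $u$ at time exactly $\var+1$. Because $|su| = \var+1$, the robot on $u$ cannot have been woken earlier, so it is woken at time exactly $\var+1$.

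\textbf{Main obstacle.} The only delicate point is formalizing Step 2. One must ensure that the concatenation of segments along the wake-up chain really is a walk in the graph, and that the arrival time at each intermediate vertex is at least its graph distance from $s$. Also, "woken" should mean the first arrival of an awake robot, so a later visit by the chain robot still counts as a wake-up at time $\var+1$ only because no earlier visit is possible. Both points follow from robots moving at unit speed along edges.
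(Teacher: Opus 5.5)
Your proof is correct and follows the paper's argument. Waking the robot on $v\in L_d$ at time $d$ forces the wake-up chain to be a shortest $(s,v)$-path traversed without delay, so the vertex of that path in $L_{\var+1}$ is woken at exactly time $\var+1$. Your Step 1 proves that $v$ must be woken at time exactly $d$, because one of the two robots then at $v$ has to reach another vertex. The paper's claim proof uses this fact only implicitly (it appears in the road map), so your version is slightly more complete.
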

        \begin{claimproof}
            Recall that for~$i\in [0,d]$, $L_i$ contains all vertices of distance~$i$ from the source~$s$, and that the ancestors of~$v$ in~$L_{\var+1}$ are all vertices of~$L_{\var+1}$ that are on some shortest~$(s,v)$-path.
            As robots can only move along the edges of~$E$ and the distance from~$s$ to~$v$ is~$d$, there is a shortest~$(s,v)$-path in~$G$ along which all edges are traversed in consecutive times by robots.
            In particular, this implies that for each~$i\in [0,d-1]$, there is an ancestor of~$v$ in~$L_i$ that is woken up at time~$i$.
            Hence, the claim follows as a corollary.
        \end{claimproof}

        Note that this implies that~$R$ contains all the vertices~$\alpha_x^i$ with~$x\in X$ and~$i\in [1,5]$, as~$\omega_x^i$ is a vertex of~$L_{d}$, and~$\alpha_x^i$ is the only ancestor of~$\omega_x^i$ in~$L_{\var+1}$.

        \begin{claim}
            For each clause~$c\in C$, $R$ contains at least one vertex representing a literal of~$c$, that is, a vertex of~$\{\ell\mid \ell \in c\}$.
        \end{claim}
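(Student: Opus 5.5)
The plan is to apply Claim~\ref{claim last layer early} to the vertex $c''\in L_d$. That claim gives an ancestor $u\in L_{\var+1}$ of $c''$ whose robot is woken up at time $\var+1$, that is, some $u\in R$. It then suffices to show that every ancestor of $c''$ in layer $L_{\var+1}$ is a literal vertex $\ell$ with $\ell\in c$.

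\textbf{Step 1: layer indices.} By construction we have $\ell\in L_{\var+1}$, $\ell'\in L_{\var+2}$ and $\ell''\in L_{\var+3}$. Every clause $c$ is adjacent only to the vertices $\ell''$ with $\ell\in c$, and to $c'$. Hence $c\in L_{\var+4}$, $c'\in L_{\var+5}$ and $c''\in L_{\var+6}=L_d$. This matches the remark in the construction that $|sc''|=d$, which needs each clause to contain at least one literal.

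\textbf{Step 2: shortest paths to $c''$.} Consider any shortest $(s,c'')$-path. Its vertex in $L_{d-1}$ must be $c'$, since $c'$ is the only neighbor of $c''$. Its vertex in $L_{d-2}$ must be $c$, the only other neighbor of $c'$. Its vertex in $L_{\var+3}$ must be a neighbor of $c$ in that layer, that is, $\ell''$ for some literal $\ell\in c$. Tracing back along the path $(\ell,\ell',\ell'')$, whose internal vertices have degree~2 apart from the clause edges, the vertex in $L_{\var+1}$ must be $\ell$ itself. One checks that $\ell''$ has no other neighbor in $L_{\var+2}$, since the clause vertices all lie in $L_{\var+4}$.

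\textbf{Step 3: conclusion.} The ancestors of $c''$ in $L_{\var+1}$ are therefore exactly the literal vertices $\{\ell\mid \ell\in c\}$. The claim follows, because Claim~\ref{claim last layer early} puts one of them in $R$.

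The only point needing care is the layer computation: we must confirm that no shortcut reaches $c$ or $\ell''$ through another gadget. For example, two literal vertices sharing a clause $c$ do not create a shorter route. This holds because all clause-to-literal edges join $L_{\var+3}$ to $L_{\var+4}$, and the literal paths hang off $v_x\in L_\var$ only. I expect no real obstacle here; it is a direct reading of the construction.
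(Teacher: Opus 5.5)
Your proof is correct and follows the paper's argument: you apply the previous claim to $c''\in L_d$, then trace its forced ancestors $c'$ and $c$, then the vertices $\ell''$ with $\ell\in c$, down to the literal vertices $\ell$ in layer $L_{d-5}$. The only difference is that you explicitly verify the layer indices, i.e.\ that no route through another gadget shortcuts to $c$ or $\ell''$, which the paper leaves to ``by construction''.
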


        \begin{claimproof}
            By construction, for clause~$c$, there is a vertex~$c''$ in~$L_d$.
            Thus, by~\Cref{claim last layer early}, there is an ancestor~$w\in L_{\var + 1}$ that is woken up at time~$\var+1$, that is, for which~$R$ contains~$w$.
            Furthermore, by construction, the only ancestor of~$c''$ in~$L_{d-2}$ is~$c$ and the only neighbors of~$c$ in~$L_{d-3}$ are the vertices~$\ell''$ for each literal~$\ell$ of clause~$c$.
            Moreover, for each such vertex~$\ell''$ which is an ancestor of~$c''$ in~$L_{d-3}$, vertex~$\ell$ is the only ancestor of~$\ell''$ in~$L_{d-5} = L_{\var+1}$.
            That is, the literals of~$c$ are exactly the ancestors of~$c''$ in~$L_{\var+1}$.
            Thus, the claim follows.
        \end{claimproof}

        This thus immediately implies that the literals that are contained in~$R$ satisfy all clauses, hence proving~Item~\ref{satisfying}.

        Next, we show that for each variable~$x$, $R$ contains either~$x$ or~$\overline{x}$.
        By the above, we immediately get that~$R$ contains at least one of~$x$ or~$\overline{x}$, as the formula contains the clause~$(x\lor \overline{x})$.
        It thus remains to show that~$R$ does not contain both~$x$ and~$\overline{x}$.

        \begin{claim}\label{claim:eitheror}
            For each variable~$x$, $R$ contains either~$x$ or~$\overline{x}$.
        \end{claim}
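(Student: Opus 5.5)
Having shown that $R$ contains at least one of $x$ and $\overline{x}$, I will prove that it cannot contain both by a counting argument on the robots that can be present at the moment the layer $L_{\var+1}$ must be reached.

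\textbf{Step 1: who can reach $L_{\var+1}$ at time $\var+1$.} A robot that wakes up a vertex of $L_{\var+1}$ at time $\var+1$ has travelled along a shortest path from $s$. Hence it must be standing at some $v_y$ at time $\var$. Every vertex of $L_{\var+1}$ is adjacent only to $v_y$ (for literal and $\alpha$ vertices) in the previous layer. So the number of vertices of $L_{\var+1}$ woken at time $\var+1$ is at most the number of robots located at vertices of $L_\var$ at time $\var$.

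\textbf{Step 2: every such robot must return inside the thin layers.} A robot at $v_y$ at time $\var$ that then goes to a vertex of $L_{\var+1}$ is at distance $\ge \var+1$ from $s$ at time $\var+1$. With only $d+1-(\var+1)=6$ time units left, it must end at an initial position within distance $6$ of its current vertex. The vertices at distance $\le 6$ that lie in layers below $L_{\var+1}$ are exactly those of the paths $a_y v_y^2\cdots v_y$, which form the layers $L_{\var-5},\dots,L_\var$. There are exactly $6n$ such positions, because positions farther up are at distance $\ge 7$.

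\textbf{Step 3: the other positions are already claimed.} I must exclude such robots ending at positions in layers $\ge \var+1$. Count positions and robots: the robots initially in layers $\ge\var+1$ are at least as many as those positions. I would argue via the Eulerian structure of $(\P,\wakeuparc\cup\returnarc)$ that the robots that went beyond $L_\var$ and the robots originally beyond are in bijection with the positions beyond plus the returns into $L_{\var-5..\var}$. Those positions are then saturated, so at most $6n$ robots cross from $L_\var$ to $L_{\var+1}$ at time $\var+1$. The cleaner route is a cut argument: the number of return arcs entering the set $U=\bigcup_{i\le \var}L_i$ from outside equals the number of wake-up arcs leaving $U$. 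Each of those return arcs enters $L_{\var-5..\var}$ by the time constraint, so the number of arcs leaving $U$ at time $\var+1$ is at most $6n$.

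\textbf{Step 4: conclude.} By \cref{claim last layer early}, all $5n$ vertices $\alpha_y^i$ lie in $R$, and by Item~\ref{satisfying} each variable contributes at least one literal to $R$. That already gives $6n$ vertices, so equality holds, and no variable can contribute both $x$ and $\overline{x}$.

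The main obstacle is Step 3, the flow/cut bookkeeping in the anonymous setting. A robot may wake several vertices of $L_{\var+1}$ by moving sideways, but only one can be reached at time exactly $\var+1$ per robot. I also need the fact that each position in $U$ receives exactly one return arc, so robots returning into $L_{\var-5..\var}$ are injective on positions.
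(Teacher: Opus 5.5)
Your argument is the same counting as the paper's. You show $|R|\le 6n$ because only the $6n$ vertices of $L_{\var-5},\dots,L_\var$ can be refilled by robots standing beyond layer $\var$ at time $\var+1$. The $5n$ vertices $\alpha^i_y$ plus one literal per variable then saturate this bound.

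The cut identity in Step~3, however, is not right as stated. Eulerian balance of $(\P,\wakeuparc\cup\returnarc)$ across $U$ gives $W_{\mathrm{out}}+R_{\mathrm{out}}=W_{\mathrm{in}}+R_{\mathrm{in}}$, counting wake-up and return arcs leaving and entering $U$. It does not give $W_{\mathrm{out}}=R_{\mathrm{in}}$.

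\begin{itemize}
\item Return arcs leaving $U$ are harmless. In \FTRPa a robot whose last position lies in $U$ may a priori end outside $U$, and this only weakens the identity to $W_{\mathrm{out}}\le W_{\mathrm{in}}+R_{\mathrm{in}}$.
\item Wake-up arcs entering $U$ must be excluded explicitly, otherwise $W_{\mathrm{out}}\le 6n$ does not follow. They are indeed excluded. Every vertex of $U$ is the unique ancestor in its layer of some $\omega^i_y$. By the argument of \cref{claim last layer early}, it is therefore woken at time equal to its depth, so its wake-up parent lies in $U$.
\end{itemize}

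You should also state two facts explicitly. First, $|R|\le W_{\mathrm{out}}$: each $q\in R$ is woken at time equal to its depth, so its parent has depth at most $\var$, and distinct $q$ give distinct arcs. Second, the tail of a return arc entering $U$ has depth at least $\var+1$, so it is reached no earlier than $\var+1$; its head therefore lies in layers $\ge\var-5$. With these lines you get $|R|\le W_{\mathrm{out}}\le R_{\mathrm{in}}\le 6n$, and Step~4 concludes.

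The paper avoids the arc bookkeeping by taking a snapshot at time $\var+1$. Robots initially outside $V_{\le\var}$ are still at their initial positions then, so at most $|V_{\le\var}|-|R|$ robots lie in $V_{\le\var}$. Hence at least $|R|$ robots must enter it within $6$ time units, necessarily onto distinct vertices of the $6n$-vertex band. Late wake-ups inside $V_{\le\var}$ cause no trouble in that formulation, because such robots are simply counted as already inside. In either formulation, your Step~2 concern about where the particular robots that left $U$ end up, and the ``bijection'' idea, are unnecessary; only the number of refills matters.
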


        \begin{claimproof}
            Assume towards a contradiction that there is a variable~$x$, such that~$R$ contains both~$x$ and~$\overline{x}$.
            Then, $R$ contains at least~$6\cdot |X| + 1$ vertices, as (i)~$R$ contains all~$5\cdot |X|$ vertices of the~$(\var + 1)$-st layer that do not represent literals (that is, the vertices~$\{\alpha_x^i\mid x\in X, 1 \leq i \leq 5\}$) and (ii)~$R$ contains for each variable~$y\neq x$ at least one of~$y$ and~$\overline{y}$, and (iii)~$R$ contains both~$x$ and~$\overline{x}$.
            Note that all robots that woke up vertices of~$R$ are originally from a layer~$L_i$ with~$0\leq i \leq \var$.
            Hence, there are at least $|R| \geq 6\cdot |X| + 1$~many vertices of~$V_{\leq \var} := \bigcup_{i=0}^\var L_i$ that do not contain a robot at the end of time~$\var+1$.
            As we have a \FTRPa schedule, at least $|R|$~robots that are not on vertices of~$V_{\leq\var}$ at the end of time~$\var+1$ have to end on pairwise distinct vertices of~$V_{\leq\var}$ by the end of time~$d +1= \var + 7$.
            By construction, each of the layers from~$L_{\var - 5}$ to~$L_\var$ contain exactly $|X|$~vertices each.
            As pairwise distinct vertices need to be reached, at least one robot that is on a vertex of~$V\setminus V_{\leq \var}$ at the end of time~$\var +1$ thus has to end on at least one vertex of~$L_i$ with~$i < \var - 5$ by the end of time~$d +1= \var + 7$.
            This is not possible, as the distance between the respective layers is larger than~$6$.
            This contradicts the assumption that~$R$ contains both~$x$ and~$\overline{x}$.
        \end{claimproof}

        Thus also Item~\ref{eitheror} holds.
        Consequently, the literals of~$R$ model a satisfying truth assignment, which implies that the formula is satisfiable.

        By the fact that the number of vertices of the graph is linear in the size of the formula, this implies that neither of our problems can be solved in time $2^{o(n)}$, as the considered version of \SAT cannot be solved in $2^{o(|F|)}$ time, unless the ETH fails, thanks to the linear reduction of~\cite{Tovey84}.
    \end{proof}
}



\begin{restatable}{corollary}{NPcorollary}\label{cor:NP-corollary}
    \FTRPid and \FTRPa are NP-hard in metric spaces and cannot be solved in $2^{o(n)}$~time, unless the ETH fails.
\end{restatable}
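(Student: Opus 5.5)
The corollary follows from \cref{th:NPhard} by passing from the unweighted graph $G$ to its shortest-path metric. Given the graph $G=(V,E)$ built in the reduction, with one robot per vertex and the source at $s$, we take the metric space $(V,d_G)$, where $d_G$ is the hop distance. This is a metric on a finite set of $|V|$ points. In the FTRP instance on this metric, the points are the vertices, each carries one robot, and the source sits at $s$. The instance can be computed in polynomial time, for example by running BFS from every vertex.

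The key step is to show that the two instances have the same optimal makespan, for both \FTRPid and \FTRPa. A solution in the metric setting is a pair $(\wutree,\returnarc)$ whose arcs are weighted by $d_G$. Each arc $(\pos_i,\pos_j)$ can be realized in $G$ by a shortest path of length $d_G(\pos_i,\pos_j)$, traversed at unit speed. Along the way the robot may pass over intermediate vertices whose robots are still asleep. Waking them early is harmless, since they may simply wait in place until their scheduled wake-up time, and we can keep the original tree. Completion times therefore do not increase, and the identities and returns are unchanged.

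Conversely, any graph schedule induces an order of first visits, and hence a wake-up tree and return arcs. Each arc is weighted by $d_G$, which is at most the length of the walk the robot actually traversed, so the makespan does not increase. The two optima therefore coincide. \cref{lem:NP-correctness} then transfers verbatim: $F'$ is satisfiable if and only if the metric instance has makespan $d+1$, in either variant.

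For the ETH part, the reduction from the bounded-occurrence \SAT variant produces $|V| = O(|F|)$ points, which is linear in the formula size, as noted in the construction. A $2^{o(n)}$-time algorithm for either problem on metric instances, with $n$ the number of points, would therefore decide this \SAT variant in $2^{o(|F|)}$ time, contradicting the ETH via Tovey's linear reduction~\cite{Tovey84}.

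The only delicate step is the equivalence of the optimal makespans, specifically the treatment of robots encountered, and possibly woken early, along a shortest path. The argument is to let those robots stay idle until the time the tree prescribes. Early wake-up never forces a robot to move, and in the graph model a robot's makespan is just its own completion time, so this does not affect feasibility.
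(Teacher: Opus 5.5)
Your proof is correct and follows the paper's route: you replace the unweighted graph by its shortest-path metric, show that the optimal makespans coincide in both directions (graph moves are metric moves of the same cost, and a direct metric jump is realized by a shortest path in $G$ that arrives at the same time), and transfer NP-hardness and the ETH bound because the number of points equals the number of vertices. Your remark that robots passed over on a shortest path may be woken early and simply wait is a detail the paper leaves implicit; it does not change the argument.
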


\newcommand\NPmetricproof{
    \begin{proof}
        We reduce from the respective problem itself on undirected and unweighted graphs.
        Due to the previous theorem, this is NP-hard.

        Let~$(G=(V,E),t)$ be a respective instance.
        We define a distance function~$\omega \colon V\times V\to \mathbb{N}$ as follows:
        For each vertex~$v\in V$, we set~$\omega(v,v) := 0$.
        For each edge~$\{u,v\}\in E$, we set~$\omega(u,v) := \omega(v,u) := 1$.
        For each non-edge~$\{u,v\}\in \binom{V}{2}\setminus E$, we set~$\omega(u,v) := \omega(v,u) := |uv|$, where~$|uv|$ denotes the length of any shortest~$(u,v)$-path in~$G$.

        Clearly, $\omega$ is idempotent, symmetric, and fulfills the triangle inequality. 

        We argue that the optimal makespan of any $\FTRPid$ (resp. \FTRPa) strategy for~$G$ equals the optimal makespan of any $\FTRPid$ (resp. $\FTRPa$) strategy for~$(V, \omega)$.

        As the endpoints of an edge of~$G$ have a pairwise distance of~$1$ under~$\omega$, each strategy for~$G$ can also be used for~$(V, \omega)$ and achieves the same makespan.
        Thus, the optimal makespan for~$(V,\omega)$ does not exceed the optimal makespan for~$G$.

        For the other direction, it suffices to show that there is an optimal strategy in which robots only go from~$u$ to~$v$ if~$\{u,v\}$ is an edge of~$G$.
        To this end, observe that if a strategy lets a robot~$r$ go from~$u$ directly to~$v$ if~$\{u,v\}$ is not an edge of~$G$, then we can replace this move by instead sending~$r$ along a shortest~$(u,v)$-path~$P$ in~$G$.
        By definition, the length of~$P$ is equal to~$\omega(u,v)$.
        Hence, under the new strategy, robot~$r$ arrives at the same time as in the original strategy.
        That is, the new strategy has the same makespan as the old strategy, but uses less direct movements between vertices that are not adjacent in~$G$.
        By an inductive argument, this implies that there is an optimal wake-up strategy for~$(V,\omega)$, for which robots only move along the edges of~$G$.
        That is, the optimal makespan for~$G$ does not exceed the optimal makespan for~$(V,\omega)$.

        Hence, the optimal makespan for~$G$ is equal to the optimal makespan for~$(V,\omega)$.
        As~$\omega$ is a metric, we conclude the proof.
    \end{proof}
} 

\section{Optimal Algorithms}\label{sec:optimal}

In this part we develop algorithms for \FTP and its variants. They return an optimal solution, that is, the optimal makespan, assuming any non-negative distance function between robots' initial positions. In particular, they apply to metric spaces as well as weighted graphs, even if the triangle inequality does not hold.

It is clear that any brute-force algorithm needs to consider at least $n!$ solutions, 
that is at least as many solutions as the path-TSP\footnote{In the \emph{Path Traveling Salesperson Problem}, one asks for a path of minimum length connecting all given points.} and the classical TSP, since these latter problems are valid solutions for the \FTP and \FTRP. Actually, this is even worse when considering all possible wake-up trees (that can be seen as labeled and binary trees), as this number raises up to $2^{\Omega(n)} \cdot n!$.
In the following, we assume that the given distance function can be evaluated in constant time.

\begin{restatable}{theorem}{optimalalgo}%
    \label{th:optimal-algo}
    For \FTP, \FTRPid, and \FTRPa with any non-negative distance function, there are deterministic algorithms returning the optimal solution in time complexity $O(3^n \cdot n^2)$, $O(3^n \cdot n^3)$, and $O(9^n \cdot n^{3/2})$ respectively, and space complexity $O( 2^n \cdot n)$, $O(2^n \cdot n^2)$, and $O(4^n \cdot n)$ respectively, where $n$ is the number of sleeping robots.
\end{restatable}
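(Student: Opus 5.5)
We design three subset dynamic programs, one per variant, each built on a recursive decomposition of wake-up trees. The common principle is this: when a robot at position $p$ has just woken the robot at $p$ (so two awake robots stand at $p$), the remaining work splits into two disjoint sets of sleeping robots, $S_1$ and $S_2$. Each set is handled independently by one of the two robots starting at $p$. Enumerating all ordered pairs $(S_1,S_2)$ of disjoint subsets of a set $S$ costs $\sum_{S}2^{|S|}=3^n$ overall, which is where the $3^n$ factor comes from.

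For \FTP, let $f(p,S)$ denote the minimal makespan (measured from the current moment) for one awake robot at position $p$ to wake every robot of $S$, where $p\notin S$. If $S=\emptyset$ we set $f(p,S)=0$. Otherwise
\[ f(p,S)=\min_{q\in S}\Big( d(p,q)+\min_{S_1\sqcup S_2=S\setminus\{q\}}\max\big(f(q,S_1),f(q,S_2)\big)\Big). \]
The answer is $f(\pos_0,\P\setminus\{\pos_0\})$. There are $O(2^n n)$ states, which gives the space bound. For each $q$, the inner minimum over splits can be shared: precompute $g(q,T)=\min_{S_1\sqcup S_2=T}\max(f(q,S_1),f(q,S_2))$ for all pairs $(q,T)$. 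Computing $g$ costs $O(3^n n)$ in total, and then $f$ costs $O(2^n n^2)$. If I instead charge the split enumeration inside each of the $n$ choices of $q$, the total is $O(3^n n^2)$. Correctness follows from the wake-up tree structure: each node has at most two children subtrees, and their vertex sets partition the descendants.

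For \FTRPid, the state must also record the position that a robot eventually has to return to. Define $f(p,h,S)$ as the minimal completion time for a robot currently at $p$ with home $h$ that must wake $S$, including its final return to $h$, while every woken robot also returns home. Then $f(p,h,\emptyset)=d(p,h)$, and
\[ f(p,h,S)=\min_{q\in S}\Big(d(p,q)+\min_{S_1\sqcup S_2=S\setminus\{q\}}\max\big(f(q,h,S_1),f(q,q,S_2)\big)\Big). \]
In this recursion the visiting robot keeps its own home $h$, and the newly woken robot has home $q$. It does not matter which of the two physical robots takes which branch, because the homes determine the trajectories, and each branch is always one robot's continuation. This gives $O(2^n n^2)$ states and $O(3^n n^3)$ time. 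The bicolored-tree correspondence from the Preliminaries justifies the recursion.

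For \FTRPa, robots may end at arbitrary initial positions, and the only global constraint is that the multiset of endpoints equals $\P$. The decomposition must therefore also track which set $E$ of homes a subtree is responsible for filling. Define $f(p,S,E)$, where the active robot at $p$ together with the robots it wakes from $S$ (that is, $|S|+1$ robots) must end exactly on the positions of $E$, with $|E|=|S|+1$. The branching then splits both $S$ and $E$ consistently, and the pairs $(S,E)$ range over $4^n$ choices, which matches the space bound $O(4^n n)$. The base case is $f(p,\emptyset,\{e\})=d(p,e)$. Enumerating a simultaneous split of $S$ and $E$ naively costs $3^n\cdot 3^n=9^n$. The additional $n^{3/2}$ factor should come from the cardinality constraint $|E_i|=|S_i|+1$, which restricts the splits; a binomial/Stirling estimate of the constrained sums should yield the $n^{-1/2}$ saving.

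The main obstacle is the \FTRPa case. I need to justify that the Eulerian return-arc formulation decomposes subtree by subtree: that in an optimal solution the return arcs can be reassigned so that each subtree's robots end inside a designated set $E$ of the correct size. I would prove this with the Eulerian condition and a flow/reassignment argument. I also need to verify the precise counting that produces $9^n n^{3/2}$ rather than $9^n n^2$.
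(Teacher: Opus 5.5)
Your three recurrences are the paper's. It writes them as two tables $T_1,T_2$, where your $g$ is its $T_2$. For \FTRPid it records the first robot's home, and for \FTRPa it tracks a return set $R$ with $|R|=|X|+b$ and splits $(X,R)$ into $(A,S)$ with $|S|=|A|+1$. So the proof is essentially the same.

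The \FTRPa ``obstacle'' you flag needs no Eulerian or flow argument. The robots of a branch are the branch robot plus those it transitively wakes, and their pairwise distinct final positions automatically form a set of size $|S_i|+1$ disjoint from the other branch's.

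The count you left open is done in the paper via Vandermonde: $\sum_i\binom{p}{i}\binom{q}{i+1}=\binom{p+q}{p+1}$. This gives $\binom{2|X|+2}{|X|+1}=O(4^{|X|}/\sqrt{|X|})$ constrained splits per $T_2$-cell. Summed over all cells, this yields $O(9^n\sqrt{n})$ table accesses, at $O(n)$ each.
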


We stress that from our hardness results in \cref{th:NPhard}, no $2^{o(n)}$-time algorithms can exist for these problems. The proofs are provided in \cref{app:optimal-algo} in the Appendix.

Our algorithms are inspired by the famous Held--Karp's algorithm, independently discovered by Bellman in the 1960s. We use a recursive approach which for \FTP is actually quite close to the original Bellman--Held--Karp's formulation of the path-TSP sub-problem decomposition of TSP. Then, we use a dynamic programming approach with exponential space. For the implementation of \FTP and \FTRPid, we can use a relatively simple pair of tables of respectively $2^n \times n$ and $2^n \times n^2$ cells, where each cell is assumed to be large enough to store a distance. Whereas for \FTRPa, we analyze a standard memoization technique which leads, 
to a roughly $9^n$-time algorithm.


\section{Conclusion}%
\label{sec:contwp vclu}

We introduced two versions of the Freeze-Tag-with-Return Problem, a natural variant of the Freeze-Tag Problem.
We showed that both versions are NP-hard when the underlying space is an unweighted undirected graph.
However, it remains open whether the problems are NP-hard in other metric spaces, like $(\mathbb{R}^d,\ell_p)$ spaces ($d=2$ and $p=2$, the Euclidean plane).

Regarding the presented lower and upper bounds for the optimal makespan, we conjecture that they are not optimal for large instances. We found a distribution of positions (for any $n\geq 5$) with an optimal makespan of $4.67$. We still have a gap between this lower bound and the two upper bounds of $4.89 + O(1/\sqrt{n}\,)$ for arbitrary distribution and of $2 + 2\sqrt{2} \approx 4.83$ for convex distribution.

In general, we conjecture that the optimal makespan of worst-case instances of both introduced problems, i.e., the wake-up return constants, never exceeds~$2+2\sqrt{2}$. 
In other words,

\begin{conjecture}
    $\OPTid = \OPTa = 2+2\sqrt{2}$.
\end{conjecture}

It is also interesting to further analyze the connection between the worst-case makespan between \FTP and \FTRP for any fixed distribution of sleeping robots~$\P$. We gave examples, where this difference can be larger than~$1.732$ but showed that it never exceeds~$1.959$. The latter was shown via an algorithm 
that takes any solution for \FTP for the set~$\P$ and computes a solution for \FTRP for the set~$\P$ where the makespan does only increase by at most~$1.959$. There are two main directions to improve in this area: 1) reduce the difference between bounds and 2) get a simpler and polynomial algorithm to achieve a solution for \FTRP from a solution for \FTP. 

Finally, the definition of the \FTRP might inspire the introduction of other variants of \FTP, like for example, the task where all robots have to gather at the same point after the wake-up process is finished.




\newpage

\ifarxiv
\bibliographystyle{my_alpha_doi}
\def\MYMARGIN{17mm}
%
%

\makeatletter
\ifdefined\MYMARGIN\else\def\MYMARGIN{18.5mm}\fi
\renewenvironment{thebibliography}[1]
  {\if@noskipsec \leavevmode \fi
   \par
   \@tempskipa-3.5ex \@plus -1ex \@minus -.2ex\relax
   \@afterindenttrue
   \@tempskipa -\@tempskipa \@afterindentfalse
   \if@nobreak
     \everypar{}%
   \else
     \addpenalty\@secpenalty\addvspace\@tempskipa
   \fi
   \noindent
   \rlap{\color{lipicsLineGray}\vrule\@width\textwidth\@height1\p@}%
   \hspace*{7mm}\fboxsep1.5mm\colorbox[rgb]{1,1,1}{\raisebox{-0.4ex}{%
     \normalsize\sffamily\bfseries\refname}}%
   \@xsect{1ex \@plus.2ex}%
   \list{\@biblabel{\@arabic\c@enumiv}}%
        {\leftmargin\MYMARGIN
         \labelsep\leftmargin
         \settowidth\labelwidth{\@biblabel{#1}}%
         \advance\labelsep-\labelwidth
         \usecounter{enumiv}%
         \let\p@enumiv\@empty
         \renewcommand\theenumiv{\@arabic\c@enumiv}}%
   \fontsize{9}{12}\selectfont
   \sloppy
   \clubpenalty4000
   \@clubpenalty \clubpenalty
   \widowpenalty4000%
   \sfcode`\.\@m\protected@write\@auxout{}{\string\gdef\string\@pageNumberStartBibliography{\thepage}}}
\makeatother

\else
\bibliographystyle{plainurl} 
\fi

\bibliography{biblio}

\newpage
\appendix

\section{Extra notations and reminders}\label{appendix:notations}

In the various proofs, we use the following notations.
Some of them come from previous articles about FTP.

We denote by $\P^* = \P \setminus\{\pos_0\}$ the set of positions of sleeping robots.

In the unit disk, we often use the \emph{length of a chord} of angle $\alpha$, denoted $\chord(\alpha) = 2\sin(\alpha/2)$.

We define a \emph{crown} of angle $\alpha$ and width $w$ as a region containing all points between distance $1-w$ and $1$ from the origin of the Euclidean plane and whose angular coordinates are within a cone of angle $\alpha$ centered at $\pos_0$.

Different solutions of $\FTP$ are provided in \cite{BCGH24,BGHO24} depending on the shape of regions containing sleeping robots:

\begin{itemize}

    \item For crowns of angle $\alpha$ and width $w$, assuming that the initial awake robot lies on a line segment boundary of the crown, Algorithm $\crown$ returns a wake up tree of weighted depth $\alpha+w(1+\varphi)$.

    \item For cones of angle $\alpha$ and of radius $R$, assuming that the initial awake robot is on the cone's apex, 
    Algorithm $\splitconestrategy$ returns a wake-up tree of weighted depth $R(1+\varphi)$.

\end{itemize}


\section{Lower and upper bounds: detailed proofs}

\subsection{Lower bounds}\label{appendix:lowerbound-nsmall}

For small values of $n$, when the points are regularly distributed on the unit circle, this already gives us interesting lower bounds (tight for $n=3$ and $n=4$) (see \cref{fig:convid} for \FTRPid and \cref{fig:conva} for \FTRPa).
In these examples, we use bold edges to show the critical path in the wake-up trees.

\begin{figure}[htbp!]
    \centering
    \begin{subfigure}[b]{0.3\linewidth}
        \incfig{\FigRegularOptBiIdThree}{1}
    \end{subfigure}
    \begin{subfigure}[b]{0.3\linewidth}
        \incfig{\FigRegularOptBiIdFour}{1}
    \end{subfigure}
    \begin{subfigure}[b]{0.3\linewidth}
        \incfig{\FigRegularOptBiIdFive}{1}
    \end{subfigure}
    \hfill
    \begin{subfigure}[b]{0.3\linewidth}
        \incfig{\FigRegularOptBiIdSix}{1}
    \end{subfigure}
    \begin{subfigure}[b]{0.3\linewidth}
        \incfig{\FigRegularOptBiIdSeven}{1}
    \end{subfigure}
    \begin{subfigure}[b]{0.3\linewidth}
        \incfig{\FigRegularOptBiIdEight}{1}
    \end{subfigure}
    \caption{Optimal wake-up trees for small regular convex configurations for \FTRPid.}%
    \label{fig:convid}
\end{figure}

\begin{figure}[htbp!]
    \centering
    \begin{subfigure}[b]{0.3\linewidth}
        \incfig{\FigRegularOptCovThree}{1}
    \end{subfigure}
    \begin{subfigure}[b]{0.3\linewidth}
        \incfig{\FigRegularOptCovFour}{1}
    \end{subfigure}
    \begin{subfigure}[b]{0.3\linewidth}
        \incfig{\FigRegularOptCovFive}{1}
    \end{subfigure}
    \hfill
    \begin{subfigure}[b]{0.3\linewidth}
        \incfig{\FigRegularOptCovSix}{1}
    \end{subfigure}
    \begin{subfigure}[b]{0.3\linewidth}
        \incfig{\FigRegularOptCovSeven}{1}
    \end{subfigure}
    \begin{subfigure}[b]{0.3\linewidth}
        \incfig{\FigRegularOptCovEight}{1}
    \end{subfigure}
    \caption{Optimal wake-up trees for small regular convex configurations for \FTRPa.}%
    \label{fig:conva}
\end{figure}

\lowerboundnsmall*


\begin{proof}
    The lower bounds for $n=3$ and $n=4$ are based on regular distribution of the sleeping robots on the unit circle.
    In each case, we argue about the weighted depth of possible wake-up trees.
    For $n < 5$, wake-up trees have either one or two leaves:
    Since the root has only one child, and each node has at most two child, we need to have at least $3$ non-leaf nodes to get more than two leaves in the wake-up tree.

    \begin{claim}
        For $n=3$, $\OPTidn(3)\geq \OPTan(3) \geq 1+2\sqrt{3} \approx 4.46$
    \end{claim}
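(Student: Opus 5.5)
The instance is the regular configuration: the three sleeping robots $r_1,r_2,r_3$ sit at the vertices of an equilateral triangle inscribed in the unit circle, so every pairwise distance is $\sqrt{3}$. The source $r_0$ is at the center, at distance $1$ from each of them. Since every \FTRPid solution is also an \FTRPa solution, it suffices to prove $\OPTan(3)\ge 1+2\sqrt3$. The inequality $\OPTidn(3)\ge\OPTan(3)$ then follows immediately.

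\textbf{Restricting the shape of the tree.} We use the observation already made: with $n=3$, a wake-up tree has at most two leaves. The root $\pos_0$ has exactly one child, which we call $\pos_1$ by symmetry, and $\wu(\pos_1)\ge 1$. Two cases remain. In the first, the tree is a path $\pos_0\to\pos_1\to\pos_2\to\pos_3$. In the second, $\pos_1$ has two children $\pos_2$ and $\pos_3$. Each arc between sleeping robots has length $\sqrt3$, and wake-up times only increase along the tree.

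\textbf{Case analysis via the return constraint.} In \FTRPa every position, including $\pos_0$, must be the head of exactly one return arc, and the graph $(\P,\wakeuparc\cup\returnarc)$ must be Eulerian. Every return arc starts at some node $\pos_j$, so the corresponding robot finishes at time at least $\wu(\pos_j)+d(\pos_j,\pos_i)$.

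In the path case, $\wu(\pos_3)=1+2\sqrt3$, so some robot is at $\pos_3$ at time $1+2\sqrt3$. Consider the return arc leaving $\pos_3$. If it is a self-loop, the Eulerian condition forces the tail of the wake-up arc into $\pos_3$, namely $\pos_2$, to also be the tail of a return arc. That return arc then has length at least $\min(1,\sqrt3)=1$ unless it too is a self-loop. Chasing this up the path, some return arc must eventually leave a node $\pos_j$ with $\wu(\pos_j)\ge 1+\sqrt3$ toward a different position, so it has length at least $1$. A cleaner way to phrase the bound: the out-degree of each node in $\wakeuparc\cup\returnarc$ equals its in-degree. In both cases this will be checked directly rather than by a general lemma.

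In the branching case, $\wu(\pos_2)=\wu(\pos_3)=1+\sqrt3$. Node $\pos_1$ has out-degree $2$ in $\wakeuparc$ and in-degree $1$, so it receives exactly one return arc. The Eulerian balance then forces the return arcs out of $\pos_2$ and $\pos_3$ together with the return arc from $\pos_1$ to cover the heads $\pos_0,\pos_1,\pos_2,\pos_3$. Counting degrees, $\pos_2$ and $\pos_3$ each have in-degree $1$ from $\wakeuparc$ plus one return arc, so each sends out exactly one return arc. Node $\pos_1$ has in-degree $2$ and out-degree $2$ from $\wakeuparc$, so its own return arc must leave $\pos_1$ as a self-loop. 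That is impossible, since $\pos_1$ would then have out-degree $3$. So the heads $\pos_0$ and $\pos_1$ must be covered by arcs leaving $\pos_2$ or $\pos_3$. One of these goes to $\pos_1$, with length $\sqrt3$. It therefore finishes at time at least $1+\sqrt3+\sqrt3=1+2\sqrt3$.

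\textbf{Main obstacle.} The delicate part is the path case, where one must show that the Eulerian constraint forces some robot whose wake-up time is at least $1+\sqrt3$ to still travel at least $\sqrt3$. Travelling only $1$ would not suffice. I would handle this by listing the few admissible return-arc sets on four nodes directly. At $\pos_3$, in-degree is $1$ and out-degree is $0$ plus one return arc. Since $\pos_3$ must receive a return arc, either it takes a self-loop, or $\pos_3$ returns to some other node, which costs at least $1$ from time $1+2\sqrt3$. In the self-loop subcase, the return arc leaving $\pos_2$ must go to one of $\pos_0$ or $\pos_1$. Going to $\pos_1$ costs $\sqrt3$ from time $1+\sqrt3$. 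Going to $\pos_0$ forces $\pos_1$ to be covered from a node woken at time $\ge 1+\sqrt3$, at cost $\sqrt3$. Either way the makespan is at least $1+2\sqrt3$.
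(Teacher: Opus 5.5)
Your strategy matches the paper's. You split into the path-shaped wake-up tree and the tree in which $\pos_1$ has two children. In the latter, you observe that $\pos_1$ is left empty and must be refilled by a robot that leaves a leaf at time $1+\sqrt{3}$ and travels $\sqrt{3}$. The conclusion is right, but the Eulerian bookkeeping you use to get there is wrong, and your path-case argument does not stand as written.

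Every position receives exactly one return arc, so a node's total in-degree is its wake-up in-degree plus one. Balancing against out-degree gives the following.
\begin{itemize}
\item $\pos_0$ and any node with two wake-up children emit no return arc.
\item A node with one wake-up child emits one.
\item A leaf of the wake-up tree emits \emph{two}.
\end{itemize}

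In the branching case, $\pos_0$ and $\pos_1$ emit nothing, and $\pos_2$ and $\pos_3$ emit two each: four arcs for four heads. So the arc into $\pos_1$ leaves $\pos_2$ or $\pos_3$ at time $1+\sqrt{3}$ and has length $\sqrt{3}$. Your statement that $\pos_2$ and $\pos_3$ each send exactly one return arc is false, since it would leave two heads uncovered. You also never exclude that the arc into $\pos_1$ comes from $\pos_0$. The correct count settles both points.

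In the path case, $\pos_3$ emits two return arcs to distinct heads. Your ``self-loop subcase'' therefore cannot occur, and the chasing argument built on it is unsupported. In fact one of those two arcs has length at least $1$, which gives makespan at least $2+2\sqrt{3}$, matching the paper's value for one-leaf trees.

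More simply, no return analysis is needed in the path case at all. The makespan is at least the depth of the wake-up tree, and you already noted that the wake-up time of $\pos_3$ is $1+2\sqrt{3}$. What you called the main obstacle is therefore immediate, and the only real content of the claim is the branching case.
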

    \begin{claimproof}
        First note that the pairwise distance of the three sleeping robots' positions $\pos_1$, $\pos_2$ and $\pos_3$ is $\chord(2\pi/3)=\sqrt{3}$.
        The smallest makespan achieved by a wake-up tree with only one leaf (and one return arc) corresponds to the circuit $\pos_0,\pos_1,\pos_2,\pos_3,\pos_0$.
        The makespan is $2+2\sqrt{3}$.
        We can have a smaller makespan with a wake-up tree with two leaves.
        The optimal wake-up tree for the $\FTP$ takes $1+\sqrt{3}$: one unit for $r_0$ to reach $r_1$ and $\sqrt{3}$ to wake up $r_2$ and $r_3$ in parallel.
        $r_0$ (resp. $r_1$) returns to its initial position in time $1$ (resp. $\sqrt{3}$).
        Thus we have a makespan of $1+2\sqrt{3}$.
        We can not do better since the two awake robots reaching the two leaves have to return to different initial positions and one of them is at distance at least $\sqrt{3}$.
        The first hop in the wake-up tree has length $1$ whereas the second hop has necessary length $\sqrt{3}$.
    \end{claimproof}

    \begin{claim}
        For $n=4$, $\OPTidn(4)\geq \OPTan(4) \geq  2+2\sqrt{2} \approx 4.824,$
    \end{claim}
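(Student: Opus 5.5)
We take the instance with $\pos_0$ at the origin and $\pos_1,\dots,\pos_4$ at the vertices of a square inscribed in the unit circle. Consecutive positions are at distance $\chord(\pi/2)=\sqrt2$ and opposite ones at distance $2$. The bound on $\OPTidn(4)$ follows from the one on $\OPTan(4)$, since every \FTRPid solution is also an \FTRPa solution. So we prove that every \FTRPa solution $(\wutree,\returnarc)$ has makespan at least $2+2\sqrt2$. As noted above, a wake-up tree on five nodes whose root has one child has at most two leaves, so we split into these two cases.

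\emph{One leaf.} Here $\wutree$ is a path $\pos_0,\pos_{\sigma(1)},\dots,\pos_{\sigma(4)}$. Its length is at least $1+3\sqrt2$, because each step between distinct circle points costs at least $\sqrt2$. The robot at the leaf must then return to some initial position. All positions other than the leaf were left by the moving robot, so in the Eulerian graph $(\P,\wakeuparc\cup\returnarc)$ the leaf's incoming arc must be balanced by an outgoing return arc to a different node, of length at least $1$. The resulting makespan is at least $2+3\sqrt2$, which exceeds $2+2\sqrt2$.

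\emph{Two leaves.} The root's child $\pos_a$ is woken at time $1$, and the tree branches at some node $v$ into two branches. The key point is that the two robots arriving at the two leaves $\ell_1,\ell_2$ need return arcs to two distinct initial positions. Neither of these positions can be its own leaf, because that leaf's in-degree in the Eulerian graph is only $1$ (its return arc counts). We then enumerate the shapes.

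\begin{itemize}
\item If $v=\pos_a$, both leaves are reached at time at least $1+\sqrt2$. One branch contains at least two sleeping robots, so its leaf is reached at time at least $1+2\sqrt2$ and its robot still has to return at distance at least $1$ (to $\pos_0$) or $\sqrt2$, giving at least $2+2\sqrt2$.
\item Otherwise the branching node $v$ is the second visited position, at time at least $1+\sqrt2$, and the branches contain one robot each, so the leaves are reached at time at least $1+2\sqrt2$. Both leaf robots need distinct return targets at distance at least $1$, with only $\pos_0$ available at distance exactly $1$. So one of them returns a distance of at least $\sqrt2$, giving at least $1+3\sqrt2$.
\end{itemize}

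To see that $2+2\sqrt2$ is attained, take the tree in which $r_0$ wakes $\pos_1$, the two awake robots go to $\pos_2$ and $\pos_4$ (arriving at time $1+\sqrt2$), and one of them continues to $\pos_3$ (arriving at time $1+2\sqrt2$). The critical return is then $\pos_3\to\pos_0$.

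The main obstacle is the bookkeeping in the two-leaf case. We must carefully justify that the leaf robots need two distinct targets different from their own leaves, and that every other tree shape (for example, a leaf reached through a diagonal step of length $2$) is no better. We would handle this with the in-degree/out-degree argument on the Eulerian graph plus a short finite case check on leaf times, using the symmetry of the square to reduce the number of cases.
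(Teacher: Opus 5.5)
Your proof is correct and rests on the same idea as the paper's: on the square instance, every admissible wake-up tree has a leaf at hop-depth at least $3$, which is reached no earlier than $1+2\sqrt{2}$, and after that some robot must still travel at least $1$ to another initial position. The paper gets the depth-$3$ leaf in one line, by noting that a five-node tree whose root has one child and whose nodes have at most two children has unweighted depth at least $3$. Your explicit case split and the distinct-return-targets refinement are therefore unnecessary, although your Eulerian in/out-degree justification of the final unit return is more careful than the paper's bare assertion.
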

    \begin{claimproof}
        We show that the smallest makespan is achieved by a wake-up tree with only one leaf that takes a time $2+2\sqrt{2}$.
        Assume that we have two leaves in a wake-up tree with smallest makespan.
        The unweighted depth is at least $3$.
        Consider the root-leaf path $\pos_0, \pos_i, \pos_j, \pos_k$ of a leaf $\pos_k$ reached in $3$ hops.
        The first hop $\pos_0, \pos_i$ has length $1$ and the next two hops have length at least $\sqrt{2}$ since the pairwise distance between sleeping robots is at least $\sqrt{2}$.
        The robot that moved to $\pos_k$ then has to return to an initial position, which is at a distance of at least $1$ from $\pos_k$.
        Thus the length of the trajectory of this robot is at least $2+2\sqrt{2}$.
    \end{claimproof}

    \begin{figure}[htbp!]
        \centering
        \begin{subfigure}[b]{0.23\textwidth}
            \incfig{\FigWorstCovCA}{1}
        \end{subfigure}
        \begin{subfigure}[b]{0.23\textwidth}
            \incfig{\FigWorstCovCB}{1}
        \end{subfigure}
        \begin{subfigure}[b]{0.23\textwidth}
            \incfig{\FigWorstCovCC}{1}
        \end{subfigure}
        \begin{subfigure}[b]{0.23\textwidth}
            \incfig{\FigWorstCovCD}{1}
        \end{subfigure}
        \caption{Positions are such these four wake-up trees are optimal. Critical paths are bold lines. Observe in the last wake-up tree, that one $r_1$ wakes up $r_2$ and then they both go back to $\pos_1=\pos_5$.}%
        \label{fig:optworstcov}
    \end{figure}

    \begin{claim}
        For any $n \geq 5$, $\OPTan(n) \geq 1 + \frac{1}{4} (7^{1/4} + 7^{3/4})\sqrt{6} > 4.631$
    \end{claim}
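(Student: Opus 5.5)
We will use the four-site configuration already described before the statement: $n-3$ robots at $\pos_1$ (angle $0$ on the unit circle) and single robots at $\pos_2,\pos_3,\pos_4$ at angles $\alpha_2,\alpha_3,\alpha_4=\alpha_3+\alpha_2$. We then lower-bound the makespan of every \FTRPa solution $(\wutree,\returnarc)$ on this instance.

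\textbf{Step 1: reduce to $n=5$.} We show that at most two robots from the site $\pos_1$ ever need to leave it. The argument is an exchange: if three or more co-located robots at $\pos_1$ are awake and each one leaves, we can reassign trajectories. Robots are anonymous and each position only needs to be refilled, so any robot that only travels $\pos_1\to\pos_1$ can be replaced by one that stays put. Concretely, we consider the subtree hanging at $\pos_1$ (all arcs of weight $0$ between copies of $\pos_1$). At most three distinct other sites remain to be woken. Only the branches that actually leave $\pos_1$ matter, and by the structure of binary wake-up trees at most two of them leave $\pos_1$ in an optimal solution. Hence the optimal makespan is independent of $n\ge5$, and it suffices to treat $n=5$.

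\textbf{Step 2: finite enumeration of tree shapes.} For $n=5$, with sites $\pos_0,\pos_1,\pos_1',\pos_2,\pos_3,\pos_4$, there are finitely many wake-up trees and Eulerian return-arc completions. We group them by the order in which the four distinct sites are reached and by the number of leaves (one, two, or three). For each group we write the makespan as a sum of chord lengths $\chord(\cdot)$ of the angles $\alpha_2$, $\alpha_3-\alpha_2$, $\alpha_4-\alpha_3=\alpha_2$, $2\pi-\alpha_4$ and differences between them, plus the first hop of length $1$ and the final return arc. Symmetry of the configuration (the arcs $\pos_1\pos_2$ and $\pos_3\pos_4$ both subtend $\alpha_2$) roughly halves the cases.

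\textbf{Step 3: verify the bound.} The angles $\alpha_2,\alpha_3$ are chosen so that the four trees of \cref{fig:optworstcov} all attain the same makespan, $1+\frac14(7^{1/4}+7^{3/4})\sqrt6$. This equalization is the point of the closed forms. We will check this using half-angle identities: with $\tan(\alpha_2/2)=\frac{7^{1/4}\sqrt6}{3\sqrt7-1}$ one gets $\sin,\cos$ of $\alpha_2$ algebraically, and then each chord is computed. For every other tree shape we will show a dominating inequality, i.e.\ that its critical path contains one of those four critical paths, or a path that is longer by the triangle inequality. Numerically, we check that all remaining cases exceed $4.631$.

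The main obstacle is Step 1, making the claim ``at most two robots at $\pos_1$ move'' rigorous, together with the exhaustive case analysis of Step 2. In practice we would back the enumeration with the exact algorithm of \cref{th:optimal-algo} for \FTRPa (on six points), and present the hand analysis only for the four critical trees.
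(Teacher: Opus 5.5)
Your proposal is essentially the paper's proof: the same four-site configuration, reduction to $n=5$ because at most two robots originally at $\pos_1$ ever need to move, and an exhaustive check of the $n=5$ instance with the exact \FTRPa algorithm of \cref{th:optimal-algo}, the four trees of \cref{fig:optworstcov} being the tight cases. One correction to Step~1 is needed: the bound is on robots \emph{originally at} $\pos_1$, not on branches, since up to three wake-up branches can leave the $\pos_1$ cluster (one per remaining site), so the reason is instead that, robots being anonymous, one of the departing robots can be taken to be the robot that first reached $\pos_1$ (e.g.\ $r_0$) — a point the paper also asserts without proof.
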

    \begin{claimproof}
        We spread the $n$ sleeping robots on $4$ sites: $n-3$ sleeping robots are located on the unit circle with the same angle $\alpha_1 = 0$ ($\pos_1=\pos_5=\ldots=\pos_{n}$) whereas the angles of the $3$ remaining robots are
        \begin{align*}
            \alpha_2 & = 2\arctan\pare{\frac{7^{1/4} \sqrt{6}}{3\sqrt{7}-1}},                               \\
            \alpha_3 & = \alpha_2+ 2\arctan\pare{\frac{7 ^{1/4} \sqrt{6}}{(\sqrt{7} - 1)(4 \sqrt{7} - 6)}}, \\
            \alpha_4 & =\alpha_2 + \alpha_3.
        \end{align*}

        With these angles, there are four optimal wake-up trees, as  presented in \cref{fig:optworstcov}.
        Their makespan are given by the following:
        \begin{eqnarray*}
            &=& |\pos_0\pos_1|+2|\pos_1\pos_3| \\
            &=& |\pos_0\pos_3|+|\pos_3\pos_2| + 2|\pos_2\pos_1| \\
            &=& |\pos_0\pos_2|+|\pos_2\pos_3| + 2|\pos_3\pos_4| \\
            &=& |\pos_0\pos_1|+|\pos_1\pos_3| + |\pos_3\pos_2|\\
            &=& 1 +\frac{1}{4}(7^{1/4} + 7^{3/4})\sqrt{6} \approx 4.631.
        \end{eqnarray*}
        The algorithm presented in \cref{sec:optimal} explores all these wake-up trees and check that all other wake-up trees are sub-optimal.

        \begin{figure}[htbp!]
            \centering
            \begin{subfigure}[b]{0.23\textwidth}
                \incfig{\FigWorstIdCA}{1}
            \end{subfigure}
            \begin{subfigure}[b]{0.23\textwidth}
                \incfig{\FigWorstIdCB}{1}
            \end{subfigure}
            \begin{subfigure}[b]{0.23\textwidth}
                \incfig{\FigWorstIdCC}{1}
            \end{subfigure}
            \begin{subfigure}[b]{0.23\textwidth}
                \incfig{\FigWorstIdCD}{1}
            \end{subfigure}
            \caption{Positions are such these four wake-up trees are optimal.}%
            \label{fig:optworstid}
        \end{figure}
    \end{claimproof}
    \begin{claim}
        For any $n \geq 5$, $\OPTidn(n) > 4.672.$
    \end{claim}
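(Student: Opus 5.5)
The plan mirrors the \FTRPa claim just proved: exhibit one explicit distribution, reduce to a finite problem, and certify the optimum with the exact algorithm of \cref{th:optimal-algo}, as for the lower bounds with $n\le 4$.

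\emph{Step 1: the configuration.} Following \cref{fig:worstcaseId}, I place $n-4$ sleeping robots at a common site $\pos_1$ on the unit circle (angle $0$). The remaining four robots are placed at four further sites on the unit circle with angles $0<\alpha_2<\alpha_3<\alpha_4<\alpha_5<2\pi$. With one more distinct site than in the \FTRPa construction, there is enough freedom to make the return constraints bite.

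\emph{Step 2: reduction to $n=5$.} I will show that stacking extra robots at $\pos_1$ cannot help any \FTRPid schedule. The aim is to prove that some optimal solution uses at most two of the co-located robots at $\pos_1$ as movers, while the others are woken at the time $\pos_1$ is first reached and stay put through self-loops of zero length. Any schedule for the instance with $n$ robots then yields a schedule for the instance with $5$ robots of no larger makespan. To do this, I take an optimal solution and relabel the robots at $\pos_1$ so that their trajectories are cycles through $\pos_1$. Since they share a return target, swapping identities among them preserves feasibility in the id model. One robot woken at $\pos_1$ can only bring one helper forward, and the wake-up tree below a site branches at most twice per node. Combined, these facts should show that extra movers from $\pos_1$ only duplicate branches that could instead be pruned. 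The claim then follows if I establish $\OPTidn(5)>4.672$ on this configuration, with the configuration at $n=5$ as the base instance.

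\emph{Step 3: optimizing the angles and certification.} I will choose the angles by balancing the makespans of the few candidate optimal bicolored trees (cf.\ \cref{fig:optworstid}), mirroring the \FTRPa computation. Each candidate makespan has the form $1+\sum(\text{chords})+(\text{return chord})$, where in the id model the return chord runs from $\last_r$ back to that robot's own site rather than to any site. Equalizing four such expressions gives a system in the angles. I expect it to have no closed form, so I will solve it numerically and take slightly rounded rational-ish angles. I will then run the $O(3^n n^3)$ \FTRPid algorithm of \cref{th:optimal-algo} on this $5$-robot instance to verify that every wake-up tree with its id-returns has makespan exceeding $4.672$. The strict inequality absorbs the rounding.

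The main obstacle is Step 2. In the id variant, a robot from $\pos_1$ that travels far must return to $\pos_1$ itself, so the simple argument that at most two robots at $\pos_1$ move requires an exchange argument on trajectories that respects the bicolored structure. If a clean exchange fails, my fallback is to check directly, via the exact algorithm, the cases $n=5,6,7$, and then argue monotonicity in $n$: adding a robot at $\pos_1$ never reduces the optimal makespan, because deleting a non-moving co-located robot from a solution keeps it feasible.
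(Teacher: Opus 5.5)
Your Steps 1 and 3 match the paper. It places $n-4$ robots at $\pos_1$ and four more on the unit circle at numerically tuned angles, exhibits four balanced wake-up trees of makespan just above $4.672$, and appeals to the exact algorithm of \cref{th:optimal-algo} for all other trees. The paper does not spell out the reduction in $n$ for the id variant. Your Step 2 is the only part that turns a statement about all $n\ge 5$ into a finite check, and it fails in three ways.

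(i) It is off by one. In this configuration the $5$-robot instance has a single robot at $\pos_1$, so a solution using two movers from $\pos_1$ cannot be transplanted into it. The bound ``at most two movers'' made $n=5$ the right base case only in the \FTRPa construction, where $n-3=2$ robots sit at $\pos_1$.

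(ii) Your justification of ``at most two'' does not hold. Co-located robots are distinct tree nodes joined by zero-length arcs, so binary branching per node places no limit on how many robots leave $\pos_1$ at once. Any such bound depends on the chosen angles and would itself need certification.

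(iii) The fallback monotonicity points the wrong way. An extra robot at $\pos_1$ can never increase the optimum, since it can stay idle, but it may decrease it, since it is one more helper. Deleting an idle co-located robot only gives the inequality you want if you already know some optimal solution has one, and that is exactly the missing statement. So checking $n=5,6,7$ says nothing about larger $n$.

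A simple counting argument closes the gap. Normalize any solution so that the first robot reaching $\pos_1$ wakes everybody there at zero cost. Then keep at home every robot of $\pos_1$ that is never the first to reach a sleeping robot. In the id model neither change increases any completion time. Every remaining mover from $\pos_1$ is the first to reach one of the four single robots at $\pos_2,\dots,\pos_5$, so at most four robots of $\pos_1$ move, and the idle ones can be deleted.

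Hence the optimal \FTRPid makespan of your family is non-increasing in $n$ and constant for $n\ge 8$. One run of the exact algorithm on the instance with four robots at $\pos_1$ (i.e.\ $n=8$) certifies the bound for every $n\ge 5$ at once. Alternatively, certify ``at most two movers'' for these specific angles and check $n=6$.
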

    \begin{claimproof}
        We spread the $n$ sleeping robots on $5$ sites: $n-4$ sleeping robots are located on the unit circle with the same angle $\alpha_1 = 0$ ($\pos_1=\pos_6=\ldots=\pos_{n}$) whereas the angles of the $4$ remaining robots are
        \begin{align*}
            \alpha_2 & \approx 0.935362,               \\
            \alpha_3 & \approx \alpha_2 + 1.572573134, \\
            \alpha_4 & \approx \alpha_3+  1.359792,    \\
            \alpha_5 & \approx \alpha_4 + 1.622400.
        \end{align*}

        With these angles, there are four optimal wake-up trees, as  presented in \cref{fig:optworstid}.
        Their makespans are given by the following trajectories:
        \begin{eqnarray*}
            &=& |\pos_0\pos_4|+|\pos_4\pos_1| + |\pos_1\pos_0|, \\
            &=& |\pos_0\pos_4|+|\pos_5\pos_1| + |\pos_1\pos_3| + |\pos_1\pos_0|,\\
            &=& |\pos_0\pos_2|+|\pos_2\pos_3| + |\pos_3\pos_4| + |\pos_4\pos_0|,\\
            &=& |\pos_0\pos_1|+|\pos_1\pos_3|+|\pos_3\pos_1|.\\
        \end{eqnarray*}
        In each cases, the makespan is greater than $4.672$.
        The algorithm presented in \cref{sec:optimal} explores all other wake-up trees and check that they have a greater makespan.
    \end{claimproof}
\end{proof}

\subsection{Upper bound for convex positions}%
\label{appendix:UBConvex}


We start by stating and proving technical lemmas (\cref{lem:disk projection}, \cref{lem:linear} and \cref{lem:sum of chords}), then we describe the dedicated  algorithm \convex, and we show it achieves a makespan of $2 + 2\sqrt{2}$ for the \FTRPid (\cref{th:convex_UB}).
Furthermore we extend \cref{th:convex_UB} to \cref{cor:upperboundSmall} by providing an analysis for special cases where $n \in \set{3,4}$. 

\subparagraph{Projection on the circle.}

When it comes to upper bounding distances between a pair of points in a convex set, it comes in handy to assume that sleeping robots lie on the border of the unit disk centered in $\pos_0$.
To this end, we use the projection $\psi$ (see~\cite[proof of Theorem~2]{BGHO24}) of $\P^*$ on $\mathcal{C}$ as follows:
For each $\pos_i \in \P^*$, $\psi(\pos_i) \in \mathcal{C}$ is the intersection of the half-line perpendicular to $\pos_{i-1}\pos_i$ emanating from $\pos_i$ and going outside from the convex hull of $\P$.\nicolasB{I changed $\varphi$ by $\psi$ to avoid confusion with the golden ratio.}


\begin{restatable}{lemma}{Projection}\label{lem:disk projection}
    Let $\P^*$ be a set of points in convex position within the unit disk, and let $\mathcal{C}$ be circle of unitary radius.
    The projection $\psi : \P^* \mapsto \mathcal{C}$ is such that for any two points $\pos_i, \pos_j$, $|\psi(\pos_i)\psi(\pos_j)| \ge |\pos_i\pos_j|$.
\end{restatable}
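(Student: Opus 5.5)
The plan is to move the points onto the circle one at a time, and to check that each single move cannot decrease any pairwise distance. Fix the cyclic order $\pos_1,\dots,\pos_m$ of $\P^*$ on its convex hull, with indices taken modulo $m$. Let $u_i$ be the outward unit normal of the hull edge $\pos_{i-1}\pos_i$, so that $\psi(\pos_i)=\pos_i+t_i u_i$ for some $t_i\ge 0$. Because $\P^*$ is in convex position, the line through $\pos_{i-1}\pos_i$ is a supporting line of the hull, and therefore
\[
(\pos_i-\pos_j)\cdot u_i \;\ge\; 0 \qquad\text{for all } j .
\]

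\textbf{Step 1 (moving one point).} Move $\pos_i$ along its ray while every other point stays fixed. The derivative of $|\pos_i+su_i-\pos_j|^2$ with respect to $s$ is
\[
2(\pos_i-\pos_j)\cdot u_i+2s \;\ge\; 0 ,
\]
so the distance from the moving point to every fixed $\pos_j$ is nondecreasing. This handles all pairs in which only one endpoint has been moved.

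\textbf{Step 2 (both endpoints moved).} For a pair $i,j$, first move $\pos_i$ to $\psi(\pos_i)$; by Step 1 this gives $|\psi(\pos_i)\pos_j|\ge|\pos_i\pos_j|$. Then move $\pos_j$ along $u_j$ while $\psi(\pos_i)$ stays fixed. By the same derivative computation, it suffices to have
\[
(\pos_j-\psi(\pos_i))\cdot u_j \;\ge\; 0 ,
\]
that is, $\psi(\pos_i)$ must lie on the inner side of the supporting line of the edge $\pos_{j-1}\pos_j$.

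\textbf{Main obstacle.} Step 2 is the step I expect to be hardest, because $\psi(\pos_i)$ lies outside the hull and the inequality can fail. To handle it I would first check whether at least one of the two processing orders works: moving $\pos_i$ first requires the condition above, and moving $\pos_j$ first requires $\psi(\pos_j)$ to lie on the inner side of the supporting line of the edge $\pos_{i-1}\pos_i$. I would try to show that one of these always holds, using the cyclic order of the normals $u_i$ and the fact that both rays end on the same circle.

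If neither order works, the fallback uses that $t\mapsto|\pos_j+tu_j-\psi(\pos_i)|$ is a convex function of $t$. Its value at the endpoint $t=t_j$ is a chord length of the unit circle, since $\psi(\pos_j)$ and $\psi(\pos_i)$ both lie on $\mathcal C$. I would then compare that chord with $|\pos_i\pos_j|$ directly. The route would be to use that the segment $\pos_i\pos_j$ lies inside the disk, and that its endpoints are pushed outward in directions whose angular order is compatible with the order of the endpoints, so the two rays do not cross.

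Once Step 2 holds for every pair, and by symmetry of the argument in $i$ and $j$, we obtain $|\psi(\pos_i)\psi(\pos_j)|\ge|\pos_i\pos_j|$ for all $i,j$, which is the statement.
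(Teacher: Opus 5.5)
Your proposal has a genuine gap at Step 2, and the sequential, monotone scheme you set up cannot be completed. The second move is not monotone in general. Take $\pos_j=\pos_{i+1}$. Then $(\pos_{i+1}-\pos_i)\cdot u_{i+1}=0$, so the derivative at $s=0$ of $|\pos_{i+1}+su_{i+1}-\psi(\pos_i)|^2$ equals $-2t_i\,u_i\cdot u_{i+1}$. This is negative as soon as consecutive normals make an acute angle and $t_i>0$.

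Both processing orders fail on a single configuration: the regular hexagon of radius $0.1$ centred at $\pos_0$, with adjacent vertices $\pos_i,\pos_{i+1}$.
\begin{itemize}
\item Moving $\pos_i$ first: the derivative above is $-t_i<0$.
\item Moving $\pos_{i+1}$ first: you would need $(\pos_i-\pos_{i+1})\cdot u_i\ge t_{i+1}\,u_{i+1}\cdot u_i$. But the left side is $0.1\cos(\pi/6)\approx 0.087$, while the right side is about $0.91\cdot 0.5\approx 0.46$.
\end{itemize}
So ``one of the two orders works'' is false. Your fallback does not close the gap either. Convexity of $t\mapsto|\pos_j+tu_j-\psi(\pos_i)|$ gives no lower bound at $t=t_j$ from its value at $t=0$. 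The instruction to ``compare the chord with $|\pos_i\pos_j|$ directly'' simply restates the goal.

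The missing idea is to avoid moving the points one after the other. Instead, use the supporting-line inequality you already have for both endpoints at once, projected on $e=(\pos_i-\pos_j)/|\pos_i\pos_j|$. Your inequality gives $u_i\cdot e\ge 0$, and with the roles of $i$ and $j$ exchanged it gives $u_j\cdot e\le 0$. Hence
\[
|\psi(\pos_i)\psi(\pos_j)| \;\ge\; (\psi(\pos_i)-\psi(\pos_j))\cdot e \;=\; |\pos_i\pos_j| + t_i\,u_i\cdot e - t_j\,u_j\cdot e \;\ge\; |\pos_i\pos_j| .
\]
This is exactly the paper's argument. There, $\psi(\pos_i)$ and $\psi(\pos_j)$ lie in the two closed half-planes bounded by the lines through $\pos_i$ and $\pos_j$ perpendicular to $\pos_i\pos_j$, on the outer sides of the slab between them. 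Any segment joining these two half-planes is at least as long as the slab is wide. The bound holds for arbitrary $t_i,t_j\ge 0$, so the fact that the images lie on $\mathcal C$ plays no role.
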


\begin{proof}
    Let $\pos_i, \pos_j$ be two points in a convex set.
    We start by a technical claim.
    Let $\mathcal{H}_i$ (\emph{resp.} $\mathcal{H}_j$) be the half-space bounded by the perpendicular line $\mathcal{L}_i$ (\emph{resp.} $\mathcal{L}_j$) to the segment $\pos_i\pos_j$ going through $\pos_i$ (\emph{resp.} $\pos_j$) that does not contain $\pos_j$ (\emph{resp.} $\pos_i$).
    \begin{claim}\label{half-space claim}
        Let $\pos_i'$ be any point in $\mathcal{H}_i$ (\emph{resp.} $\pos_j' \in \mathcal{H}_j$). We have $|\pos_i'\pos_j'| \geq |\pos_i\pos_j|$.
    \end{claim}

    \begin{claimproof}
        Let $\pos_i'' \in \mathcal{L}_i$ and $\pos_j'' \in \mathcal{L}_j$ be the intersection points of the segment $\pos_i'\pos_j'$.
        Since $\mathcal{L}_i$ and $\mathcal{L}_j$ are parallel, the distance between $\pos_i''$ and $\pos_j''$ is at least $|\pos_i\pos_j|$.
        Thus $|\pos_i'\pos_j'| \geq |\pos_i''\pos_j''| \geq |\pos_i\pos_j|$.
    \end{claimproof}

    \begin{claim}\label{projection claim}
        For each pair of points $\pos_i, \pos_j \in \P$, $|\psi(\pos_i)\psi(\pos_j)| \geq |\pos_i\pos_j|$.
    \end{claim}

    \begin{claimproof}
        Let $\pos_i' = \psi(\pos_i)$.
        Given $\pos_j$, let $\theta = \angle \pos_j\pos_i\pos_{i-1}$.
        For convenience, let $A$ be any point on the perpendicular line to $\pos_j\pos_i$ going through $\pos_i$ such that $\angle \pos_j\pos_i A = \frac{\pi}{2}$.
        Since $\P$ is a convex point set, $\theta \in [0, \pi]$.
        We have $\angle A \pos_i \psi(\pos_i) = \angle \pos_j \pos_i \psi(\pos_i) - \angle \pos_j \pos_i A = \angle \pos_j \pos_i \pos_{i-1} + \angle \pos_{i-1} \pos_i \psi(\pos_i) + \angle \pos_j \pos_i A = \theta + \frac{\pi}{2} - \frac{\pi}{2} = \theta \in [0, \pi]$.
        It turns out that $\psi(\pos_j)$ is not in the same half-space bounded by the line $(A, \pos_i)$ containing $\pos_j$.
        Similarly, $\psi(\pos_j)$ is not in the half-space bounded by the parallel of $(A, \pos_i)$ going through $\pos_j$ containing $\pos_i$.
        We then use the previous Claim~\ref{projection claim} to get that $|\psi(\pos_i)\psi(\pos_j)| \geq |\pos_i\pos_j|$.
    \end{claimproof}
\end{proof}

\subparagraph{Cyclic ordering.}

In convex configuration, the sleeping robots can be \emph{cyclically ordered} 
by increasing angular distance from a fixed point.
We say that two robots are \emph{consecutive} if they are consecutive within any cyclic order.

\subparagraph{\linear algorithm.}

We describe the subroutine \linear that is used for awakening a set of consecutive robots.
Given a set of $n$ initial positions located on an arc of length $\alpha$ that are cyclically ordered as $\pos_1, \pos_2, \ldots, \pos_{n}$, where $\pos_1$ is supposed initially awake, the algorithm goes as follow:
For $1 \leq i \leq n-1$, the robot from $\pos_i$ wakes up the robot from $\pos_{i+1}$ and then returns to its position.

\begin{lemma}\label{lem:linear}
    Let $\Tlinear(\alpha, \beta)$ be the makespan of \linear applied on robots located on an arc of length $\alpha \leq \pi$ with a maximal angular distance less than $\beta$.
    If $\beta > 2\pi/3, \Tlinear(\alpha, \beta) \leq\alpha-2 \pi/3 + 2 \sqrt{3}$, and $\Tlinear(\alpha, \beta) \leq \alpha-\beta+ 2\chord(\beta)$ otherwise.
\end{lemma}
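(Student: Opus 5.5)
We will bound the completion time of every robot in the \linear schedule. Write $\theta_i$ for the angular gap between $\pos_i$ and $\pos_{i+1}$, so that $\sum_{i=1}^{n-1}\theta_i \le \alpha$ and $\theta_i \le \beta$ for every $i$. Robot $r_1$ is awake at time $0$. Robot $r_{i+1}$ is woken at time $\sum_{j\le i}\chord(\theta_j)$. Robot $r_i$ travels $\pos_i\to\pos_{i+1}\to\pos_i$ after its own wake-up. Hence
\[
\returntime(r_i)=\sum_{j<i}\chord(\theta_j)+2\chord(\theta_i),
\]
and the last robot $r_n$ does not move. The makespan is therefore
\[
\max_i \Big(\sum_{j<i}\chord(\theta_j)+2\chord(\theta_i)\Big).
\]
Since $\chord(\theta)\le\theta$, this is at most
\[
\max_i \Big(\sum_{j<i}\theta_j+2\chord(\theta_i)\Big).
\]
Using $\sum_{j<i}\theta_j\le \alpha-\theta_i$ (the gaps after $i$ are nonnegative), each term is at most $f(\theta_i):=\alpha-\theta_i+2\chord(\theta_i)$. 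It remains to maximize $g(\theta)=2\chord(\theta)-\theta=4\sin(\theta/2)-\theta$ over $\theta\in[0,\beta]$.

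Differentiating gives $g'(\theta)=2\cos(\theta/2)-1$. This vanishes at $\theta=2\pi/3$, is positive on $[0,2\pi/3)$, and is negative afterwards. Thus $g$ is increasing on $[0,2\pi/3]$ and attains its global maximum $g(2\pi/3)=2\sqrt3-2\pi/3$ there.

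We then split into the two cases of the statement. If $\beta\le 2\pi/3$, every gap satisfies $\theta_i\le\beta$ in the increasing region, so $g(\theta_i)\le g(\beta)$. This yields $\Tlinear(\alpha,\beta)\le \alpha-\beta+2\chord(\beta)$. If $\beta>2\pi/3$, we use the global maximum and obtain $\alpha-2\pi/3+2\sqrt3$. Note that $\alpha\le\pi$ ensures every gap is at most $\pi$, so the chord formula $2\sin(\theta/2)$ is valid and monotone on the relevant range.

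The only delicate point is the crude step that replaces the earlier chords by arcs and drops the later gaps. One must check this does not need the stronger "maximal gap less than $\beta$" hypothesis beyond restricting the domain of $g$, and it does not. So the main obstacle is simply the monotonicity analysis of $g$, which is elementary.
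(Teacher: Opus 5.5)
Your proposal is correct and follows the same route as the paper: you write the completion time of $r_i$ as the previous chords plus twice the next chord, replace chords by arcs to get $\alpha-\theta_i+2\chord(\theta_i)$, and use that this function increases up to $2\pi/3$ and decreases afterwards. Your explicit derivative computation, and the tacit observation that the non-moving last robot is dominated by its predecessor, simply fill in details the paper leaves implicit.
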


\begin{proof}
    Assume that the angular distance between consecutive robots positions are $\alpha_1,\alpha_2,\ldots,\alpha_k$ with $\alpha=\sum_{i=1}^k \alpha_i$.
    Note that the wake-up time for the $j$-th sleeping robot is $\sum_{i=1}^{j<k} \chord(\alpha_i)$ and its \completionTime{} is $(\sum_{i=1}^{j<k} \chord(\alpha_i))+2\chord(\alpha_{j+1}) < \sum_{i=1}^{j<k} \alpha_i + 2 \chord(\alpha_{j+1}) \leq \alpha - \alpha_{j+1}+2 \chord(\alpha_{j+1})$.
    Function $f(x)=\alpha-x+2\chord(x)$ is increasing up to $x=2\pi/3$ and then is decreasing.
    Thus, we have $\Tlinear(\alpha, \beta) \leq \alpha-2 \pi/3 + 2 \sqrt{3}$ and for $\beta \leq 2\pi/3, \Tlinear(\alpha, \beta) \leq \alpha-\beta+ 2\chord(\beta)$.
\end{proof}

\begin{figure}[htbp!]
    \centering
    \includegraphics[width=\textwidth/3]{fig/convex-algo.drawio.pdf}
    \caption{Bi-colored wake-up tree resulting from the Convex Algorithm.
        Highlighted arcs are woken up using \linear}%
    \label{fig:convex-algo-appendix}
\end{figure}

We are now ready to focus on the main statement of the section.
Our proof relies on the analysis of the makespan of a wake up strategy computed by the following algorithm, \convex.

\paragraph*{\convex Algorithm.}

Let $\pos$ and $\pos'$ be two consecutive position in $\P$ such that the angular distance from $\pos$ to $\pos'$, denoted $\beta$, is maximal.
We denote by $r_1$ the robot whose initial position $\pos_1$ is the closest to the bisector of the angle $\beta$ (this angle being empty by definition, $\pos_1$ must sit on the arc diametrically opposed to the arc $(\pos \pos')$).
In the latter we consider a cyclic ordering of the positions in $\P$ starting from $\pos_1$, where the $i$-th position is denoted $\pos_i$ and the corresponding robot $r_i$.
We denote by $k$ the integer such that $\pos = \pos_k$ and $\pos' = \pos_{k+1}$.

The location of $\pos_1$ partitions the arc $(\pos_{k+1}\pos_k)$ into two sectors $(\pos_1\pos_k)$ and $(\pos_{k+1}\pos_1)$.
In the following we assume that $|\pos_1\pos_k| \geq |\pos_{k+1}\pos_1|$.
If it is not the case, the strategy is reflected with respect to the axis bisecting $\beta$.

The robot initially at $\pos_0$ wakes up $r_1$ and moves toward the next robot in the direction of $\pos_k$, denoted $\pos_2$, while $r_1$ executes \linear on the arc $(\pos_1\pos_{k+1})$.
We then subdivide $(\pos_2\pos_k)$ into two sub-arcs $(\pos_2\textsc{Q})$ and $(\textsc{Q}\pos_k)$, where $\textsc{Q}$ is a geometric point at a angular distance $\alpha_q=1+2\sqrt{2} + 2 \pi/3-2\sqrt{3} \approx 2.45$ from $\pos_1$ in the direction of $\pos_k$.
In arc $(\pos_2\textsc{Q})$, \linear is computed starting from robot $r_2$, while the second robot, $r_0$, wakes up the first, thus closest, robot $r_j$ in arc $(\textsc{Q}\pos_k)$.
Finally, $r_0$ returns to the center of the disk, while $r_j$ runs \linear in the arc $(\pos_j\pos_k)$.
Note that if $\alpha_q$ is larger than the length of the arc $(\pos_1,\pos_k)$ $r_0$ directly goes to its initial position after having woken up $r_2$.

\convexUB*

\begin{proof}

    In our proof we assume that robots lies on the circle of unitary radius centered at $\pos_0$.
    If it is not the case, by \cref{lem:disk projection} we can project points of $\P\setminus \{\pos_0\}$ onto the circle while ensuring that pairwise distances between points are either preserved or increased.
    Hence we can use \convex to compute a strategy for the projected points and apply it to the original instance, the upper bound will still hold.

    We assume that $(\pos_k, \pos_1) \geq (\pos_1,\pos_{k+1})$, otherwise, and as mentioned in the algorithm description, the strategy and the analysis are mirrored along the bisector of the angle $\beta$.

    We have to bound the makespan for four types of robots:
    \begin{enumerate}
        \item The trajectory of the initial awake robot $r_0$ can be represented by the sequence of five points $(\pos_0, \pos_1, \pos_2, \pos_j, \pos_0)$.
              It makes at most $4$ hops since it may happen that $\pos_2$ or $\pos_j$ does not exist.
              In this case, $r_0$ can do $3$ hops and even $2$ hops if $\pos_1=\pos_k$.
              In any case, the first and the last hops are of length $1$: $|\pos_0\pos_1| + |\pos_j\pos_0| = 2$.
              By construction, the arc $(\pos_1\pos_k)$ has a length smaller or equal to $\pi$, and $|\pos_1\pos_2| + |\pos_2\pos_j| \leq |\pos_1\pos_k|$.
              If we have $4$ hops (or $3$), we can bound the length of $|\pos_1\pos_j|$ by \cref{lem:sum of chords}: $|\pos_1\pos_j|$ is smaller than $2 \chord(\pi/2) = 4 \sin(\pi/4) = 2 \sqrt{2}$. \nicolasB{This Lemma is used before being stated, maybe we should state it before ?}

        \item Robots $\set{r_1, r_{k+1}, r_{k+2}, \dots,r_n}$ lying on the arc $(\pos_{k+1},\pos_1)$.

              The makespan of this arc relies on the analysis of the \linear algorithm.
              Since $r_1$ is awakened at time $1$, we get that the makespan of the arc is $1 + \Tlinear(\alpha', \beta')$ where $\alpha'$ is the length of $(\pos_{k+1}, \pos_1)$ and $\beta'$ is the maximal angular distance between two robots on the arc.
              On one hand, we have $\beta' \leq \alpha' \leq  \pi - \beta/2$ by definition.
              On the other hand, $\beta' \leq \beta$.
              Thus we have $\pi - \beta'/2 \geq \pi - \beta/2 \geq \alpha' \geq \beta' \Leftrightarrow \beta' \leq 2\pi/3$

              From \cref{lem:linear} in this case we have that $\Tlinear(\pi - \beta/2, \min\{\beta, \pi - \beta/2\})$ is at most $\pi-\beta/2-\beta'+2\chord(\beta') \leq \pi - 3/2 \beta'+2\chord(\beta')$.
              The derivative of $f(x)=y-3/2x+4sin(x/2)$ is $-3/2+2\cos(x/2)$ is equal to zero for $x=x_0=\arccos(3/4) \equiv 1.44$.
              $f$ is increasing up to $x=x_0$ and then is decreasing.
              Thus $1 + \pi - 3/2 \beta'+2\chord(\beta') \leq 1+ \pi-3/2 x_0 +2 \chord(x_0) < 4.62$.

        \item For robots $\{r_2, \ldots, r_{j-1}\}$ lying on $(\pos_1, \textsc{Q})$.
              Recall that $\alpha_q = 1+2\sqrt{2} + 2 \pi/3-2\sqrt{3}$ is the angular distance from $\pos_1$ to $\textsc{Q}$, and that $r_2$ is awakened at time $1 + |\pos_1\pos2|$.
              We get that the \completionTime{} is $1 + |\pos_1\pos_2| + \Tlinear(\alpha_q - |\pos_1\pos_2|, \beta)$.

              From \cref{lem:linear}, $\Tlinear(\alpha_q - |\pos_1\pos_2|, \beta) \leq \alpha_q - |\pos_1\pos_2| -2\pi/3+2\sqrt{3}$.
              Hence the total \completionTime{} is at most $1  + |\pos_1\pos_2|+ 1+2\sqrt{2} + 2 \pi/3-2\sqrt{3} - |\pos_1\pos_2| -2\pi/3+2\sqrt{3} = 2 + 2\sqrt{2}$.

        \item For robots $\{r_j, \ldots, r_k \}$ lying on the arc $(\textsc{Q} \pos_k)$.
              We upper bound the wake up time of $r_j$ by approximating the movement of $r_0$ as if it does a detour before waking up $r_j$: we consider the geometric path $\pos_0, \pos_1, \pos_2, \textsc{Q}, \pos_j$, that is longer than actual $r_0$'s path.
              Then we upper bound the makespan of $(\pos_j,\pos_k)$ by considering the makespan of the arc $(\textsc{Q} \pos_k)$ as if $r_0$ acts as the first awaked robot, located on $\textsc{Q}$.

              By \cref{lem:sum of chords}, the longest path $r_0$ could possibly make on from $\pos_1$ to $\textsc{Q}$ with $1$ hop consists of two chords of same length and has total length $2 \chord(\alpha_q/2) < 2.31$.

              The length $\alpha''$ of the arc $(\textsc{Q} \pos_k)$ is at most $\pi-\alpha_q=\pi/3-1-2\sqrt{2}+2\sqrt{3} <0.69$.
              Hence the largest empty sector within $(\textsc{Q}, r_k)$ has angle $\beta '' \leq \alpha ''$ which is less than $2\pi / 3$.
              By \cref{lem:linear} the \completionTime{} is bounded by $\alpha '' - \beta '' + 2 \chord(\beta'')$ which is maximal for $\beta '' = \alpha ''$ and $\alpha '' < 0.69$.
              Thus the \completionTime{} of robots $r_j$ to $r_k$ is less than $|\pos_0\pos_1|+|\pos_1 \pos_2|+|\pos_2\textsc{Q}|+2 \chord(0.69) < 4.66 < 2+2\sqrt{2}$.
    \end{enumerate}
\end{proof}

\begin{lemma}\label{lem:sum of chords}
    Let $\theta_1,\dots,\theta_k$ be the non-negative angles of $k$ chords with $\sum_{i=1}^k \theta_k \in [0, 2\pi]$. The sum of the length of these $k$ chords is maximized whenever $\theta_1 = \cdots = \theta_k$.
\end{lemma}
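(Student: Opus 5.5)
Our plan is to prove the statement through the concavity of the chord-length function. On $[0,2\pi]$ we have $\chord(\theta) = 2\sin(\theta/2)$, and its second derivative $-\tfrac12\sin(\theta/2)$ is non-positive there. So $\chord$ is concave on $[0,2\pi]$.

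Fix the total angle $\Theta = \sum_{i=1}^k \theta_i \in [0,2\pi]$. Each $\theta_i$ is non-negative and their sum is at most $2\pi$, so every $\theta_i$ lies in $[0,2\pi]$, the domain where concavity holds. Jensen's inequality then gives
\[
\frac1k\sum_{i=1}^k \chord(\theta_i) \;\le\; \chord\!\Big(\frac1k\sum_{i=1}^k\theta_i\Big) = \chord(\Theta/k).
\]
Hence $\sum_i \chord(\theta_i) \le k\,\chord(\Theta/k)$. Equality is attained when all angles equal $\Theta/k$, so this choice maximizes the sum of chord lengths for the given total angle, as claimed.

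For the use made of the lemma in the proof of \cref{th:convex_UB}, we also note a monotonicity fact. Both $\Theta \mapsto k\,\chord(\Theta/k)$ and $k \mapsto k\,\chord(\Theta/k)$ are nondecreasing, since $\Theta/k \le 2\pi$ and $x\mapsto \sin(x)/x$ is decreasing on $(0,\pi]$. Consequently, a path of two hops along an arc of length at most $\pi$ has length at most $2\chord(\pi/2)=2\sqrt2$. Similarly, the bound $2\chord(\alpha_q/2)$ holds for two hops spanning angle $\alpha_q$.

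No step is a real obstacle. The only point needing care is checking that all angles lie in $[0,2\pi]$, where concavity holds, and this follows directly from non-negativity together with the sum constraint.
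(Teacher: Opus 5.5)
Your proof is correct and is essentially the paper's own argument. Like the paper, you use concavity of $\chord$ on $[0,2\pi]$ (every $\theta_i$ lies there by non-negativity and the sum constraint), then Jensen's inequality, with equality when all angles are equal. One caveat about your extra remark: $\Theta \mapsto k\,\chord(\Theta/k)$ is nondecreasing only for $\Theta \le k\pi$, so it fails for $k=1$ with $\Theta>\pi$. This does not affect the lemma or the $k=2$ uses in the proof of \cref{th:convex_UB}.
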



\begin{proof}
    
Let $C$ be the maximum sum of the length of $k$ chords of non-negative angles summing up to some $\theta \in [0,2\pi]$. We want to show that $C = k\cdot \chord(\theta/k)$, i.e., $C$ can be achieved when the $k$ chords have length $\chord(\theta/k)$ each, or equivalently whenever $\theta_i = \theta/k$ for all $i$.

On one side, if all $\theta_i$ are equals, then $\sum_i \chord(\theta_i) = \sum_i \chord(\theta/k) = k \cdot \chord(\theta/k)$, and thus $C \ge k \cdot \chord(\theta/k)$.

On the other side, $\theta \in [0, 2\pi]$ and each $\theta_i \ge 0$, thus $\theta_i \in [0, 2\pi]$. Moreover, the function $x\mapsto \chord(x) = 2 \sin(x/2)$ is concave for $x \in [0, 2\pi]$, since $\sin(u)$ is concave for $u\in [0,\pi]$ as its second derivative $-\sin(u)$ is non-positive on $[0,\pi]$. Thus, by Jensen's inequality on concave functions we have, for all $\theta_i\in [0,2\pi]$ summing up to $\theta \in [0,2\pi]$:
$$
\chord\pare{\frac{1}{k} \sum_{i = 1}^k \theta_i} ~\ge~ \frac{1}{k} \sum_{i=1}^k \chord(\theta_i)  \quad\Longleftrightarrow \quad \sum_{i=1}^k \chord(\theta_i) ~\leq~ k\cdot \chord(\theta/k)
$$
In particular, the last inequality holds for $\theta_i$'s maximizing $\sum_i \chord(\theta_i)$, i.e., when $\sum_i \chord(\theta_i) = C$. Thus $C \le k \cdot \chord(\theta/k)$.

Therefore $C = k \cdot \chord(\theta/k)$, that completes the proof.
\end{proof}

\upperboundSmall*

\begin{proof}
    For $n=3$ the proof is based on the following algorithm:
    Let $\pos_i, \pos_j \in \P\setminus\{\pos_0\}$ be a pair of positions such that the measure $\alpha$ of the angle $\angle \pos_i \pos_0 \pos_j$ is the smallest.
    This measure is at most $ 2 \pi/3$.
    W.l.o.g, we assume that $\pos_i = \pos_1$ and $\pos_j = \pos_2$.
    Robot $r_0$ starts by waking up $r_1$ on $\pos_1$, then wakes up $r_3$ and returns to $\pos_0$.
    The length of this trajectory is below $4$.
    When $r_1$ wakes up (at time at most $1$), it goes to $\pos_2$ and returns.
    Its \completionTime{} is given by the sum of its wake-up time and the length of the path  $\pos_1,\pos_2,\pos_1$.
    It is upper bounded by $1+2 \chord(\alpha) \leq 1+2\chord(2 \pi/3)=1+2\sqrt{3}$.
    This conclude the proof for $n=3$ since other robots $r_2$ and $r_3$ stays still.

    For $n=4$ if points of $\P$ are in convex positions, the proof is a direct application of \cref{th:convex_UB}.
    Hence we only have to prove it for the non-convex distribution of sleeping robots.
    In fact we use Algorithm \convex and provide the analysis for non-convex distribution when $n=4$.
    With four positions, the empty sector of widest angle has a measure $\beta \geq 2\pi/4$, where the two robots positions on its boundary are denoted $\pos_k$ and $\pos_{k+1}$.
    As the the angular distance between $\pos_{k+1}$ and $\pos_1$ is less than $\pi-\beta/2$ we get that it is at most $3 \pi /4$.
    Let us compute the \completionTime{} of $r_0$:
    $r_0$ first goes to $\pos_1$, the closest robot to the bisector of the opposite sector of the empty sector.
    It will continue toward $\pos_k$ with a potential detour through a remaining sleeping robot's positions $\pos_i$.

    We have two subcases:
    \begin{enumerate}
        \item If $\pos_i$ is within the triangle $(\pos_0,\pos_1,\pos_{k})$, it is $r_0$ that wakes up $r_i$.
              $r_0$'s path is thus $\pos_0, \pos_1, \pos_i, \pos_k, \pos_0$.
        \item If $\pos_i$ is within the triangle $(\pos_0,\pos_1,\pos_{k+1})$, then $r_i$'s awakening is attributed to $r_1$.
              Once $r_1$ wakes up $r_i$, it returns to $\pos_1$ while $r_i$ goes to $\pos_{k+1}$ and returns.
    \end{enumerate}

    We now provide the analysis of the strategy.
    In order to bound each paths' length, we upper bound the distance between the subset of positions that are in a convex configuration by assuming these positions lie on the border of the unit disk.
    We denote by $\pos_1'$ (resp. $ \pos_{k}'$ and $\pos_{k+1}'$) the projections of $\pos_1$ (resp. $\pos_{k}$ and $\pos_{k+1}$) on the border of the unit disk centered in $\pos_0$.

    \emph{Distances within a triangle.} We recall the general observations that given a triangle $ABC$ and a point $M$ contained in $ABC$:
    The distance from $A$ to $M$ is at most $\max\{|AB|, |AC|\}$.
    Furthermore, a path $AM, MB$ has length at most $|AC|+|CB|$.
    Given that $A$ is any point of $ABC$, similar inequalities holds when considering distance between $M$ and $B$ or $C$.

    In the first case, we need to upper bound $r_0$'s path to take the detour into account (note that for $r_1$ the analysis of the convex case holds since the strategy is the same).
    We have
    \begin{align*}
        |\pos_0\pos_1| + |\pos_1\pos_i| + |\pos_i\pos_k| + |\pos_k\pos_0| & \leq |\pos_0\pos_1'| + |\pos_1'\pos_i| + |\pos_i\pos_k'| + |\pos_k'\pos_0| \\
                                                                          & \leq 1 + |\pos_1'\pos_i| + |\pos_i\pos_k'| + 1.
        \intertext{Since $\pos_i$ is located within $(\pos_0,\pos_1,\pos_{k})$, it is also contained in $(\pos_0,\pos_1',\pos_{k}')$, thus:}
        |\pos_0\pos_1| + |\pos_1\pos_i| + |\pos_i\pos_k| + |\pos_k\pos_0| & \leq 2 + |\pos_1'\pos_0| + |\pos_0\pos_k'| \leq 4.
    \end{align*}
    This concludes the analysis of $r_0$'s path.

    In the second case, we upper bound the lengths of $r_1$'s and $r_i$'s paths (for $r_0$'s path, the analysis of the convex case holds since the strategy is the same).
    First, we focus on $r_1$'s path $\pos_1, \pos_i,\pos_1$.
    Since its wake up time is $|\pos_0\pos_1|$, we have that $r_1$ ends it journey at time $|\pos_0\pos_1| + 2\cdot|\pos_1\pos_i|$.
    We have that $|\pos_0\pos_1| \leq 1$ and for $|\pos_1\pos_i|$, given that $\pos_i$ is within the triangle $(\pos_0,\pos_1,\pos_{k+1})$, we use the same argument and for the first case's analysis:
    \begin{align*}
        |\pos_1\pos_i| & \leq \max\{|\pos_1'\pos_{k+1}'|, |\pos_1'\pos_0|\}      \\
                       & \leq \max\{\chord(\angle\pos_{k+1}'\pos_0\pos_1'), 1\}.
        \intertext{By construction, $\angle\pos_{k+1}'\pos_0\pos_1' \leq \pi - \beta/2 \leq 3\pi/4$}
        |\pos_1\pos_i| & < 1.85.                                                 \\
        \intertext{We can conclude on $r_1$'s \completionTime:}
        |\pos_0\pos_1| + 2\cdot|\pos_1\pos_i| \leq 1 + 2\cdot 1.85 \leq 4.7.
    \end{align*}

    The last part of the analysis focus on $r_i$'s \completionTime.
    It is at most $t'=|\pos_0\pos_1'|+|\pos_1'\pos_i|+|\pos_i\pos_{k+1}'|+|\pos_{k+1}'\pos_i|$.
    Let $\textsc{Q}$ be the intersection between segment $[\pos_0 \pos_1']$ and the line $(\pos_i\pos_{k+1}')$.
    Clearly, $|\textsc{Q}\pos_{k+1}'| \leq |\pos_i\pos_{k+1}'|$.
    In order to upper bound $t'$, we upper bound $r_i$'s wake up time as if $r_1$ made a detour through $\textsc{Q}$, and we upper bound $r_i$'s path length by analysing the path $\textsc{Q}\pos_{k+1}\textsc{Q}$.

    Since $|Q\pos_i| + |\pos_i\pos_{k+1}'| = |\textsc{Q}\pos_{k+1}'|$ and $|\pos_{k+1}'\pos_i|\leq |\pos_{k+1}'\textsc{Q}|$ we get that $t' \leq |\pos_0\pos_1'|+|\pos_1'\textsc{Q}| + 2\cdot|\textsc{Q}\pos_{k+1}'|$.
    Using the law of cosines within triangle $(P_0QP'_{k+1})$, $|\textsc{Q}\pos_{k+1}'|$'s length can be expressed as a function of $x=|\pos_1'\textsc{Q}|$.
    \begin{align*}
        (f(x))^2 & = \underbrace{(1-x)^2}_{|\pos_0\textsc{Q}|} + \underbrace{1^2}_{|\pos_0\pos_{k+1}|} - 2(1-x)\cos \alpha\quad\text{where $\alpha = \angle \textsc{Q}\pos_0\pos_{k+1}$} \\
                 & \leq 2 + x^2 - 2x - 2(1 - x)\cos(3\pi / 4)\quad \text{ which is decreasing on $[0,1]$}                                                                                \\
        (f(x))^2 & \leq (f(0))^2 = 2 - 2 \cos(3\pi / 4) \quad\text{ since $x = |\pos_1'\textsc{Q}| \leq |\pos_0\pos_1'| = 1$}                                                            \\
        f(x)     & < \sqrt{3.42} < 1.85.
    \end{align*}

    Thus $r_i$'s \completionTime{} $t'$ is at most $1 + 2 \cdot f(0) < 4.7$.

    It turns out that the non-convex distribution of $4$ sleeping robots leads to a makespan strictly smaller than $2+2\sqrt{2}$. 
\end{proof}

\subsection{Asymptotic Upper Bound}\label{appendix:AUB}

\upperboundnlarge*

\cyril{In the proof just above, I see it is used $r_i$'s "completion time" (makespan?), and here bellow the "return time" $\returntime(r_i)$. It would be nice to uniformize at least this two proofs.}\nicolasB{Both notations were used in the paper. I made a command \\completionTime to uniformize.}

Before turning to the proof of the theorem, we record a corollary of~\cite[Proposition~8]{BCGH24} that will be used in Case~1 below to bound the wake-up time of the sub-cone~$C_{b_i}$. The setting is the following: the initially awake robot is no longer located at the apex of a cone, but at the inner corner of a crown.

\begin{corollary}[adapted from {\cite[Proposition~8]{BCGH24}}]\label{cor:splitcone-inner-corner}
    Let $\alpha \in (0, 2\pi]$ and $w \in [0,1]$, and let $C$ be a crown of angle $\alpha$ and width $w$ in the unit disk centered at $\pos_0$. Let $P$ be a set of $n$ points contained in $C$, and let $p_0 \in P$ be located at an inner corner of $C$ (i.e.\ $|\pos_0 p_0| = 1-w$). One can construct in time $O(n \log n)$ a wake-up tree for $P$ rooted at $p_0$ whose makespan is at most $w + \varphi\,\alpha$.
\end{corollary}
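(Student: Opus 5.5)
The plan is to reduce the crown to an axis-parallel rectangle through a distance non-decreasing ``unrolling'' map, and then invoke~\cite[Proposition~8]{BCGH24} on the rectangle. I am assuming that Proposition~8 says the following: points in a $w\times \ell$ rectangle, with the awake robot at a corner, can be woken in time $O(n\log n)$ with makespan at most $w+\varphi\,\ell$. Since the corollary is stated as adapted from that proposition, I take this (or an equivalent strip/cone form of it) as the available tool.

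First, write every point of $C$ in polar coordinates $(\rho,\theta)$ around $\pos_0$, with $\rho\in[1-w,1]$ and $\theta\in[\theta_0,\theta_0+\alpha]$. Define $\Phi(\rho,\theta)=(\theta-\theta_0,\;\rho-(1-w))$, which maps $C$ onto the rectangle $[0,\alpha]\times[0,w]$. The key inequality is that $\Phi$ does not shrink distances. For two points,
\[
|pq|^2=(\rho_1-\rho_2)^2+2\rho_1\rho_2\bigl(1-\cos(\theta_1-\theta_2)\bigr).
\]
Since $\rho_1\rho_2\le 1$ and $2(1-\cos x)\le x^2$, this is at most $(\rho_1-\rho_2)^2+(\theta_1-\theta_2)^2=|\Phi(p)\Phi(q)|^2$. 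Hence any wake-up tree built on $\Phi(P)$, applied to $P$ with the same combinatorial structure, has makespan no larger than in the rectangle. Moreover, $p_0$, lying on the inner corner $|\pos_0p_0|=1-w$, is mapped to a corner of the rectangle: either $(0,0)$ or $(\alpha,0)$, after a reflection if needed.

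Next, apply the rectangle statement to $\Phi(P)$ with side lengths $w$ (radial) and $\alpha$ (angular), rooted at $\Phi(p_0)$. This gives a tree of makespan at most $w+\varphi\alpha$, which transfers back to $P$ by the previous step. Computing $\Phi$ takes linear time, so the total running time stays $O(n\log n)$.

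The main obstacle is matching the exact form of Proposition~8. If it is stated instead for a cone with apex source, or for a strip where the source sits on a short side, I will fall back to re-running its proof directly on the rectangle. That proof recursively splits the long side at golden-ratio proportions: the current robot wakes the nearest point in the far part, and the two awake robots then recurse on the two parts. By induction on $n$ this yields $w+\varphi\ell$, and the golden ratio enters through $\varphi=1+1/\varphi$. The inductive bookkeeping works unchanged on the rectangle because only the $\ell_2$ distances, which we have bounded, are used.
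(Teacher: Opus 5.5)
Your distance comparison is correct: $|pq|^2=(\rho_1-\rho_2)^2+2\rho_1\rho_2\bigl(1-\cos(\theta_1-\theta_2)\bigr)\le(\rho_1-\rho_2)^2+(\theta_1-\theta_2)^2$. So $\Phi$ never shrinks distances, and trees built on $\Phi(P)$ transfer back to $P$.

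The gap is the lemma this feeds into. The paper uses \cite[Proposition~8]{BCGH24} as a \emph{cone} statement: from an awake robot at the apex $\pos_0$ of a cone of radius $1$ and angle $\alpha$, \splitconestrategy{} has makespan at most $1+\varphi\alpha$, and its first edge goes to a point of $P$ nearest to $\pos_0$. A single robot at a corner of $[0,\alpha]\times[0,w]$ is not a special case of that. Your plan to rerun the proof ``unchanged'' on the rectangle is also not justified. At the apex the first hop is purely radial, so the source reaches the nearest point at whatever angle it sits without paying any angular cost, and the golden-ratio recursion then starts from two awake robots at that point. A lone robot at a corner has no such free first move. In effect, the corner-source rectangle problem is just the crown problem flattened out, so unrolling has not reduced the claim to anything available. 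Your sketched induction (``wake the nearest point in the far part'') is too vague to show the recursion still closes with coefficient $\varphi$.

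The missing idea is to keep the apex rather than unroll. Enclose $C$ in the cone $C^\star$ of apex $\pos_0$, radius $1$ and angle $\alpha$, and run \splitconestrategy{} from $\pos_0$. Every point of $P$ is at distance at least $1-w$ from $\pos_0$, and $p_0$ attains this. With the right tie-breaking, the first edge is therefore $(\pos_0,p_0)$, of length exactly $1-w$. Deleting it leaves a tree rooted at $p_0$ of makespan at most $(1+\varphi\alpha)-(1-w)=w+\varphi\alpha$. Note that this tree begins with two robots at $p_0$: the one arriving from the apex and the one woken there.

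This route also avoids a second problem in your reduction. The hypothesis is only $|\pos_0p_0|=1-w$; the paper's proof treats every point of $P$ on the inner arc as an ``inner corner''. In Case~1 of the proof of \cref{th:upperbound-nlarge}, the corollary is applied to $\pos_i$, the point of $C_b$ closest to $\pos_0$, which in general does not lie on a bounding ray of the half-crown. For such a $p_0$, $\Phi(p_0)=(\theta_{p_0}-\theta_0,0)$ is an interior point of the bottom side, not a corner. So even granting your rectangle lemma, you would only cover the case where $p_0$ is an angular extremity of $C$.
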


\begin{proof}
    Let $C^\star$ be the cone of apex $\pos_0$ and arc-length $\alpha$ at unit radius that contains $C$; the crown $C$ is precisely the annular part of $C^\star$ at distances in $[1-w, 1]$ from $\pos_0$. Apply \splitconestrategy to the point set $P$ inside $C^\star$ with initial awake robot located at the apex $\pos_0$. By~\cite[Proposition~8]{BCGH24}, this constructs in time $O(n\log n)$ a wake-up tree $T'$ rooted at $\pos_0$ for $P$, with makespan at most $1 + \varphi\,\alpha$.

    The first edge of $T'$ goes from $\pos_0$ to a nearest neighbour of $\pos_0$ in $P$. Since every point of $P$ lies at distance at least $1-w$ from $\pos_0$ and $p_0$ lies at distance exactly $1-w$, the nearest neighbours of $\pos_0$ in $P$ are precisely the points of $P$ at distance $1-w$, all of which are inner corners of~$C$ and interchangeable in the analysis. Choosing the tie-breaking rule of \splitconestrategy that picks $p_0$ first (this does not affect the makespan bound of~\cite[Proposition~8]{BCGH24}), the first edge of $T'$ is $(\pos_0, p_0)$, of length $1-w$. Let $T$ be the subtree of $T'$ rooted at $p_0$, obtained by removing $\pos_0$ and its outgoing edge. Then $T$ is a wake-up tree for $P$ rooted at $p_0$, and
    \[
        \mathrm{makespan}(T) = \mathrm{makespan}(T') - |\pos_0 p_0| \le (1 + \varphi\,\alpha) - (1-w) = w + \varphi\,\alpha.
        \qedhere
    \]
\end{proof}

\begin{proof}
    Recall that the initial robot is at the origin and all robots are within the unit disk.
    Let $\alpha_b \in [0.94, 0.95]$ be a parameter whose value will be fixed
    \emph{a posteriori} so as to balance the upper bounds derived below for the
    different types of robots (see equations \eqref{eq:returntime-narrow},
    \eqref{eq:returntime-cr} and \eqref{eq:returntime-rk}). We set
    $\rho_b = \chord(\pi - \alpha_b/2) = 2\cos(\alpha_b/4)$ and
    $\rho_c = \rho_b - 1 \in [0,1]$. The optimum is attained for
    $\alpha_b \approx 0.9439742895$, which gives $\rho_b \approx 1.9445650300$.

    The algorithm is as follows:

    Phase 1:
    Divide the disk into $\sqrt{n}$ cones centered at $\pos_0$ of equal angle
    $\alpha = 2\pi / \sqrt{n}$. Let $C_0$ be the cone containing the most robots.
    W.l.o.g., assume that the bisector of $C_0$ coincides with the negative
    $y$-axis. Wake up all robots in $C_0$ using the \splitconestrategy
    algorithm~\cite[Proposition~8]{BCGH24}.
    Since $C_0$ contains the largest number of robots among $\sqrt{n}$ cones
    partitioning $n$ sleeping robots, it contains at least $\sqrt{n}$ of them,
    which is enough for the $\sqrt{n}-1$ wake-ups to come.

    Phase 2 (see \cref{fig:strat_asymp}(a)): Let $C_b$ be the cone opposite
    $C_0$ with angle $\alpha_b$. Let $Q$ be the intersection of the bisector of
    $C_0$ with the unit circle centered at $\pos_0$.
    Starting from $Q$, define a fan of $\sqrt{n}-1$ \emph{narrow} cones
    \emph{with apex at $Q$}, of equal angle $\beta_n = \pi/(\sqrt{n}-1)$,
    whose union covers the unit disk minus $C_b$.
    Indeed, since $Q$ lies on the unit circle, the unit disk spans an angle
        of exactly $\pi$ from $Q$, hence the total angle $(\sqrt{n}-1)\beta_n = \pi$.
    The radius of each narrow cone is at most the maximum distance from $Q$
    to a point of the unit disk not in $C_b$, which is reached at the edges of
    $C_b$ and equals $\chord(\pi - \alpha_b/2) = \rho_b$.
    For each narrow cone, one of the robots awakened (other than $r_0$) in
    Phase~1 wakes up all sleeping robots in this cone using the
    \splitconestrategy algorithm before returning to its initial position.

    \begin{figure}[htbp!]
        \centering
        \def\SPC{\vspace{-3ex}}
        \begin{subfigure}[b]{0.45\textwidth}
            \incfig{\FigLargeNEmpty}{1}
            \SPC\caption{}%
            \label{fig:LargeNEmpty}
        \end{subfigure}
        \begin{subfigure}[b]{0.45\textwidth}
            \incfig{\FigLargeN}{1}
            \SPC\caption{}%
            \label{fig:LargeN}
        \end{subfigure}

        \begin{subfigure}[b]{0.43\textwidth}
            \incfig{\FigLargeNCaseOne}{1}
            \SPC\caption{}%
            \label{fig:largeNCaseOne}
        \end{subfigure}
        \begin{subfigure}[b]{0.43\textwidth}
            \incfig{\FigLargeNCaseTwo}{1}
            \SPC\caption{}%
            \label{fig:largeNCaseTwo}
        \end{subfigure}
        \caption{(a) Decomposition of the disk. (b) Phase~2 for narrow cones. (c) Case~1 for the big cone. (d) Case~2 for the big cone.}%
        \label{fig:strat_asymp}
    \end{figure}

    For cone $C_b$, we apply a specific wake-up strategy involving $r_0$, distinguishing two cases:
    \begin{itemize}
        \item Case~1 (see \cref{fig:strat_asymp}(c)): Cone $C_b$ contains at
              least one initial position $\pos_i$ at distance at most $\rho_c$
              from the initial robot $r_0$. W.l.o.g.\ assume that $\pos_i$ is
              on the right half of $C_b$. Then $r_k$, one of the robots
              awakened in $C_0$, wakes up $r_i$ before returning to its initial
              position. Then, $r_i$ and $r_0$ split $C_b$ into two
              sub-cones $C_{b_0}$ and $C_{b_i}$, each of opening $\alpha_b/2$
              (with apex $\pos_0$ for $C_{b_0}$ and apex $\pos_i$ for
              $C_{b_i}$), and each wakes up the robots of its sub-cone using
              \splitconestrategy before returning to its initial position.

        \item Case~2 (see \cref{fig:strat_asymp}(d)): Cone $C_b$ does not
              contain any initial position at distance at most $\rho_c$ from
              $\pos_0$. Then all initial positions in $C_b$ are located within
              a crown $C_r$ of angle $\alpha_b$ and width $1-\rho_c$. After
              Phase~1, $r_0$ wakes up the robot $r_i$ that is the closest to
              the left edge of $C_b$. Once this is done, $r_0$ returns directly
              to $\pos_0$ while $r_i$ wakes up all robots in $C_r$ using the
              \crown algorithm. When all robots are awakened, they return to
              their respective initial positions.
    \end{itemize}

    This algorithm has complexity $O(n \log n)$ since each phase uses
    \splitconestrategy and \crown, both of complexity $O(n \log n)$.

    Let us now analyze the makespan.
    Recall that $\mathrm{crown}(\alpha, w)$ denotes a crown of opening $\alpha$
    and width $w$, $\diam(R)$ denotes the diameter of a connected region
    $R \subset \mathbb{R}^2$, and the diagonal of $\mathrm{crown}(\alpha, w)$ is
    $\gchord(\alpha, w) = \sqrt{1 + (1-w)^2 - 2(1-w)\cos\alpha}$. We have
    $\diam(\mathrm{crown}(\alpha, w)) = \max\{\chord(\alpha), \gchord(\alpha,w)\}$.
    The wake-up time of \splitconestrategy on a cone of angle $\alpha$ and
    radius $\rho$, with an awake robot at its apex, is
    $\cone(\alpha, \rho) = \rho + \varphi\alpha$.
    The wake-up time of \crown on a crown of angle $\alpha$ and width $w$,
    with an awake robot on a straight side, is
    $\crown(\alpha, w) = \alpha + (1+\varphi) w$.

    The duration of Phase~1 is $\cone(\beta_n, 1) = 1 + O(\beta_n)$.
    Within this time every awake robot can reach the border of $C_0$; we may
    therefore assume that at the end of Phase~1 all awake robots meet at $Q$
    at time $1 + O(\beta_n)$.

    We analyze the return time of each robot depending on its initial position.

    \begin{itemize}

        \item For a robot $r_l$ in a narrow cone (including $C_0$) that is not involved in the awakening process of cone $C_b$, the time needed to reach $\last_{r_l}$ is given by the wake-up time of a narrow cone: $\cone(\beta_n, \rho_b)$.
              The time needed to return to its initial position is bounded by the diameter of the union $U$ of this single narrow cone and the cone $C_0$, which we bound by $\rho_b + O(\beta_n)$ as follows.
              Let $M$ be the second intersection of the narrow cone's bisector with the unit circle, and consider the triangle $T = \pos_0 Q M$.
              First, since $\pos_0$ is the center of the unit disk and $Q, M$ lie on the unit circle, $|\pos_0 Q| = |\pos_0 M| = 1$; moreover $M$ is a point of the unit disk outside $C_b$, so $|QM| \le \rho_b$. As $\rho_b > 1$, the diameter of $T$ is $\max\set{1, 1, |QM|} = |QM| \le \rho_b$.
              Second, every point of $U$ lies within distance $O(\beta_n)$ of $T$: the cone $C_0$ has apex $\pos_0$, half-angle $\alpha/2 = O(\beta_n)$ and radius at most $1$, hence lies within distance $O(\beta_n)$ of the side $[\pos_0, Q]$; the narrow cone has apex $Q$, half-angle $\beta_n/2$ and radius at most $\rho_b$, hence lies within distance $\rho_b \sin(\beta_n/2) = O(\beta_n)$ of the side $[Q, M]$.
              Combining the two, $\diam(U) \le \diam(T) + O(\beta_n) \le \rho_b + O(\beta_n)$, so that $|\last_{r_l} \pos_{r_l}| \le \rho_b + O(\beta_n)$.
              The return time of $r_l$ is bounded by
              $\returntime(r_l) \leq 1 + O(\beta_n) + \cone(\beta_n, \rho_b) + \rho_b = 1 + 2 \rho_b + O(\beta_n)$.

              Hence, we get:
              \begin{equation}\label{eq:returntime-narrow}
                  \returntime(r_l) \le 1 + 4\cos(\alpha_b/4) + O(\beta_n).
              \end{equation}

        \item For a robot $r_l$ in cone $C_{b_i}$ (including $r_i$) in Case~1.
              Let $y = |\pos_0 \pos_i| \le \rho_c$. Since $\pos_i$ is the robot of $C_b$ closest to the origin, every robot of $C_b$ is at distance $\ge y$ from $\pos_0$.
              After Phase~1, $r_k$ walks from $Q$ to $\pos_i$, covering at most $|Q\pos_i| \le 1 + y$, hence $r_i$ is awakened by time $1 + O(\beta_n) + (1+y) = 2 + y + O(\beta_n)$.

              The set of robots in $C_{b_i}$ is contained in the half-crown $H$ of angle $\alpha_b/2$ and width $1-y$ (the right half of $C_b$ restricted to radii in $[y, 1]$), and $\pos_i$ lies at the inner corner of $H$. By~\cref{cor:splitcone-inner-corner}, we can construct a wake-up tree for these robots, rooted at $\pos_i$, of makespan at most $(1-y) + \varphi\,\alpha_b/2$.

              The return arc $|\last_{r_l}\pos_{r_l}|$ is bounded by the diameter of $H$:
              \[
                  |\last_{r_l}\pos_{r_l}| \le \diam(H) = \diam(\mathrm{crown}(\alpha_b/2, 1-y)) = \max\set{\gchord(\alpha_b/2, 1-y), \chord(\alpha_b/2)}.
              \]

              Summing arrival, wake-up and return, the $y$ contributions cancel:
              \begin{align*}
                  \returntime(r_l) &\le (2+y) + \bigl((1-y) + \varphi\,\alpha_b/2\bigr) + \max\set{\gchord(\alpha_b/2, 1-y), \chord(\alpha_b/2)} + O(\beta_n)\\
                  &= 3 + \varphi\,\alpha_b/2 + \max\set{\gchord(\alpha_b/2, 1-y), \chord(\alpha_b/2)} + O(\beta_n).
              \end{align*}
              For $\alpha_b \in [0.94, 0.95]$ and $y \in [0,\rho_c]$, the bracket is bounded by $1$: indeed $\gchord(\alpha_b/2, 1-y) \le \gchord(\alpha_b/2, 1) = 1$, and $\chord(\alpha_b/2) = 2\sin(\alpha_b/4) \le 2\sin(0.95/4) < 0.47$. The maximum is attained at $y = 0$, where $\gchord(\alpha_b/2, 1) = 1$. Therefore
              \begin{equation}\label{eq:returntime-cbi}
                  \returntime(r_l) \le 4 + \varphi\,\alpha_b/2 + O(\beta_n).
              \end{equation}
              Numerically this is below $4.77$, hence strictly below the other three bounds.

        \item For a robot $r_l$ in cone $C_{b_0}$ (including $r_0$) in Case~1: the wake-up of cone $C_{b_0}$ starts at $1 + O(\beta_n) + 1$ and lasts $\cone(\alpha_b /2,1)$. The return arcs are bounded by the diameter of cone $C_{b_0}$, which is at most 1 for values of $\alpha_b \in [0,1]$.
              Hence $\returntime(r_l)$ is bounded by expression~\eqref{eq:returntime-cbi} as well.

        \item For the robot $r_0$ in Case~2: after Phase~1, $r_0$ goes to $\pos_i$ and then goes back to $\pos_0$. Hence $\returntime(r_0) \leq 1+ O(\beta_n) + 2 + 1 = 4 + O(\beta_n)$.

        \item For a robot $r_l$ in the crown $C_r$ (including $r_i$) in Case~2.
              Robot $r_0$ goes from $Q$ to the left inner corner $K$ of $C_r$ and then starts the wake-up of the crown.
              Since $K$ is at distance $\rho_c$ from $\pos_0$ on a ray making an angle $\pi - \alpha_b/2$ with the ray $\pos_0 Q$, the law of cosines gives
              \[
                  |QK| = \sqrt{1 + \rho_c^2 + 2\rho_c \cos(\alpha_b/2)}.
              \]
              Putting everything together, we get
              \begin{equation}\label{eq:returntime-cr}
                  \returntime(r_l) \leq 1 + \sqrt{1 + \rho_c^2 + 2\rho_c \cos(\alpha_b/2)} + \alpha_b + (1+\varphi)(1-\rho_c) + 2 \sin(\alpha_b/2) + O(\beta_n)~.
              \end{equation}

        \item For the robot $r_k$ that wakes $r_i$ in Case~1: after Phase~1, $r_k$ goes from $Q$ to $\pos_i$, then back to $\pos_k$. By definition of $r_i$ in Case~1, $|Q\pos_i| \leq 1 + y \le \rho_b$ and $|\pos_i\pos_k| \leq 1 + y \le \rho_b = 2\cos(\alpha_b/4)$, hence
              \begin{equation}\label{eq:returntime-rk}
                  \returntime(r_k) \leq 1 + O(\beta_n) + |Q\pos_i| + |\pos_i\pos_k| \le 1 + 4\cos(\alpha_b/4) + O(\beta_n) ~.
              \end{equation}
    \end{itemize}
    The chosen values for $\alpha_b$ and $\rho_c$ balance the return times for robots in the narrow cones, the crown $C_r$, and the robot $r_k$. In other words, they are chosen such that the three expressions~\eqref{eq:returntime-narrow}, \eqref{eq:returntime-cr},  and~\eqref{eq:returntime-rk} are equal.

    Using the fact that $\beta_n = \Theta(1/\sqrt{n}\,)$, one can verify that for all robots the return time is strictly below $4.88913006 + O(1/\sqrt{n}\,)$. This completes the proof.
\end{proof}

\subsection{General Upper Bound}\label{appendix:GUB}

This section is dedicated to present the proofs of the Lemmas used to prove \cref{th:generalUB} in \cref{sec:generalUB}.

\generalUB*

Let us first recall some notations introduced in \cref{sec:generalUB}.
\begin{itemize}
    \item $\pos_r$ is the initial position of $r \in \R$;
    \item $\Bad$ is the set of robot $r$ such that the arc $(\last_r, \pos_r)$ is an $\eps$-bad return arc.
    \item Given a robot $r \in \Bad$:
          \begin{itemize}
              \item $\good_r$ is the \emph{closest good ancestor} of $\last_r$;
              \item $\bad_r$ is the child of $\good_r$ leading to $\last_r$ and belonging to $\eps$-bad pairs;
              \item $\wutree_{\bad_r}$ subtree of $\wutree$ rooted at $\bad_r$.
              \item $\Lset_r$ is the set of the robots in $\wutree_{\bad_r}$
              \item $\best_r$ is the \emph{best ancestor} of $r$, with $\Rbest$ the corresponding robot.
              \item $\worst_{\Rbest}$ is the \emph{worst descendant} of $\best_r$, $\Rworst$ the corresponding robot.
              \item $\Bad_\Rbest$ is the subset of robots in~$\Bad$ who share the same best ancestor~$\Rbest$.
          \end{itemize}
\end{itemize}

\lemmaOnCrown*
\begin{proof}
    Let~$\wutree_{\bad_r}$ be the subtree of~$\wutree$ rooted in~$\bad_r$.

    First, we show that $|\pos_{r}\pos_{x}| > 2-\eps$,
    which implies that~$x$ lies on the $\eps$-crown.
    Recall that by definition, $(\bad_r,r)$ is an $\eps$-bad pair: $\wutime_\wutree(\bad_r)+|\pos_{r}\bad_r|>t + 2 - \eps$.
    In particular, $\wutime_\wutree(\bad_r) > t-\eps$.
    As~$\bad_r$ is an ancestor of~$\pos_x$, $\wutime_\wutree(\bad_r)+|\bad_r \pos_{x}| \leq \wutime_\wutree(\pos_x)$.
    Since $\wutime_\wutree(\pos_x) \leq t$, we have $\wutime_\wutree(\bad_r)+|\bad_r\pos_{x}| \leq t$.
    If~$|\pos_{r}\pos_{x}| \leq 2-\eps$, then by triangle inequality, it holds true that $|\bad_r\pos_{r}| \leq |\bad_r\pos_{x}| + |\pos_{r}\pos_{x}|  \leq 2-\eps + |\bad_r\pos_{x}|$.
    This implies~$\wutime_\wutree(\bad_r)+|\bad_r\pos_{r}| \leq \wutime_\wutree(\bad_r) + |\bad_r\pos_{x}| + |\pos_{r}\pos_{x}| \leq t+2-\eps$, raising a contradiction. 
    Consequently, $|\pos_{r}\pos_{x}| > 2-\eps$, and therefore~$x$ is on the crown.

    We now show that there exists a leaf~$\pos_\ell$ of~$\wutree_{\bad_r}$, such that~$|\pos_{x}\pos_\ell|\leq \eps$.
    If~$\pos_x$ is a leaf, the property clearly holds for~$\pos_\ell = \pos_x$.
    Otherwise, let~$\pos_\ell$ be any leaf of $\wutree_x$. 
    As $\wutime_\wutree(\bad_r) >t-\eps$ and $\wutime_\wutree(\pos_\ell) \leq t$, the wake up time of~$x$ differs by less than~$\eps$ from the wake up time of~$\ell$.
    As~$\pos_\ell$ is a descendant of~$\pos_x$, we conclude that~$|\pos_{x}\pos_\ell| < \eps$.
\end{proof}

\lemmaCrown*
\begin{proof}
    The proof is based on the wake-up time for $\FTP$, noted $\crown(\alpha,w)$, within a crown $C$ of angle $\alpha$ and width $w$, starting from a robot $r_0$ located on a side of the crown given in~\cite{BGHO24}: $\crown(\alpha,w) = \alpha+(1+\varphi)w$.

    We now assume that the initially awake robot can start from any position $\pos_0$ within a crown. We consider the following algorithm: let $\pos_1$ be the position whose angular distance to $\pos_0$, $h$, is the smallest.
    Let $C_0$ be the sub-crown of $C$ containing points at an angular distance smaller to $h$ from $\pos_0$.
    $C$ is partitioned into three crowns $C_1$ (containing $\pos_1$), $C_0$ (containing  $\pos_0$) and $C_2$ the remaining part.
    The angle of $C_0$ is $2h$, and the angles of $C_1$ and $C_2$ are smaller than $\alpha-2h$.
    Let $\pos_2$ be the position within $C_2$ with the smallest angular distance to $\pos_0$, and let $h+h'$ be that angular distance.
    The awakening starts with $r_0$ waking up $r_1$, then $r_2$ and then coming back to $\pos_0$.
    In $C_1$ and $C_2$, we run the algorithm presented in \cite{BGHO24}, starting from $\pos_1$ and $\pos_2$ with $r_1$ and $r_2$, respectively, as source robot.
    This wakes-up both crowns but does not include the returns.
    Yet, these returns are bounded by the diameters of the sub-crowns, being smaller than $\alpha-2h+w$ ($C_1$) and $\alpha-(2h+h')+w$ ($C_2$).

    In our analysis, we consider that a slightly longer path reaches $r_2$, namely path \mbox{$L_2 = \pos_{0},\pos_{1},Q,\pos_{2}$} where
    $Q$ is the geometric point at the same distance from $\pos_0$ than $\pos_1$ but on the boundary of $C_2$. 
    The length of $L_2$ satisfies $|L_2| \leq (w+h) + 2h + (h'+w) \leq 3h+h'+2w$, which induces an upper bound on makespan of $C_2$:
    $M(C_2) \leq |L_2| + \crown(\alpha-(2h+h'),w)+\alpha-(2h+h') + w \leq 3h+h'+2w + 2(\alpha-2h-h') + (1+\varphi)w + \alpha-2h-h'+w \leq 2\alpha - h - 2h' + (4+\varphi)w \leq 2\alpha + (4+\varphi)w$.
    This upper bound also holds for $C_1$ since we have $M(C_1) \leq |\pos_0,\pos_{1}| + \crown(\alpha-2h,w)+\alpha-2h + w \leq w+h + \alpha - 2h + (1+\varphi)w + \alpha - 2h + w \leq 2\alpha + (3+\varphi)w - 3h \leq 2\alpha + (3+\varphi)w$

    Finally, we must compute the \completionTime{} $\returntime(r_0)$ for $r_0$ to return to its initial location, following the path $L_0 =  \pos_{0},\pos_{1},Q,\pos_{2},\pos_{0}$
    $\returntime(r_0) = |L_0| \leq |L_2|+\alpha-h+w \leq 3h+h'+2w + \alpha-h+w \leq \alpha + 2h + h' + 3w$.
    Since $h' \leq \alpha - 2h$ we obtain  $\returntime(r_0) \leq 2\alpha+3w$, and concludes the proof.
    %
    %
\end{proof}

\subparagraph{Trajectories.}%
In \cref{sec:model} we presented as solutions of $\FTRPid$ pairs $(\wutree, \returnarc)$ where $\wutree$ is a wake-up tree of $\P$ and $\returnarc \subseteq \P^2$ is a set of return arcs.
We present below an alternative and quasi-equivalent formalism, which will be helpful in the proof of \cref{lem:best ancestor}.
Given a wake up tree~$\wutree$, we call~$\lambda \colon \wakeuparc \to \{0,1\}$ a~\emph{trajectory labeling} of~$\wutree$, if
(i)~$\lambda(e) = 0$ for the unique outgoing arc~$e$ of the source robot,
(ii)~$\lambda(e_1) \neq \lambda(e_2)$ for each internal node of~$\wutree$ with two outgoing arcs~$e_1$ and~$e_2$, and
(iii)~$\lambda(e_1) = 1$ for each unique outgoing arc of any internal vertex with only one outgoing arc.
Such labeling allows us to specify which robot traverses which arc of the wake up tree.
As a convention, $\lambda(\pos_x,\pos_y) = 0$ means that~$x$ goes to~$y$, whereas~$\lambda(\pos_x,\pos_y) = 1$ means that the robot that woke up~$x$ goes to~$y$ instead.
Note that besides internal vertices with only one outgoing arc in~$\wutree$, each robot follows a path to a leaf vertex.

Without condition (iii), we would have a strict equivalence between the return arc formalism and the trajectory labeling formalism, since
given a solution $(\wutree, \returnarc)$, we can forge $\lambda$ by labeling the arcs such that for any robot $r$, on the path leading from $\pos_r$ to $\last_r$, the first edge is labeled~$0$ and the other one are labeled~$1$.
Conversly, given a trajectory labeling it is easy to find which position a robot ends at, and therefore to define the return arc between that last position and its initial position.
Condition (iii) reduces the number of solution we consider.
Namely, we only consider at first the solutions of $\FTRPid$ in which, apart from self-loops, return arcs always start at a leaf of $\wutree$.
This is necessary to guarantee that the robot initially located on $\last_{\Rbest}$ can actually initiate a wake-up of the $\eps$-crown, since otherwise this robot would already have a different task in the wakeup tree.
Note that it could be that the labelling we eventually produce does not satisfy this third requirement, and this would be consistent with the construction and the proof.

\subparagraph{Trajectories Swap.}%
Let~$\lambda$ be a trajectory labeling of a wake up tree~$\wutree$ and let~$\pos_v$ be an internal vertex of~$\wutree$ with two outgoing arcs~$e_1$ and~$e_2$.
We define the~\emph{trajectory swap of~$\lambda$ at~$\pos_v$} as the trajectory labeling~$\lambda'$ of~$\wutree$ obtained from~$\lambda$ by replacing the label of~$e_1$ and~$e_2$.
Let~$r$ be the robot under~$\lambda$ that wakes up~$v$ and let~$(\last_r,\pos_r)$ and~$(\last_v,\pos_v)$ be the return arcs of~$r$ and~$v$ under~$\lambda$ respectively.
Note that under~$\lambda'$, the return arc of~$r$ is~$(\last_v,r)$ and the return arc of~$v$ is~$(\last_r,v)$.

\subparagraph{Defining $\Rbest$.}
Let us now define $\Rbest$ and its initial position $\best_r$ under the assumption that $\eps < 1$.
Given $r\in\Bad$, the \emph{best ancestor of $r$}, $\best_r$ is the closest ancestor of $\pos_r$ in $\wutree$ such that $(\last_{\Rbest},\best_r)$ is an $\eps$-good return arc and such that $\Rbest$ wakes up $r$ or a robot located at an ancestor of $\pos_r$.
Such an ancestor exists since $r_0 \notin \Bad$ and $\pos_{r_0}$ is an ancestor of all nodes of $\wutree$.
We now justify that $\last_{\Rbest}$ satisfies two conditions.

The first condition is that robot $r'$ located at $\last_{\Rbest}$ does not belong to $\Lset$.
First, note that $\forall r \in \Bad, \Lset_r \cap \Bad = \emptyset$.
Indeed, for a given $r\in \Bad$, $(\bad_r,\pos_r)$ is an $\eps$-bad pair.
This implies that $\wutime_\wutree(\bad_r) > t-\eps$ and, therefore, that $\wutree_{\bad_r}$ has depth at most $\eps$. Since $r$ was the only robot crossing the arc $(\good_r,\bad_r)$, $\wutree_{\bad_r}$ cannot contain any return arc with length greater that $\eps < 2-\eps$.
From this, since $\best_r$ is an ancestor of $r\in\Bad$ it follows that $\best_r \notin \Lset$.
Suppose now that there exists $r''$ such that $r' \in \Lset_{r''}$.
Subtree $\wutree_{\bad_{r''}}$ can not contain both endpoints of the outgoing edge $(\last_{\Rbest},\best_r)$ since it would imply $\Rbest \in \Lset_{r''}$.
So there has to be two distinct outgoing edges from $\wutree_{\bad_{r''}}$: one from $\last_{\Rbest}$ to $\best_r$ and one from $\last_{r''}$ to $\pos_{r''}$.
Lastly we rule out the possibility that $(\last_{\Rbest}, \best_r)$ and $(\last_{r''}, \pos_{r''})$ are in fact an unique arc: this would be the case if $r''$ and $\Rbest$ were the same robot.
However, by definition $\Rbest \not\in \Bad$, and by hypothesis $r'' \in \Bad$, so $\Rbest$ and $r''$ cannot be the same robot.

The second condition is that $\last_{\Rbest}$ is close to robots of $\Lset_r$.
We exhibit one robot, far enough from both $\last_{\Rbest}$ and robots of $\Lset_r$, to guarantee that they lie in the same area of the unit disk.
For every~$\best_r$ being the best ancestor of at least one robot $r$, let~$\Rworst$ located at $\worst_\Rbest$ be the first robot on the path from~$\best_r$ to~$\last_{\Rbest}$ that belongs to $\Bad$.
Such a descendant exists, because either $\Rbest$ wakes up $r$, and therefore $r$ is candidate for being $\Rworst$, either $\worst_\Rbest$ is on the path from $\best_r$ to $\last_\Rbest$ and is an ancestor of $\pos_r$.
Robot~$\Rworst$ necessarily belongs to $\Bad$ by definition of $\best_r$.
We call~$\worst_\Rbest$ the~\emph{worst descendant of~$\best_r$}.
Lemmas~\ref{lem:best ancestor} and~\ref{small cone contains Lr bis} guarantee that $\worst_\Rbest$ is a valid intermediate to guarantee the second condition.

\begin{restatable}{lemma}{lemBestAncestor}%
    \label{lem:best ancestor}
    Let~$\best_r$ be the best ancestor of at least one robot $r$. 
    Then, $|\worst_\Rbest \last_{\Rbest}| > 2-\eps$ and for each leaf~$\pos_\ell$ of~$\wutree_{B_r}$ we have~$|\worst_\Rbest \pos_{\ell}| > 2-\eps$.
\end{restatable}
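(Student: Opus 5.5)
The plan is to derive both inequalities from the extremality of the chosen return arcs $\returnarc$, namely the minimal number of $\eps$-bad return arcs and, among those, the lexicographically largest $\eps$-cost. Both parts use trajectory swaps. Set $w := \worst_\Rbest$. Recall that $\Rworst\in\Bad$, that $w$ lies on the monochromatic path from $\best_r$ to $\last_{\Rbest}$, and that the arc $(\last_{\Rbest},\best_r)$ is $\eps$-good. We may assume $w\neq\best_r$. Otherwise $\Rbest$ would itself be in $\Bad$, which is impossible since its return arc is good. So $w$ is an internal node of $\Rbest$'s trajectory strictly below $\best_r$.

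\emph{First inequality.} Suppose, for contradiction, that $|w\,\last_{\Rbest}|\le 2-\eps$. Perform the trajectory swap at $w$. Then $\Rworst$ continues along the rest of $\Rbest$'s path and returns from $\last_{\Rbest}$. Its new completion time is
\[
\wutime_\wutree(\last_{\Rbest})+|\last_{\Rbest}w| \le t+2-\eps,
\]
so its arc becomes good. Meanwhile $\Rbest$ stops at $w$ and takes over the former return of $\Rworst$, with the arc $(\last_{\Rworst},\best_r)$. If this new arc is good, the number of bad arcs strictly decreases, a contradiction. If it is bad, the number of bad arcs is unchanged. However, the bad robot is now $\Rbest$, at node $\best_r$, which has strictly smaller depth than $w$. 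The $\eps$-cost therefore changes lexicographically in the direction that contradicts the extremal choice. I need to check that the lexicographic order is oriented so that moving a bad arc closer to the root is forbidden, i.e.\ that "largest" is the right reading. If the swap at $w$ is not directly available, because $\Rworst$ is woken at $w$ by $\Rbest$ rather than the path branching there, I would apply the swap at the parent node where $\Rbest$ wakes $\Rworst$.

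\emph{Second inequality.} Let $\pos_\ell$ be a leaf of $\wutree_{\bad_r}$. By Lemma~\ref{lem:rb on crown}, $\pos_\ell$ lies close to the structure of $r$'s trajectory. I would argue as above, using a swap that makes $\Rworst$ finish at $\pos_\ell$. The plan is to relabel along the path from $w$ to $\pos_\ell$, which passes through $\good_r$ and $\bad_r$ because $w$ is an ancestor of $\pos_r$ and hence of $\good_r$. After the relabeling, $\Rworst$'s last position is $\pos_\ell$, and the robot previously ending at $\pos_\ell$ (namely $r$, or one of the robots in $\Lset_r$) receives $\Rworst$'s old endpoint. If $|w\pos_\ell|\le 2-\eps$, then $\Rworst$ becomes good. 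The robots of $\Lset_r$ have short returns of length at most $\eps$, and at most one bad arc is created, at a shallower node. This again contradicts extremality.

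The main obstacle is this bookkeeping: showing that the relabeling produces a valid trajectory labeling and does not create two new bad arcs. I expect to handle it by composing single swaps and tracking which robot ends where.
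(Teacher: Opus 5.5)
Your proof of $|\worst_\Rbest\last_{\Rbest}|>2-\eps$ is the paper's argument: a swap at $w$, then minimality of the number of $\eps$-bad arcs, then lexicographic maximality. Your reading of the order is correct: since $b_k$ is indexed by depth, ``largest'' is exactly what forbids moving a bad arc toward the root. Two small corrections:
\begin{itemize}
\item The fallback to a swap at a parent node is unnecessary and would involve the wrong robots. The swap at $w$ is by definition the one exchanging the continuations of $\Rworst$ and of the robot that woke it, namely $\Rbest$. Moreover $w$ has two outgoing arcs, because a self-loop is never $\eps$-bad.
\item After the swap, $\Rbest$ does not stop at $w$. It follows $\Rworst$'s old branch, which is why it ends with the arc $(\last_{\Rworst},\best_r)$.
\end{itemize}

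The second inequality is not established, and the bookkeeping you defer is where the real content of the lemma lies. Making $\Rworst$ end at $\pos_\ell$ is not an exchange between $\Rworst$ and the robot currently ending at $\pos_\ell$. Let $r^1=r$, let $r^{j+1}$ be the robot that wakes $r^j$, and stop at the robot $r^k$ woken by $\Rbest$. In the generic case $w\neq\pos_{r^k}$, labels must flip at $w$, at $\pos_{r^k},\dots,\pos_{r^1}$, and at some nodes of $\wutree_{\bad_r}$. The effect is a cyclic shift of endpoints: $r^j$ now ends at $\last_{r^{j+1}}$, $r^k$ at $\last_{\Rbest}$, and $\Rbest$ (not the robot at $\pos_\ell$) inherits $\last_{\Rworst}$. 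So up to $k+1$ robots receive new, possibly $\eps$-bad, arcs, and your target of ``at most one new bad arc'' is false in general.

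The missing idea is that every $r^j$ is already in $\Bad$. A robot strictly between $\best_r$ and $\pos_r$ with an $\eps$-good return arc, waking $r$ or an ancestor of $\pos_r$, would contradict the choice of $\best_r$ as the closest such ancestor. Applied to $r^k$, this same fact places $w$ at or above $\pos_{r^k}$ on $\Rbest$'s path, hence makes $w$ an ancestor of $\pos_r$; you assert that without proof.

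With this fact, the argument closes. If $|w\pos_\ell|\le 2-\eps$, the $\eps$-bad robots among the affected ones pass from $\{r^1,\dots,r^k,\Rworst\}$ to a subset of $\{r^1,\dots,r^k,\Rbest\}$; the affected robots of $\Lset_r$ end inside $\wutree_{\bad_r}$ and stay good. So either the count drops, or $\Rworst$ is replaced by the shallower $\Rbest$; both contradict the extremal choice. If $w=\pos_{r^k}$, the count drops outright.

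The paper carries this out incrementally:
\begin{enumerate}
\item swaps inside $\Lset_r$ to obtain the arc $(\pos_\ell,\pos_r)$;
\item swaps up the chain $r^1,\dots,r^k$, each forced to preserve count and cost;
\item a final double swap at $\pos_{r^k}$ and $w$.
\end{enumerate}
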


\begin{figure}[htbp!]
    \centering
    \includegraphics[trim={50 70 50 75}, clip, width=0.7\textwidth]{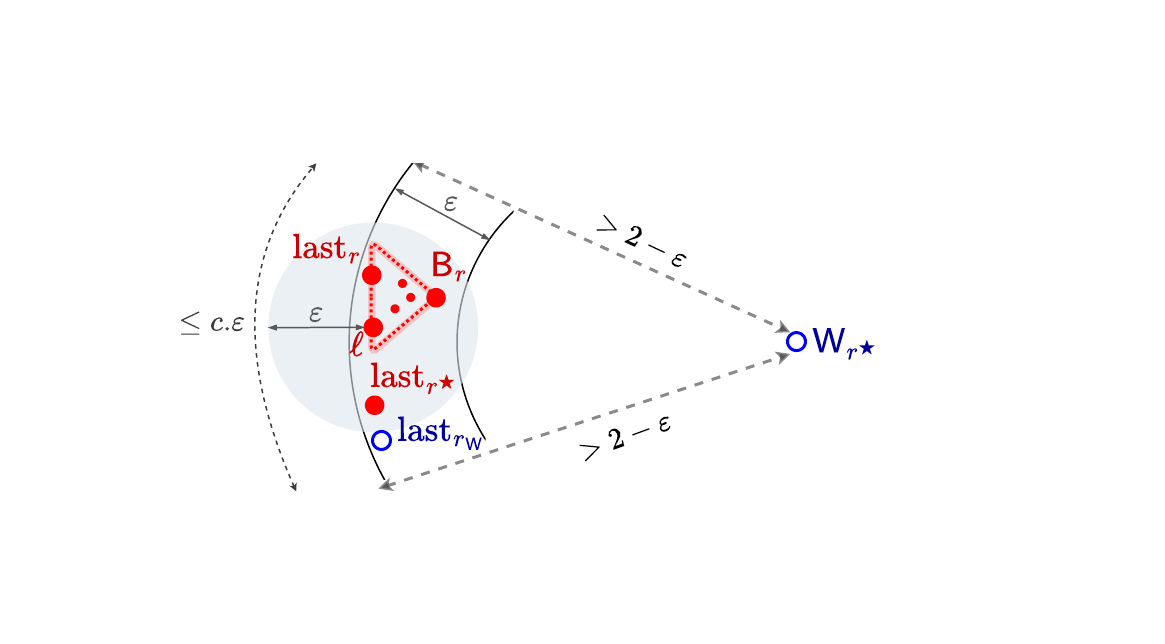}
    \caption{}%
    \label{fig:GUBclose}
\end{figure}

\begin{proof}
    \cref{fig:GUBclose} represents the situation.
    Recall that~$(\last_{\Rworst}, \worst_{\Rbest})$ is an $\eps$-bad return arc and that~$(\last_\Rbest,\best)$ is an $\eps$-good return arc.

    Let us first prove that~$|\worst_\Rbest \last_\Rbest| > 2-\eps$.
    To this end, consider a trajectory swap performed on~$\worst_{\Rbest}$ and~$\best_r$.
    After performing this swap, we have replaced the return arcs~$(\last_{\Rworst},\worst_{\Rbest})$ and~$(\last_\Rbest,\best_r)$ by the return arcs~$(\last_{\Rworst},\best_r)$ and~$(\last_\Rbest,\worst_\Rbest)$.
    We will show that~$(\last_\Rbest,\worst_\Rbest)$ is an $\eps$-bad return arc.
    Note that not both new return arcs can be $\eps$-good, as this would contradict that the previous trajectory labeling achieves the fewest number of $\eps$-bad return arcs.
    So, at least one of the two return arcs is $\eps$-bad.
    If~$(\last_{\Rworst},\best_r)$ is an $\eps$-good return arc, this already implies that~$(\last_{\Rbest},\worst_\Rbest)$ is an $\eps$-bad return arc.
    So assume that~$(\last_{\Rworst},\best_r)$ is an $\eps$-bad return arc.
    If~$(\last_{\Rbest},\worst_\Rbest)$ was an $\eps$-good return arc, then this would contradict the assumption that the previous trajectory labeling achieves the largest $\eps$-cost.
    Hence, $(\last_{\Rbest},\worst_\Rbest)$ is an $\eps$-bad return arc too.
    In both cases, $(\last_{\Rbest},\worst_{\Rbest})$ is an $\eps$-bad return arc, which implies that~$|\worst_{\Rbest} \last_\Rbest| > 2-\eps$.

    Secondly, we denote by $\Bad_\Rbest$ the set of all robot with an $\eps$-bad return arc who share the same best ancestor $\Rbest$.
    Let~$r\in \Bad_\Rbest$ and let~$\pos_\ell$ be a leaf of~$\wutree_{\bad_r}$.
    By definition, $\bad_r$ is woken up by~$r$.
    Moreover, the robots in~$\wutree_{\bad_r}$ have pairwise distance at most~$2\eps$ from each other (see the proof of~\Cref{lem:rb on crown}).
    Since~$\eps < 0.5$, this implies that all return arcs that start from a robot of~$\Lset_r\setminus \{\last_r\}$ are $\eps$-good.
    Now, there is a sequence of trajectory swaps on vertices inside~$\Lset_r$, such that the resulting trajectory labeling contains the $\eps$-bad return arc~$(\pos_\ell,\pos_r)$.
    All the other return arcs that differ between the old and the new trajectory labeling are still $\eps$-good.
    Hence, the new trajectory labeling has the same number of $\eps$-bad return arc and the same $\eps$-cost.

    We now show that~$(\pos_\ell,\worst_\Rbest)$ is an $\eps$-bad pair.
    This then implies that~$|\worst_\Rbest \pos_\ell| > 2-\eps$.
    To this end, first note that~$\pos_r$ and~$\pos_\ell$ are in the subtree of~$\wutree$ rooted in~$\worst_{\Rbest}$.
    Indeed, if the closest common ancestor~$\pos_q$ of~$\pos_r$ and~$\last_\Rbest$ was a proper ancestor of~$\worst_\Rbest$, then (by definition of~$\worst_\Rbest$), $(\last_q,\pos_q)$ would be an $\eps$-good arc and~$\pos_q$ would wake up an ancestor of~$\pos_r$.
    This would contradict the fact that~$\best_r$ is the best ancestor of~$r$.
    Hence, $\pos_r$ and~$\pos_\ell$ are in the subtree rooted in~$\worst_\Rbest$.

    Let us reason by contradiction and assume that~$(\pos_\ell,\worst_\Rbest)$ is an $\eps$-good pair.
    We recursively define a sequence of robots on the path from~$\worst_\Rbest$ to~$\pos_r$ as follows:
    We initialize~$r^1 := r$ and, if $r^i$ is defined and is not woken up by~$\Rbest$, then we define~$r^{i+1}$ as the robot that wakes up~$r^i$.
    Note that~$(\last_{r^i},\pos_{r^i})$ is an $\eps$-bad return arc, as~$\pos_{r^i}$ is on the path between $\pos_r$ and its best ancestor $\best_r$.
    We now prove by induction on $i$ that performing a trajectory swap consecutively at each~$r^j$ with~$1\leq j < i$ leads to a trajectory labeling that has the same number of $\eps$-bad arcs and the same $\eps$-cost as the initial trajectory labeling~$\lambda$, but contains the $\eps$-bad return arcs~$\{(\pos_\ell,\pos_{r^i})\}\cup \{(\last_{r^{j+1}},\pos_{r^j})\mid 1\leq j < i\}$ instead of the return arcs~$\{(\pos_\ell,\pos_r)\}\cup \{(\last_{r^{j+1}},\pos_{r^{j+1}})\mid 1\leq j < i\}$.
    The base case of~$i=1$ is immediate since~$r^1 = r$.
    Let us now suppose that the statement holds for~$i$ and prove it also holds for~$i+1$ under the assumption that~$r^{i+1}$ exists.
    By induction hypothesis, there exists a trajectory labeling~$\lambda^i$ that contains the $\eps$-bad return arcs~$\{(\pos_\ell,\pos_{r^i})\}\cup \{(\last_{r^{j+1}},\pos_{r^j})\mid 1\leq j < i\}$ instead of the return arcs~$\{(\pos_\ell,\pos_r)\}\cup \{(\last_{r^{j+1}},\pos_{r^{j+1}})\mid 1\leq j < i\}$.
    Note that~$r^{i+1}$ still wakes up~$r^i$.
    So, by performing a trajectory swap at~$\pos_{r^i}$, we switch the return arcs of both~$r^i$ and~$r^{i+1}$.
    That is, the resulting trajectory labeling~$\lambda^{i+1}$ now contains the return arcs~$(\pos_\ell,\pos_{r^{i+1}})$ and~$(\last_{r^{i+1}},\pos_{r^i})$.
    These two return arcs must be $\eps$-bad, as the return arcs of both~$r^{i}$ and~$r^{i+1}$ under~$\lambda^i$ are $\eps$-bad by induction hypothesis and we would otherwise obtain a trajectory labeling that has strictly fewer $\eps$-bad return arcs than the our initial trajectory labeling~$\lambda$.
    This proves the statement for~$i+1$.

    Let~$\lambda'$ be the trajectory labeling having the same number of $\eps$-bad arcs and the same $\eps$-cost as the initial trajectory labeling~$\lambda$, and contains the $\eps$-bad return arcs~$\{(\pos_\ell,\pos_{r^k})\}\cup \{(\last_{r^{j+1}}, \pos_{r^j})\mid 1\leq j < k\}$ instead of the $\eps$-bad return arcs~$\{(\pos_\ell,\pos_r)\}\cup \{(\last_{r^{j+1}},\pos_{r^{j+1}})\mid 1\leq j < k\}$, where~$r^k$ is woken up by~$\Rbest$.

    If~$r^k = \Rworst$, we obtain a contradiction to the assumption that~$(\pos_\ell, \worst_\Rbest)$ is an $\eps$-good pair.
    We therefore assume in the following that~$r^k \neq \Rworst$.
    This implies that~$\worst_\Rbest$ is a proper ancestor of~$\pos_{r^k}$ and~$\pos_{r^k}$ is a proper ancestor of~$\last_{\Rbest}$.
    We now perform two trajectory swaps: first on~$\pos_{r^k}$ and then on~$\worst_\Rbest$,
    and let~$\lambda''$ be the resulting trajectory labeling.
    It holds that~$\lambda''$ now contains the return arcs~$(\pos_\ell,\worst_\Rbest)$, $(\last_{\Rworst},\best_r)$ and~$(\last_\Rbest,\pos_{r^k})$ instead of the return arcs~$(\pos_\ell,\pos_{r^k})$, $(\last_{\Rworst},\worst_\Rbest)$ and~$(\last_\Rbest,\best_r)$ from~$\lambda'$.
    Recall~$(\pos_\ell,\pos_{r^k})$ and $(\last_{\Rworst},\worst_\Rbest)$ are $\eps$-bad return arcs and that~$(\last_\Rbest,\best_r)$ is an $\eps$-good return arc.
    As we assume that~$(\pos_\ell,\worst_\Rbest)$ is an $\eps$-good pair, this implies that both~$(\last_{\Rworst},\best_r)$ and~$(\last_\Rbest,\pos_{r^k})$ must be $\eps$-bad return arcs, as the initial trajectory labeling~$\lambda$ minimized the number of $\eps$-bad return arcs.
    This implies that the number of $\eps$-bad return arcs of~$\lambda$ and~$\lambda''$ are identical and that~$(\last_{\Rworst},\best_r)$ is an $\eps$-bad return arc.
    However, this implies that the $\eps$-bad return arc profile of~$\lambda''$ is lexicographically larger than the one of~$\lambda$, thus contradicting the optimality of~$\lambda$.
    As all cases lead to a contradiction, we conclude that~$(\pos_\ell,\worst_\Rbest)$ is an $\eps$-bad pair, which implies that~$|\worst_\Rbest \pos_{\ell}| > 2-\eps$.
    This completes the proof.
\end{proof}

\begin{restatable}{claim}{lemCloseRobots}%
    \label{claim:close robots}
    Let~$0 < \eps < 1/2$ and let~$P$ be a point in a unit disk~$D$.
    Then, the set of all points in~$D$ with distance at least~$2-\eps$ from~$P$ are contained in a part of the crown with angle at most~$\alpha = 4 \arccos(1-\eps/2)$ and width $\eps$.
\end{restatable}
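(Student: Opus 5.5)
Place $D$ centered at the origin $\pos_0$ and let $P$ be any point of $D$. We bound the region $F=\{X\in D : |PX|\ge 2-\eps\}$ in two steps: first radially, then angularly.

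\textbf{Radial bound.} For any $X\in F$, the triangle inequality gives
\[
|PX| \le |P\pos_0| + |\pos_0X| \le 1+|\pos_0X|.
\]
Hence $|\pos_0X|\ge 1-\eps$. The same inequality also yields $|\pos_0P|\ge 1-\eps$, so $P$ lies in the $\eps$-crown as well. In particular $F$ is empty unless $P\neq\pos_0$, so the direction $u=-P/|P|$ is well defined. Thus $F$ lies in the annulus of width $\eps$, and it remains to control the angle of $F$ around $u$.

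\textbf{Angular bound.} Write $X=\rho(\cos\theta,\sin\theta)$, where $\theta$ is measured from $u$, and write $|P|=s\in[1-\eps,1]$. By the law of cosines,
\[
|PX|^2 = s^2+\rho^2+2s\rho\cos\theta .
\]
This is nondecreasing in $\rho$ and in $s$ whenever $\cos\theta\ge -\rho/s$. In the relevant range, where $\cos\theta>0$, it is therefore at most $2+2\cos\theta$, obtained at $\rho=s=1$. So $|PX|\ge 2-\eps$ forces
\[
2+2\cos\theta \ge (2-\eps)^2 .
\]
Using $2+2\cos\theta=4\cos^2(\theta/2)$, this becomes $\cos(\theta/2)\ge 1-\eps/2$, that is, $|\theta|\le 2\arccos(1-\eps/2)$.

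One must also exclude $\cos\theta\le 0$. In that case $|PX|^2\le s^2+\rho^2\le 2<(2-\eps)^2$, since $\eps<1/2$ gives $2-\eps>1.5>\sqrt2$. So every point of $F$ lies within angle $2\arccos(1-\eps/2)$ on either side of $u$. The total angle is therefore $4\arccos(1-\eps/2)$ and the width is $\eps$, as claimed.

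The only delicate point is the monotonicity step, justifying that $\rho=s=1$ maximizes $|PX|$ for fixed $\theta$. This is handled by the sign of $\cos\theta$ as above. Note that the angular bound is consistent with the term $4\arccos(1-\eps/2)$ in $\alpha_\eps$ of Lemma~\ref{small cone contains Lr bis}.
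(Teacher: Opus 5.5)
Your proof is correct. It follows the same overall plan as the paper's: confine the far region radially to the $\varepsilon$-crown, reduce to the extreme case where $P$ lies on the unit circle, and read off the angular extent there. You carry this out analytically, whereas the paper argues synthetically.

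The paper replaces $P$ by the point $Q$ of the unit circle on the ray from the origin through $P$. It asserts $D\setminus D_P\subseteq D\setminus D_Q$, where $D_P,D_Q$ are the disks of radius $2-\varepsilon$. It then locates the intersection points $Q_1,Q_2$ of $\partial D_Q$ with the unit circle via the isosceles triangle $OQQ_1$, using $\cos\angle OQQ_1=1-\varepsilon/2$. Finally it converts the inscribed angle $\angle Q_1QQ_2=2\arccos(1-\varepsilon/2)$ into the central angle $4\arccos(1-\varepsilon/2)$.

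That argument is shorter and coordinate-free, but it leaves several things implicit:
\begin{itemize}
\item the containment $D\setminus D_P\subseteq D\setminus D_Q$;
\item the fact that the lens $D\setminus D_Q$ does not extend angularly beyond the rays through $Q_1$ and $Q_2$;
\item the width-$\varepsilon$ part, which it only asserts.
\end{itemize}
Your formula $|PX|^2=s^2+\rho^2+2s\rho\cos\theta$, with monotonicity in $s$ and in $\rho$, proves exactly the first two facts. Your triangle-inequality step justifies the width. Your separate treatment of $\cos\theta\le 0$, using $2-\varepsilon>\sqrt2$, closes the remaining case. So your version is the more self-contained of the two.

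One small slip: monotonicity in $s$ requires $\cos\theta\ge -s/\rho$, not $\cos\theta\ge-\rho/s$. Since you only use it when $\cos\theta>0$, this is harmless.
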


\begin{proof}
    To begin, remark that any pair of points $\{P, P'\} \in D$  are necessary within a crown of width $\epsilon$ and angle $2\pi$.
    Let $Q$ be a point at distance $1$ from $O$ and such that $P$ lies on segment $OQ$.
    Let $D_Q$ (resp. $D_P$) be the disk of radius $2-\eps$  centered at $Q$ (resp $P$).
    Points $P'$ at distance larger than $2-\eps$ from $P$  in $D$ are within region $D \setminus D_P \subseteq D \setminus D_Q$.

    Let $Q_1$ and $Q_2$ be the two intersection points between $D_Q$ and the unit circle centered at $O$.
    Let $\beta$ be the measure of the angle $\widehat{Q_1QQ_2}$.
    Let $M$ be the middle point of $Q$ and $Q_1$. Since $|OQ|=|OQ_1|=1$, $OQQ_1$ is a isosceles triangle and $MQO$ is a right triangle at $M$.
    Thus, we have $\cos(\beta/2)=|QM|/|QO|=1-\eps/2$, and therefore $\beta = 2 \arccos(1-\eps/2)$.

    Since the inscribed angle theorem states that an angle $\beta$ inscribed in a circle is half of the central angle $\alpha=2\beta$ that intercepts the same arc on the circle, we get the final result.
\end{proof}

\lemSmallCone*
\begin{proof}
    According to \cref{lem:rb on crown}, given an $\eps$-bad return arc $(\last_r,r)$, we know that any robot $x$ of $\Lset_r$ is within a crown with width $\eps$ and at a distance of at most $\eps$ from a leaf $\ell$ of $\wutree_{\bad_r}$.
    Thus, robots of $\Lset_r$ are located in a sub-crown of width $\eps$ and an angle to determine.
    Let us upper bound $\alpha_\ell$ as the maximal angular distance between $\pos_x$ and $\pos_\ell$.
    In the worst case, $|\pos_\ell \pos_x|=\eps$ and $[\pos_\ell \pos_x]$ is a chord of a circle centered at the origin but with radius $1-\eps$.
    Thus we have $\eps=(1-\eps)\chord(\alpha_\ell)=2(1-\eps)\sin(\alpha_\ell/2)$ and $\alpha_\ell=2 \arcsin{\left(\frac{\eps}{2(1-\eps)}\right)}$.

    According to \cref{lem:best ancestor}, $\pos_\ell$ and $\last_\Rbest$ are at distance larger than $2-\eps$ from $\worst_\Rbest$.
    \cref{claim:close robots} implies that $\pos_\ell$ and $\last_\Rbest$ are both within a crown of angle $\alpha=4 \arccos(1-\eps/2)$.
    Positions of robots in $\Lset_r$ are at an angular distance less than $\alpha_\ell$ from $\pos_\ell$.
    For $\eps <0.5, \alpha_\ell < \alpha/2$ and thus, $\pos_\ell,\last_\Rbest$ and positions of robots in $\bigcup_{r\in \Bad_\Rbest} \Lset_r$ are within a crown of angle $\alpha+\alpha_\ell$.
\end{proof}



\section{NP-Hardness}%
\label{app:NP-hardness}

\NPhard*


\proofHardness

\NPcorollary*
\NPmetricproof



\section{Optimal Algorithms}%
\label{app:optimal-algo}

\optimalalgo*


\begin{proof}
    \def\T{\mathsf{T}} %
    We assume we are given a function $d$ that gives a non-negative
    distance between any two robot's initial positions, and let $\P$ be
    the multi-set of robots initial positions. For simplicity, we assume
    that this multi-set is given as a list
    $\P = (\pos_0, \pos_1, \dots, \pos_n)$, where $\pos_0$ is the
    initial position of the source (the awake robot), and where
    $\P^* = (\pos_1, \dots, \pos_n)$ is the list of initial positions of
    all the sleeping robots. For convenience, we extend set notations to
    lists, like for instance $\P^* = \P\setminus\set{\pos_0}$.

    For the sake of presentation, we assume the three algorithms return the value of the optimal makespan only. It is easy to adapt the algorithms so that they also return the corresponding solutions encoded by a wake-up tree (possibly with return arcs), without asymptotically increasing the time and space complexity.

    \subparagraph{\FTP.}

    We consider the variable $T_b(u,X)$ denoting the optimal makespan to wake-up all sleeping robots with initial positions in $X \subseteq \P^*$ from $b \in\set{1,2}$ awake robots at position $u \in \P\setminus X$. Clearly, the wanted optimal makespan for \FTP is given by $T_1(\pos_0,\P^*)$ by definition. All values of $T_b(u,X)$ can be recursively computed as follows.

    We distinguish three cases:

    \begin{enumerate}

        \item If $X = \emptyset$, there is nothing to do for the $b$ robots at $u$. So, we have $T_1(u,\emptyset) = 0$, for any $b\in\set{1,2}$.

        \item If $b=1$, there is only one awake robot at position
              $u$. Then, in order to wake-up all sleeping robots in $X \neq \emptyset$, this robot has to select and go to some position $v\in X$. From that position $v$, two awake robots are available to wake-up all remaining sleeping robots initially located at positions of $X\setminus\set{v}$, and the optimal makespan for this subtask is $T_2(v,X\setminus\set{v})$ by definition. So, the best solution has to select the best $v\in X$, and thus $T_1(u,X) = \min_{v\in X} \set{d(u,v) + T_2(v,X\setminus\set{v})}$.

        \item If $b=2$, there are two awake robots available at position
              $u$. Then, in the optimal solution, one of the two robots at $u$ has to wake-up the robots sleeping on a certain subset $A$ of positions in $X$, whereas the other robot at~$u$ has to wake-up the remaining robots sleeping on positions in $X\setminus A$. Then, the makespan is the maximum of this two branches, and thus $T_2(u,X) = \min_{A\subseteq X} \set{\max\set{ T_1(u,A), T_1(u,X\setminus A) }}$.

    \end{enumerate}

    To summarize, $\forall X\subseteq \P^*$, $\forall u \in \P\setminus X$, $\forall b\in\set{1,2}$:
    \[
        T_b(u,X) ~=~ \left\{%
        \begin{array}{ll}%
            0                                                                                & \mbox{if $X = \emptyset,$} \\[1.2ex]
            \displaystyle\min_{v\in X} \set{d(u,v) + T_2(v,X\setminus\set{v})}               & \mbox{if $b=1,$}           \\[1.5ex]
            \displaystyle\min_{A\subseteq X} \set{\max\set{ T_1(u,A), T_1(u,X\setminus A) }} & \mbox{if $b=2.$}
        \end{array}
        \right.
    \]

    These values can be computing using two tables, say
    $\mathsf{T_1[u][X]}$ and $\mathsf{T_2[u][X]}$ in charge of storing
    $T_1(u,X)$ and $T_2(u,X)$ respectively. These tables are 2D arrays
    composed of $(n+1) \times 2^n$ cells, each one containing some
    distance value. The space complexity is $O(2^n \cdot n)$ as
    claimed. Note that not all the cells of these tables are used (those
    corresponding to $u \in X$ for instance, which is roughly half the
    number of cells, are never used).

    We encode $X$ by an integer $\mathsf{X} \in [0,2^n)$, whose binary
    representation indicates whether each $\pos_i \in \P^*$ is present or
    not in $X$, whereas $u$ is encoded by its index $\mathsf{u}
        \in\range{0}{n}$ of the initial position of $u\in\P$, i.e., such that
    $u = \pos_{\mathsf{u}}$. By scanning first the columns in increasing
    order, and then each row $\mathsf{u}$ of each column $\mathsf{X}$, one can alternatively fill $\T_1$ and $\T_2$. This is valid, because to fill $\T_1$ or $\T_2$, we only need to consider subsets of $X$, whose integer representation is strictly smaller than $\mathsf{X}$ (its binary representation has at least one bit set less). Therefore, all the cells of these tables can be completed by this way.

    Computing $\mathsf{T_1[u][X]}$ requires $|X|\le n$ accesses to table $\T_2$ (for computing the best $v\in X$), whereas $\mathsf{T_2[u][X]}$ requires $2\times 2^{|X|}$ accesses to table $\T_1$ (by enumerating all $A \subseteq X$). Note that each such access requires $O(n)$ time as addressing for these tables are $O(n)$-bit long (in particular for $\mathsf{X}$).

    The number $a_2$ of accesses to $\T_2$, during $\T_1$ calculation, is
    $a_2 \le n \times |\T_1| = O(n^2 \cdot 2^n)$, where
    $|\T_1| = (n+1) \cdot 2^n$ is the number of cells of table $\T_1$.

    To bound the number $a_1$ of accesses to $\T_1$, during $\T_2$
    calculation, we will consider the number $a_1(i)$ of accesses to $\T_1$ when computing $\T_2[u][X]$ where $X$ has a given size $i = |X|$, and sum overall all possible size $i \in\range{0}{n}$, i.e., by setting $a_1 = \sum_{i=0}^n a_1(i)$. In $\T_2$, there are at most $(n+1)\times \binom{n}{i}$ cells to calculate that are concerned with a sub-list $X$ of size $i$ (those corresponding to row $\mathsf{X}$), each one requiring $2\times 2^i$ accesses to $\T_1$. Overall, to complete $\T_2$, we make access to $a_1 = \sum_{i=0}^n a_1(i) \le \sum_{i=0}^n \pare{n\times
            \binom{n}{i}} \cdot \pare{2\times 2^i}$ cells of $\T_1$. Thus
    $a_1 \le 2n \sum_{i=0}^n \binom{n}{i} 2^i = 2n \cdot 3^n$ from
    Newton's binomial formulae.

    In total, the number of accesses to $\T_1$ and $\T_2$ is
    $a_1 + a_2 = O(n \cdot 3^n + n^2\cdot 2^n)$. Therefore, by multiplying
    by the cost of each access, i.e., $O(n)$, the algorithm takes a time
    of $O(3^n \cdot n^2)$ to compute tables $\T_1$ and $\T_2$, and to
    return the optimal makespan $T_1(\pos_0,\P^*) = \T_1[0][2^n-1]$ as
    wanted.

    \subparagraph{\FTRPid.}

    The construction for \FTRPid is very similar to the previous one, and
    can be seen as an extension of \FTRP. As we will see, time and space complexity
    are just scaled by a factor $O(n)$.

    We consider the variable $T_b(u,X,r)$ denoting the optimal makespan to
    wake-up all sleeping robots with initial positions in
    $X \subseteq \P^*$ from $b \in\set{1,2}$ awake robots at position
    $u \in \P\setminus X$, assuming that the initial position of the
    \emph{first} robot is $r$. Here the first robot at $u$ refers to the earliest awake robot at position $u$. If $b=2$, the \emph{second} awake robot at $u$ is the one which has just been awaken at $u$ by the first robot. 

    Clearly, the optimal makespan for \FTRPid is given by $T_1(\pos_0,\P^*,\pos_0)$ by definition. All values of $T_b(u,X,r)$ can be recursively computed as follows.

    As previously, there are three cases:

    \begin{enumerate}

        \item It is clear that $T_b(u,\emptyset,r) = d(u,r)$, for any $b\in\set{1,2}$, since there are no other robots to wake-up. Thus the first robot has to return to its initial position $r$ to complete its task. The possible second awake robot at $u$ has no moves to do in order to return to its initial position.

        \item If $b=1$, only one robot is awake at position $u$, which corresponds to its initial position. Then, in order to complete its task, this robot has to select and go to some position $v\in X \neq \emptyset$. From this position $v$,
              two awake robots have to wake-up all remaining sleeping robots in $X\setminus\set{v}$, the first robot $v$ having still to return to its initial position~$r$. The optimal makespan for this subtask is precisely $T_2(v,X\setminus\set{v},r)$. It follows that $T_1(u,X,r) = \min_{v\in X} \set{d(u,v) + T_2(v,X\setminus\set{v},r)}$.

        \item If $b=2$, two awake robots are available at position
              $u$. Then, in the optimal solution, the first one at $u$ (whose
              return position is $r$) has to wake-up some subset
              $A\subseteq X$ of sleeping robots. Subset $A$ is possibly empty,
              allowing the robot to return directly to $r$ thanks to the
              previous case $X=\emptyset$. The second robot at $u$ has to
              wake-up all the other sleeping robots, those whose initial positions are in $A\setminus X$,
              and has to return to its initial position $u$. Then, the
              makespan is the maximum of these two branches, and thus
              $T_2(u,X,r) = \min_{A\subseteq X} \set{\max\set{ T_1(u,A,r),
                          T_1(u,X\setminus A,u)}}$.

    \end{enumerate}

    To summarize, $\forall X\subseteq \P^*$, $\forall u,r \in \P\setminus X$, $\forall b\in\set{1,2}$:
    \[
        T_b(u,X,r) ~=~ \left\{%
        \begin{array}{ll}%
            d(u,r)                                                                               & \mbox{if $X = \emptyset,$} \\[1.2ex]
            \displaystyle\min_{v\in X} \set{d(u,v) + T_2(v,X\setminus\set{v},r)}                 & \mbox{if $b=1,$}           \\[1.5ex]
            \displaystyle\min_{A\subseteq X} \set{\max\set{ T_1(u,A,r), T_1(u,X\setminus A,u) }} & \mbox{if $b=2.$}
        \end{array}
        \right.
    \]

    Similarly to \FTP, these values can be computing using two tables,
    $\mathsf{T_1[u][X][r]}$ and $\mathsf{T_2[u][X][r]}$ in charge of storing $T_1(u,X,r)$ and $T_2(u,X,r)$ respectively. So, these tables
    are 3D arrays composed of $(n+1) \times 2^n \times (n+1)$ cells, each one containing some distance value. Once completed, the optimal makespan is given by $T_1(\pos_0,\P^*,\pos_0) = \T_1[0][2^n-1][0]$ as wanted.

    Such tables can be completed in the same way as for \FTP, the only difference is that tables for \FTRPid have an extra dimension for $r \in\range{0}{n}$. So, its time and space complexities are at most $n+1$ times the one of \FTP, that is $O(2^n \cdot n^2)$ for the space and $O(3^n \cdot n^3)$ for the time.

    \subparagraph{\FTRPa.}

    The construction for \FTRPa can be seen as a generalization of \FTRPid and \FTP, where the variable has to manage a list of return positions. This list is empty for \FTP. 

    More precisely, we consider the variable $T_b(u,X,R)$ denoting the
    optimal makespan to wake-up all sleeping robots with initial positions
    in $X\subseteq \P^*$ from $b\in\set{1,2}$ awake robots at position
    $u\in \P\setminus X$, assuming that the $b$ awake robots at $u$ and
    all the sleeping robots in $X$ will return to a unique position of $R
        \subseteq \P$, subject to $|R| = b + |X|$.

    Clearly, the optimal makespan for \FTRPa is given by $T_1(\pos_0,\P^*,\P)$ by definition. All values of $T_b(u,X,R)$ can be recursively computed as follows.

    Again, we distinguish three cases:

    \begin{enumerate}

        \item If $|X| = \emptyset$, then we have $|R| = b$ by assumption. It is clear in that case that the $b$ awake robots at $u$ have to return to a unique position of $R$, since there are no other robots to
              wake-up. So, $T_b(u,\emptyset,R) = \max_{r\in R} d(u,r)$, since
              one of the $b$ return arcs must have this length. 

        \item If $b=1$, only one robot is awake at position $u$. Then, in
              order to complete its task, this robot has to select and go to
              some position $v\in X \neq \emptyset$. From this position $v$,
              two awake robots have to wake-up all remaining sleeping robots
              in $X\setminus\set{v}$, and also return to some unique position
              in $R$, including the initial awake robot at $u$. The optimal
              makespan for this subtask is precisely
              $T_2(v,X\setminus\set{v},R)$. It follows that
              $T_1(u,X,R) = \min_{v\in X} \set{d(u,v) +
                      T_2(v,X\setminus\set{v},R)}$.

        \item If $b=2$, two awake robots are available at position
              $u$. Then, in the optimal solution, the first one at $u$ has to
              wake-up some subset $A\subseteq X$ of sleeping robots, and to
              select in $R$ a return position, say $r_1$. Once awake, these
              robots whose initial positions were in $A$ will have to select a unique position in
              $R\setminus\set{r_1}$, forming some subset $S_1$ of positions. Let
              $S = S_1 \cup \set{r_1}$, noting that $S\subseteq R$, and
              $|S| = |A|+1$. Given $A$ and $S$, the optimal time for the first
              robot to wake-up all sleeping robots in $A$ with return
              positions in $S$, given that $|S| = |A|+1$, is $T_1(u,A,S)$ by
              definition.

              At the same time, the second robot at $u$ has to wake-up all remaining sleeping robots of $X$, that is those with initial positions are in $X\setminus A$, and to
              select a return position $r_2 \in R\setminus S$. Once awake, all robots with initial positions in $X\setminus A$ have to select a unique position in $R\setminus (S\cup\set{r_2})$. As $|S| = |A|+1$ and $|R| = |X| + 2$, we have $|R \setminus (S\cup\set{r_2})| = |R| - (|A| + 2) = |X| - |A| = |X\setminus A|$, and so there are the right number of return positions for the sleeping robots in $X\setminus A$. Adding back $r_2$ to the return set $R \setminus (S\cup\set{r_2})$, we get that $R\setminus S$ is the return set of positions for the second robot and all robots with initial positions in $X\setminus A$.
              In particular, $|R\setminus S| = |X\setminus A| + 1$. The optimal time for the second robot to wake-up all sleeping robots in $X\setminus A$ with return positions in $R\setminus S$, given that
              $|R\setminus S| = |X\setminus A| + 1$, is
              $T_1(u,X\setminus A,R\setminus S)$ by definition.

              Therefore, by looking for the best subsets $A$ and $S$, the optimal makespan is the maximum of these two wake-up subtasks, and thus $T_2(u,X,R) = \min_{A\subseteq X} \min_{S\subseteq R} \max{\set{ T_1(u,A,S), T_1(u,X\setminus A,R\setminus S)}}$ with $|S| = |A| + 1$.

    \end{enumerate}

    To summarize, $\forall X\subseteq \P^*$, $\forall u \in \P\setminus X$, $\forall b\in\set{1,2}$, $\forall R\subseteq \P$ such that $|R| = |X| + b$:
    \[
        T_b(u,X,R) ~=~ \left\{%
        \begin{array}{ll}%
            \displaystyle\max_{r\in R} d(u,r)                                    & \mbox{if $X = \emptyset;$} \\[1.5ex]
            \displaystyle\min_{v\in X} \set{d(u,v) + T_2(v,X\setminus\set{v},R)} & \mbox{if $b=1;$}           \\[1.5ex]
            \displaystyle\min_{{\substack{(A,S)\subseteq X\times R                                            \\|S|=|A|+1}}} \set{\max\set{ T_1(u,A,S), T_1(u,X\setminus A,R\setminus S) }} & \mbox{if $b=2.$}
        \end{array}
        \right.
    \]
    Not surprisingly, $T_b(u,X,\emptyset)$ follows that same equations than the ones for solving \FTP.

    As previously, all these values can be computed using two tables,
    $\mathsf{T_1[u][X][R]}$ and $\mathsf{T_2[u][X][R]}$ in charge of
    storing $T_1(u,X,R)$ and $T_2(u,X,R)$ respectively. Notice that
    $R \subseteq \P$ is represented by an integer of $[0,2^{n+1})$, the
    binary representation of its bitmap on $n+1$ bits. These tables are 3D
    arrays composed of $(n+1) \times 2^n \times 2^{n+1}$ cells.
    Thus, the space complexity is $O(4^n \cdot n)$ as claimed. We will see
    later how to save a $\sqrt{n}$ factor on the space complexity with a
    joint representation of $(X,R)$.

    The computation of the tables can be done by scanning first each pair
    $(X,R)$, and then each $u \in \P\setminus X$, and by alternating table
    $\T_1$ and $\T_2$. Once $\T_1$ is completed, the optimal makespan is
    given by $T_1(\pos_0,\P^*,\P) = \T_1[0][2^n-1][2^{n+1}-1]$ as wanted.

    The time complexity can be analyzed by considering the number $a_2$ of
    accesses to $\T_2$, during $\T_1$ calculation, and the number $a_1$ of
    accesses to $\T_1$, during $\T_2$ calculation.

    Computing $\mathsf{T_1[u][X][R]}$ requires $|X| \le n$ accesses to
    $\T_2$. Thus $a_2 \le n \times |\T_1| = O(n^2 \cdot 4^n)$.

    In order to estimate $a_1$, consider the number $N(p,q,k)$ of couples $(A,B)$ where $A \subseteq \range{1}{p}$, $B\subseteq \range{1}{q}$, and such that $|B| = |A| + k$. Using Vandermonde identities, we have (for convenience, we set $\binom{p}{i} = 0$ if $i>p$):
    \[
        N(p,q,k) ~=~ \sum_{i\ge 0} \binom{p}{i} \binom{q}{i+k} =
        \binom{p+q}{p+k} ~.
    \]

    Computing $\mathsf{T_2[u][X][R]}$ requires $2\times N(|X|,|R|,1)$
    accesses to $\T_1$, as we seek for all $(A,S) \subseteq X\times R$
    with $|S| = |A| + 1$. Since, when computing $\mathsf{T_2[u][X][R]}$,
    $b = 2$, we have $|R| = |X| + 2$, and thus
    $2\times N(|X|,|R|,1) = 2\times \binom{2|X|+2}{|X|+1} = O(4^{|X|} /
        \sqrt{|X|}) = O(4^i / \sqrt{i})$ for $i = |X|$.

    In $\T_2$, the number of cells that are concerned with a dimension $X$
    of size $i$ is $(n+1)\times \binom{n}{i} \times \binom{n+1}{i+2}$,
    since we have to consider all $u \in\P$, all $X\subseteq \P^*$ of size
    $i$, and all $R \subseteq \P$ of size $i+2$.

    It follows that the total number of accesses to $\T_1$, when computing
    $\T_2$, is:
    \begin{eqnarray*}
        a_1 &\le& \sum_{i=0}^n \pare{(n+1)\times \binom{n}{i} \times
            \binom{n+1}{i+2}} \cdot O\pare{\frac{4^i}{\sqrt{i}}} \\
        &=& O\pare{ n \sum_{i=0}^n \binom{n}{i} \binom{n+1}{i+2} \cdot \frac{4^i}{\sqrt{i}} } ~=~ O\pare{n \cdot S_n}
    \end{eqnarray*}
    where $S_n$ is the main sum. Using Pascal's formulae, we have $\binom{n}{i} \le \binom{n+2}{i+2}$ and $\binom{n+1}{i+2} \le \binom{n+2}{i+2}$, so that
    \begin{eqnarray*}
        S_n &=& \sum_{i=0}^n \binom{n}{i} \binom{n+1}{i+2} \pare{\frac{2^i}{\sqrt[4]{i}}}^2 ~\le~ \sum_{i=0}^n \binom{n+2}{i+2}^2 \cdot \pare{\frac{2^i}{\sqrt[4]{i}}}^2 \\
        &\le& \sum_{i=0}^n \pare{ \binom{n+2}{i+2}  \cdot \frac{2^i}{\sqrt[4]{i}} }^2 ~\le~ \pare{ \sum_{i=0}^n \binom{n+2}{i+2} \cdot  \frac{2^i}{\sqrt[4]{i}} }^2
    \end{eqnarray*}
    As the maximum term of the sum is reached when $i$ is a constant fraction of $n$, and because $i \mapsto 1/\sqrt[4]{i}$ is convexe, we can bound $\sum_i \binom{n+2}{i+2} \cdot 2^i/\sqrt[4]{i} = O(1/\sqrt[4]{n}\,) \cdot \sum_i \binom{n+2}{i+2} \cdot 2^i = O(1/\sqrt[4]{n}\,) \cdot 3^n$ by Jensen's inegalities and Newton's binomial formulae. It follows that $S_n = O(9^n / \sqrt{n}\,)$, and $a_1 = O(n \cdot S_n) = O(9^n \cdot \sqrt{n}\,)$.

    In total, the number of accesses to $\T_1$ and $\T_2$ is
    $a_1 + a_2 = O(\sqrt{n} \cdot 9^n + n^2 \cdot 4^n)$. Therefore, by
    multiplying by the cost of each access, i.e., $O(n)$, we obtained an
    algorithm of time complexity $O(9^n \cdot n^{3/2})$ as required.

    We observe that we can save a $\sqrt{n}$ factor on the space
    complexity by optimally encoding the pair $(X,R)$ thanks to the
    \emph{Combinatorial Number System} popularized by Knuth. In this
    system, integers are represented as a sum of binomials. So, instead of
    using two distinct dimensions in the tables for $\mathsf{X}$ and
    $\mathsf{R}$, each of size $2^n$, we can encode the pair $(X,R)$ by a
    unique integer denoted by $\mathsf{XR} \in [0,M)$, where $M$ is the
    number of pairs of subsets in $\range{0}{n}\times \range{0}{n+1}$
    whose size differ by one or two. We have
    $M = N(n,n+1,1) + N(n,n+1,2) = O(4^n/\sqrt{n}\,)$. So, the space
    complexity by storing such a $(n+1) \times M$ table decreases to $O(n \cdot M) = O(4^n \cdot \sqrt{n}\,)$.

    However, for the use of this number system, it requires some extra works to encode and decode a pair $(X,R)$ into or from its compact representation $\mathsf{XR}$. In particular, it requires to pre-compute all binomials up to $2n$. This can be done once in $O(n^2)$ arithmetic operations (additions) on $O(n)$-bit integers, that is less than $O(n^3)$ standard operations. Now, encoding/decoding to access a pair $(X,R)$ from $\mathsf{XR}$ or the reverse, require $O(n)$ arithmetic operations or $O(n^2)$ standard operations. So, the time complexity increases by a $O(n)$ factor.
\end{proof}

\end{document}


\begin{proof}
    We spread the $n$ sleeping robots on $4$ positions $\pos_1, \pos_2,\pos_3$ and $\pos_4$: $n-3$ are located on the unitary circle with the same angle $\alpha_1=0$ whereas the angles of the other positions are $\alpha_2=\pi/3$, $\alpha_3=\alpha_2+2 \arccos(\sqrt{3}/3)$ and $\alpha_4=\alpha_3+\pi/3=4\pi/3 - 2 \arccos(\sqrt{3}/3)$.
    We can prove that every wake-up tree has a makespan at least $3+\frac{2 \sqrt{6}}{3}$. $|\pos_1\pos_3|$ is the largest pairwise distance among $\pos_{i\neq 0}$'s.

    First, we show that there is no need to consider that we have more than $2$ robots in position $\pos_1$. Note that we have two types of wake-up trees: (1) $r_0$ first goes to position $\pos_1$ or (2) another position $\pos_i$. In both cases, the outdegree of any wake-tree is at most $3$. It can be $3$ in the first case but it is easy to show that the length of $\pos_0,\pos_1,\pos_3,\pos_0$ is smaller than the one of  $\pos_0,\pos_1,\pos_3,\pos_1$. Thus only $r_0$ will go to $\pos_3$ to get a wake-up tree with a smaller makespan and at most two robots initially located at $\pos_1$ can be used to wake up other robots.

    The makespan of the only remaining wake-up tree with outdegree $3$ is $1+\max(2|\pos_1\pos_2|,|\pos_1\pos_3|+|\pos_3\pos_0|,2|\pos_1\pos_4|) = 1 + 2 |\pos_1\pos_4|=1+4 \sin(\pi/3 + \arccos(\sqrt{3}/3))=1+4(\sin(\pi/3) \cdot \sqrt{3}/3 + \cos(\pi/3) \cdot \sin(\arccos(\sqrt{3}/3))=3+2\sqrt{6}/3$.

    The other cases deal with wake-up trees with outdegree at most $2$ per position. Thus the number of hops of the longest path (in terms of number of hops)  before returning to its own initial position is at least $3$. The first hop has length $1$ and since there is no position $\pos_i$ such that $|\pos_i\pos_j|=|\pos_i\pos_k|=1$ with $i,j,k \neq 0$, such a path contains at least one large hop. The smaller one is $\pos_2\pos_3$ and has length $2\sqrt{6}/3$. Thus the longest path has length at least $1+1+2\sqrt{6}/3$. Since any robot needs at least a length $1$ to come back to its initial position, the makespan can not be smaller than $3+2\sqrt{6}/3$.
\end{proof}

\begin{figure}[htbp!]
    \centering
    \includegraphics[width=\textwidth/2]{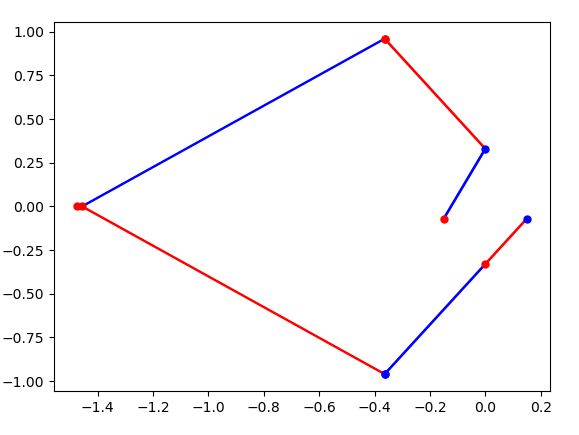}
    \caption{Draft return id vs return cover. makespan id:3.055100, makespan cov: 2.928080}%
    \label{fig:id_vs_cov}
\end{figure}

\cyril{I do not understand the figure just above, in particular the coordinates. I would prefer to see the classical light-blue unit disc to understand where are the initial positions. In particular from where do we start? Is it sure that all positions are within distance1 from the source? Again, the disc would help.}

\begin{figure}[htbp!]
    \centering
    \includegraphics[width=\textwidth]{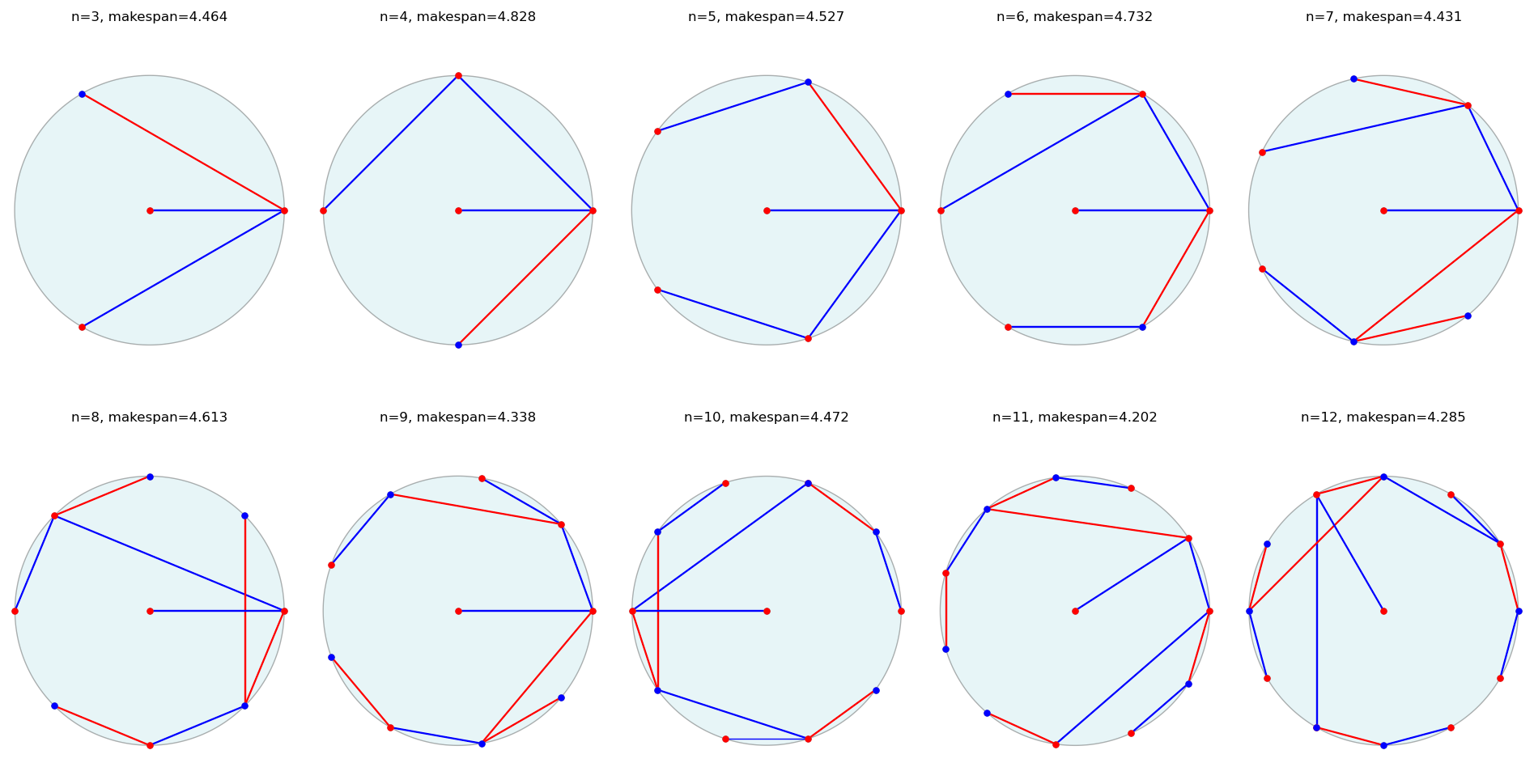}
    \caption{Bi-coloration representation of optimal wake-up trees with return for small regular convex configurations.}%
    \label{fig:regular_convex}
\end{figure}

Preuve de Nico pour \FTRPid plus propre, mais pas la place :

\subparagraph{$n \geq 5$ for $\FTRPid$.}

We spread the $n$ sleeping robots on $4$ positions $\pos_1, \pos_2,\pos_3$ and $\pos_4$: $n-3$ are located on the unitary circle with the same angle $\alpha_1=0$ whereas the angles of the other positions are $\alpha_2=\pi/3$, $\alpha_3=\alpha_2+2 \arccos(\sqrt{3}/3)$ and $\alpha_4=\alpha_3+\pi/3=4\pi/3 - 2 \arccos(\sqrt{3}/3)$.
We can prove that every wake-up tree has a makespan at least $3+\frac{2 \sqrt{6}}{3}$. $|\pos_1\pos_3|$ is the largest pairwise distance among $\pos_{i\neq 0}$'s.

First, we show that there is no need to consider that we have more than $2$ robots in position $\pos_1$. Note that we have two types of wake-up trees: (1) $r_0$ first goes to position $\pos_1$ or (2) another position $\pos_i$. In both cases, the outdegree of any wake-tree is at most $3$. It can be $3$ in the first case but it is easy to show that the length of $\pos_0,\pos_1,\pos_3,\pos_0$ is smaller than the one of  $\pos_0,\pos_1,\pos_3,\pos_1$. Thus only $r_0$ will go to $\pos_3$ to get a wake-up tree with a smaller makespan and at most two robots initially located at $\pos_1$ can be used to wake up other robots.

The makespan of the only remaining wake-up tree with outdegree $3$ is $1+\max(2|\pos_1\pos_2|,|\pos_1\pos_3|+|\pos_3\pos_0|,2|\pos_1\pos_4|) = 1 + 2 |\pos_1\pos_4|=1+4 \sin(\pi/3 + \arccos(\sqrt{3}/3))=1+4(\sin(\pi/3) \cdot \sqrt{3}/3 + \cos(\pi/3) \cdot \sin(\arccos(\sqrt{3}/3))=3+2\sqrt{6}/3$.

The other cases deal with wake-up trees with outdegree at most $2$ per position. Thus the number of hops of the longest path (in terms of number of hops)  before returning to its own initial position is at least $3$. The first hop has length $1$ and since there is no position $\pos_i$ such that $|\pos_i\pos_j|=|\pos_i\pos_k|=1$ with $i,j,k \neq 0$, such a path contains at least one large hop. The smaller one is $\pos_2\pos_3$ and has length $2\sqrt{6}/3$. Thus the longest path has length at least $1+1+2\sqrt{6}/3$. Since any robot needs at least a length $1$ to come back to its initial position, the makespan can not be smaller than $3+2\sqrt{6}/3$.

\relationships*
\nicolasH{This Lemma is not used in this form}

\begin{proof}
    On one hand, let $(\wutree', \returnarc)$ be an optimal \FTRPa solution.
    Clearly, $\wutree'$ is also a \FTP solution, thus its depth is at least $t^*$, by definition of $t^*$.
    Additionally, any return arc between two distinct points must have length at least $\ell$.
    If $\P$ does not contains two distinct points, we're in the special case were every robots are initially co-located.\taissirmargin{In the case of $\P$ having no distinct positions, we're not saying anything (but there's nothing to say anyway.)}
    It follows that $t^* + \ell \leq t_\ano^*$.


    On the other hand, given an optimal wake-up tree $T$ for \FTP, one can build a \FTRPid solution by adding return arcs:
    For each leaf $\pos_i$ in $T$, add an arc from $\pos_i$ to its closest ancestor $\pos_j$ that is not already claimed.
    Since the length of any arc is at most $L$, the resulting makespan is at most $t^* + L$.
    Recall that any solution for \FTRPid is also a solution for \FTRPa, the proof follows.
\end{proof}